\newcommand{\given}{\,|\,}
\newcommand{\beq}{\begin{equation}}
\newcommand{\eeq}{\end{equation}}
\newcommand{\beqs}{\begin{eqnarray}}
\newcommand{\eeqs}{\end{eqnarray}}
\newcommand{\barr}{\begin{array}}
\newcommand{\earr}{\end{array}}
\newcommand{\bali}{\begin{aligned}}
\newcommand{\eali}{\end{aligned}}
\newcommand{\Nc}[0]{\ensuremath{\mathcal{N}} }
\newcommand{\Oc}[0]{\ensuremath{\mathcal{O}} }
\newcommand{\Rbb}[0]{\ensuremath{\mathbb{R}} }
\newcommand{\Ebb}[0]{\ensuremath{\mathbb{E}} }
\newcommand{\Zbb}[0]{\ensuremath{\mathbb{Z}} }
\newcommand{\Amat}[0]{\ensuremath{{\bf A}}}
\newcommand{\Bmat}[0]{\ensuremath{{\bf B}} }
\newcommand{\Cmat}[0]{\ensuremath{{\bf C}} }
\newcommand{\Dmat}[0]{\ensuremath{{\bf D}} }
\newcommand{\Gmat}[0]{\ensuremath{{\bf G}} }
\newcommand{\Hmat}[0]{\ensuremath{{\bf H}} }
\newcommand{\Imat}[0]{\ensuremath{{\bf I}} }
\newcommand{\Mmat}[0]{\ensuremath{{\bf M}} }
\newcommand{\Qmat}[0]{\ensuremath{{\bf Q}} }
\newcommand{\bds}[1]{\boldsymbol{#1}}
\newcommand{\dv}[0]{\ensuremath{\boldsymbol{d}} }
\newcommand{\ev}[0]{\ensuremath{\boldsymbol{e}} }
\newcommand{\fv}[0]{\ensuremath{\boldsymbol{f}} }
\newcommand{\lv}[0]{\ensuremath{\boldsymbol{l}} }
\newcommand{\nv}[0]{\ensuremath{\boldsymbol{n}} }
\newcommand{\rv}[0]{\ensuremath{\boldsymbol{r}} }
\newcommand{\tv}[0]{\ensuremath{\boldsymbol{t}} }
\newcommand{\uv}[0]{\ensuremath{\boldsymbol{u}} }
\newcommand{\xv}[0]{\ensuremath{\boldsymbol{x}} }
\newcommand{\yv}[0]{\ensuremath{\boldsymbol{y}} }
\newcommand{\zv}[0]{\ensuremath{\boldsymbol{z}} }
\newcommand{\Lambdamat}[0]{\ensuremath{\boldsymbol{\Lambda}} }
\newcommand{\Sigmamat}[0]{\ensuremath{\boldsymbol{\Sigma}} }
\newcommand{\Phimat}[0]{\ensuremath{\boldsymbol{\Phi}}}
\newcommand{\Omegamat}[0]{\ensuremath{\boldsymbol{\Omega}}}
\newcommand{\alphav}[0]{\ensuremath{\boldsymbol{\alpha}} }
\newcommand{\betav}[0]{\ensuremath{\boldsymbol{\beta}} }
\newcommand{\muv}[0]{\ensuremath{\boldsymbol{\mu}} }
\newcommand{\phiv}[0]{\ensuremath{\boldsymbol{\phi}} }
\newcommand{\varphiv}[0]{\ensuremath{\boldsymbol{\varphi}} }
\newcommand{\cdotv}[0]{\ensuremath{\boldsymbol{\cdot}}}
\newcommand{\zz}[0]{\color{black}}
\newcommand{\Dir}[0]{\ensuremath{\mathrm{Dir}} }
\newcommand{\Pois}[0]{\ensuremath{\mathrm{Pois}} }
\newcommand{\Mult}[0]{\ensuremath{\mathrm{Mult}} }
\newcommand{\diag}[0]{\ensuremath{\mathrm{diag}} }
\newcommand{\Nor}[0]{\ensuremath{\mathcal{N}} }
\newtheorem{thm}{Theorem} 
\newtheorem{cor}[thm]{Corollary}
\begin{document}

\inserttype[ba0001]{article}
\renewcommand{\thefootnote}{\fnsymbol{footnote}}
\author{Y. Cong and B. Chen and M. Zhou }{
 \fnms{Yulai}
 \snm{Cong}\footnotemark[1]\ead{yulai_cong@163.com},
  \fnms{Bo}
  \snm{Chen}\footnotemark[2]\footnotemark[4]\ead{bchen@mail.xidian.edu.cn}, 
  and
  \fnms{Mingyuan}
  \snm{Zhou}\footnotemark[3]\footnotemark[4]\ead{mingyuan.zhou@mccombs.utexas.edu}

}

\title[Fast Simulation of Hyperplane-Truncated MVN Distributions]{Fast Simulation of Hyperplane-Truncated Multivariate Normal Distributions}

\maketitle

\footnotetext[1]{
National Laboratory of Radar Signal Processing and Collaborative Innovation Center of Information Sensing \& Understanding, 
Xidian University, Xi'an, Shaanxi 710071, China,
 \href{mailto:yulai\_cong@163.com}{yulai\_cong@163.com} 
}
\footnotetext[2]{
National Laboratory of Radar Signal Processing and Collaborative Innovation Center of Information Sensing \& Understanding, 
Xidian University, Xi'an, Shaanxi 710071, China,
\href{mailto:bchen@mail.xidian.edu.cn}{bchen@mail.xidian.edu.cn}
 }
\footnotetext[3]{
  McCombs School of Business, 
 The University of Texas at Austin, Austin, TX 78712, USA,
 \href{mailto:mingyuan.zhou@mccombs.utexas.edu}{mingyuan.zhou@mccombs.utexas.edu}
}
\footnotetext[4]{
Corresponding authors.} 
\renewcommand{\thefootnote}{\arabic{footnote}}

\begin{abstract}
 We introduce a fast and easy-to-implement simulation algorithm for a multivariate normal distribution truncated on the intersection of a set of hyperplanes, and further generalize it to efficiently simulate random variables from a multivariate normal distribution whose covariance (precision) matrix can be decomposed as a positive-definite matrix minus (plus) a low-rank symmetric matrix. Example results illustrate the correctness and efficiency of the proposed simulation algorithms. 

\keywords{\kwd{Cholesky decomposition},  \kwd{conditional distribution}, \kwd{equality constraints}, \kwd{high-dimensional regression}, \kwd{structured covariance/precision matrix}}
\end{abstract}

\section{Introduction}

We investigate the problem of simulation from a multivariate normal (MVN) distribution whose samples are restricted to the intersection of a set of hyperplanes, 
which is shown to be inherently related to the simulation of a conditional distribution of a 
MVN distribution. A naive approach, which linearly transforms a random variable drawn from the conditional distribution of a related MVN distribution, 
requires a large number of intermediate variables that are often computationally expensive to instantiate.
To address this issue, we propose a fast and exact simulation algorithm that directly projects a MVN random variable onto the intersection of a set of hyperplanes.
We further show that sampling from 
a MVN distribution, whose covariance (precision) matrix can be decomposed as the sum (difference) of a positive-definite matrix, whose inversion is known or easy to compute, and a low-rank symmetric matrix, may also be made significantly fast by exploiting this newly proposed stimulation algorithm for hyperplane-truncated MVN distributions, avoiding the need of Cholesky decomposition that has a computational complexity of $O(k^3)$ \citep{golub2012matrix}, where $k$ is the dimension of the MVN random variable.  
%
%

Related to the problems under study, 
the simulation of MVN random variables subject to certain constraints 
\citep{gelfand1992bayesian} has been investigated in many other different settings, such as multinomial probit and logit models 
\citep{albert1993bayesian,mcculloch2000bayesian,imai2005bayesian,train2009discrete,holmes2006bayesian,johndrow2013diagonal}, 
Bayesian isotonic regression \citep{neelon2004bayesian}, Bayesian bridge 
 \citep{polson2014bayesian}, blind source separation \citep{schmidt2009linearly}, and unmixing of hyperspectral data \citep{altmann2014sampling,
dobigeon2009joint}. A typical example arising in these different settings  is to sample a random vector $\xv\in\mathbb{R}^k$ from a MVN distribution subject to $k$ inequality constraints as
\beq
\xv\sim\Nor_{\mathcal{S}}(\muv,\Sigmamat),~~\mathcal{S}=\{\xv:\lv\le \Gmat \xv \le \uv\},
\eeq
where $\Nor_{\mathcal{S}}(\muv,\Sigmamat)$ represents a MVN distribution truncated on the sample space $\mathcal{S}$, $\muv\in\mathbb{R}^k$ is the mean, $\Sigmamat\in\mathbb{R}^{k\times k}$ is the covariance matrix, $\Gmat\in\mathbb{R}^{k\times k}$ is a full-rank matrix, $\lv\in \mathbb{R}^k$, $\uv\in \mathbb{R}^k$, and $\lv<\uv$. If the elements of $\lv$ and $\uv$ are permitted to be $-\infty$ and $+\infty$, respectively, then both single sided and fewer than $k$ inequality constraints are allowed.
Equivalently, as in \citet{geweke1991efficient,geweke1996bayesian}, one may let $\xv = \muv + \Gmat^{-1} \zv$
and use Gibbs sampling \citep{Geman,gelfand1990sampling} to sample the $k$ elements of $\zv$ one at a time conditioning on all the others from a univariate truncated normal distribution, for which efficient algorithms exist \citep{robert1995simulation,damien2001sampling,chopin2011fast}.
To deal with the case that the number of linear constraints imposed on $\xv$ exceed its dimension $k$ and to obtain better mixing,
one may consider the Gibbs sampling algorithm for truncated MVN distributions proposed in \citet{rodriguez2004efficient}.
 In addition to Gibbs sampling, to sample truncated MVN random variables, one may also consider Hamiltonian Monte Carlo \citep{pakman2014exact,lan2014spherical} and a minimax tilting method proposed in \citet{botev2016normal}.

\section{Hyperplane-truncated and conditional MVNs}\label{sec:2}
For the problem under study, we express a $k$-dimensional MVN distribution truncated on the intersection of $k_2<k$ hyperplanes as
\beq\label{eq:x_S}
\xv\sim\Nor_{\mathcal{S}}(\muv,\Sigmamat),~~\mathcal{S}=\{\xv: \Gmat \xv = \rv\},
\eeq
where
$$
\Gmat\in\mathbb{R}^{k_2\times k}, ~~\rv\in\mathbb{R}^{k_2},
$$
and $\mbox{Rank}(\Gmat)=k_2$. The probability density function can be expressed as
\begin{align}\label{eq:px}
p( \xv \given \muv,\Sigmamat, \Gmat,\rv) =\frac{1}{Z} \exp\left[-\frac{1}{2}(\xv- \muv)^T \Sigmamat^{-1} (\xv- \muv) \right] \delta(\Gmat\xv = \rv),
\end{align}
where $Z$ is a constant ensuring $\int p( \xv \given \muv,\Sigmamat, \Gmat,\rv)d\xv = 1$, and $\delta(x)=1$ if the condition $x$ is satisfied and $\delta(x)=0$ otherwise. Let us partition $\Gmat$, $\xv$, $\muv$, $\Sigmamat$, and $\Lambdamat = \Sigmamat^{-1}$ as
$$
\Gmat = (\Gmat_1,\Gmat_2),~
\xv=\begin{bmatrix}
 \xv_1 \\
 \xv_2
\end{bmatrix},~
\muv = \begin{bmatrix}
 \muv_1 \\
 \muv_2
\end{bmatrix},~
\Sigmamat = \begin{bmatrix}
\Sigmamat_{11} &\Sigmamat_{12} \\
\Sigmamat_{21} & \Sigmamat_{22}
\end{bmatrix},~
\mbox{and}~
\Lambdamat = \begin{bmatrix}
\Lambdamat_{11} & \Lambdamat_{12} \\
\Lambdamat_{21} & \Lambdamat_{22}
\end{bmatrix},
$$
whose sizes are
$$
(k_2\times k_1,k_2\times k_2),~
\begin{bmatrix}
 k_1 \times 1 \\
 k_2 \times 1
\end{bmatrix},~
 \begin{bmatrix}
 k_1 \times 1\\
 k_2\times 1
\end{bmatrix},~
 \begin{bmatrix}
 k_1\times k_1&k_1\times k_2 \\
 k_2\times k_1& k_2\times k_2
\end{bmatrix},~\mbox{and}~
\begin{bmatrix}
k_1\times k_1&k_1\times k_2 \\
k_2\times k_1& k_2\times k_2
\end{bmatrix},
$$
respectively, where $k=k_1+k_2$, $\Sigmamat_{21} = \Sigmamat_{12}^T$, and $\Lambdamat_{21} = \Lambdamat_{12}^T$.

A special case that frequently arises in real applications is when $\Gmat_1=\mathbf{0}_{k_2\times k_1}$ and $\Gmat_2= \Imat_{k_2}$, which means $(\mathbf{0}_{k_2\times k_1},\Imat_{k_2})\xv=\xv_2=\rv$ and 
the need is to simulate $\xv_1$ 
given $\xv_2=\rv$. 
For a MVN random variable 
$
\xv\sim\mathcal{N}(\muv,\Sigmamat)
$, it is well known, $e.g.$, in \citet{tong2012multivariate}, that the conditional distribution of $\xv_1$ given $\xv_2=\rv$, $i.e.$, the distribution of $\xv$ restricted to $\mathcal{S}=\{\xv: (\mathbf{0}_{k_2\times k_1},\Imat_{k_2})\xv = \rv\}$, can be expressed as
\beq\label{eq:conditional}
\xv_1 \given \xv_2 = \rv \,\sim\,\Nor \left[\muv_1 + \Sigmamat_{12}\Sigmamat_{22}^{-1}(\rv-\muv_2), ~~\Sigmamat_{11} - \Sigmamat_{12}\Sigmamat_{22}^{-1}\Sigmamat_{21}\right].
\eeq
%
Alternatively, applying the Woodbury matrix identity to relate the entries of the covariance matrix $\Sigmamat$ to those of the precision matrix $\Lambdamat$, one may obtain the following equivalent expression as 
\beq\label{eq:conditionalPrecision}
\xv_1 \given \xv_2 = \rv \,\sim\,\Nor \left[\muv_1 - \Lambdamat_{11}^{-1} \Lambdamat_{12} (\rv-\muv_2), ~\Lambdamat_{11}^{-1}\right].
\eeq


In a general setting where $\Gmat\neq ( \mathbf{0}_{k_2\times k_1}, \Imat_{k_2})$,
let us define a full rank linear transformation matrix $\Hmat\in\mathbb{R}^{k\times k}$, with $(\Hmat_1,\Hmat_2)$ as the $(k\times k_1,k\times k_2)$ partition of $\Hmat$, where the columns of $\Hmat_1\in\mathbb{R}^{k\times k_1}$ span the null space of the $k_2$ rows of $\Gmat$, making $\Gmat\Hmat=(\Gmat\Hmat_1,\Gmat\Hmat_2)=(\mathbf{0}_{k_2\times k_1},\Gmat\Hmat_2)$, where $\Gmat\Hmat_2$ is a $k_2\times k_2$ full rank matrix. \linebreak[4]
For example, a linear transformation matrix $\Hmat$ that makes $\Gmat\Hmat=(\mathbf{0}_{k_2\times k_1},\Imat_{k_2})$ can be constructed using the  command $\mathrm{\Hmat= inv([null(\Gmat)';\Gmat]}$) in Matlab and \linebreak[4]
$\mathrm{\Hmat<\!\!-solve(rbind(t(Null(t(\Gmat))),\Gmat))}$ in R. With $\Hmat $ and $\Hmat^{-1}$, one may  re-express the constraints as $\mathcal{S}=\{\xv: (\mathbf{0}_{k_2\times k_1},\Gmat\Hmat_2) (\Hmat^{-1}\xv) = \rv\}$.
Denote $\zv=\Hmat^{-1}\xv$, then we can generate $\xv$ by letting $\xv=\Hmat \zv$, where
\beq \label{eq:pztruccond}
 \zv\sim\Nor_{\mathcal{D}}[\Hmat^{-1}\muv, \Hmat^{-1}\Sigmamat (\Hmat^{-1})^T],~~ \mathcal{D}=\{\zv: \Gmat\Hmat_2\zv_2=\rv\}=\{\zv: \zv_2=(\Gmat\Hmat_2)^{-1}\rv\}.
\eeq
More specifically, denoting $\Lambdamat = [\Hmat^{-1}\Sigmamat(\Hmat^{-1})^T]^{-1} = \Hmat^T \Sigmamat^{-1}\Hmat$ as the precision matrix for $\zv$,  we have
\begin{align}\label{eq:precision}
\begin{bmatrix}
 \Lambdamat _{11} &\Lambdamat _{12} \\
 \Lambdamat _{21} & \Lambdamat _{22}
\end{bmatrix}
&= \Hmat^T\Sigmamat^{-1} \Hmat = \begin{bmatrix}
 \Hmat_1^T\Sigmamat^{-1}\Hmat_1 &\Hmat^T_1\Sigmamat^{-1} \Hmat_2\\
 \Hmat_2^T\Sigmamat^{-1}\Hmat_1 &\Hmat_2^T\Sigmamat^{-1} \Hmat_2
\end{bmatrix},
\end{align}
and hence $\xv$ truncated on $\mathcal{S}$ can be naively generated using the following algorithm, whose computational complexity is described in Table \ref{tab:CompAlg1} of the Appendix. 

\begin{algorithm}[H]
\begin{itemize}
\item Find $\Hmat=(\Hmat_1,\Hmat_2)$ that satisfies $\Gmat\Hmat=(\Gmat\Hmat_1,\Gmat\Hmat_2)=(\mathbf{0}_{k_2\times k_1},\Gmat\Hmat_2)$, where $\Gmat\Hmat_2$ is a full rank matrix;
\item Let $\zv_2 =(\Gmat \Hmat_2)^{-1}\rv$, $\Lambdamat _{11} = \Hmat_1^T\Sigmamat^{-1}\Hmat_1$, and $\Lambdamat _{12}=\Hmat_1^T\Sigmamat^{-1}\Hmat_2$;
\item Sample $\zv_1\given \zv_2=(\Gmat \Hmat_2)^{-1}\rv \sim\Nor(
\muv_{z_1}, \Lambdamat^{-1}_{11})$, where
$$\muv_{z_1}=(\Imat_{k_1},\mathbf{0}_{k_1\times k_2})\Hmat^{-1}\muv - \Lambdamat^{-1}_{11}\Lambdamat_{12}\left[(\Gmat\Hmat_2)^{-1}\rv - (\mathbf{0}_{k_2\times k_1},\Imat_{k_2})\Hmat^{-1}\muv\right];$$
\item Return $\xv = \Hmat\zv=\Hmat_1 \zv_1 + \Hmat_2 (\Gmat \Hmat_2)^{-1}\rv $.
\end{itemize}
 \caption{\label{alg:1} Simulation of the hyperplane truncated MVN distribution $\xv\sim\Nor_{\mathcal{S}}(\muv,\Sigmamat)$, where $\mathcal{S}=\{\xv: \Gmat \xv = \rv\}$, by transforming a random variable drawn from the conditional distribution of another MVN distribution.}
\end{algorithm}\vspace{2mm}

For illustration, we consider a simple 2-dimensional example with $\muv = (1,1.2)^T$,
$\Sigmamat = [(1,0.3)^T,(0.3,1)^T]
$, $\Gmat = (1,1)$, and $\rv = 1$. If we choose $\Hmat_1 = (-0.7071,0.7071)^T$ and $\Hmat_2=(1.3,1.3)^T$, then we have $z_2 = (\Gmat \Hmat_2)^{-1}\rv = (2.6)^{-1} = 0.3846$, $\Lambdamat_{11} = 1.4285$, and $\Lambdamat_{12}=0$; as shown in Figure~\ref{fig:PxPz}, we may generate $\xv$ using 
$$\xv = (-0.7071,0.7071)^T z_1 + (1.3,1.3)^Tz_2$$
where 
$z_1\sim\Nor(0.1414,0.7)$ and $z_2=0.3846$.
%

\begin{figure}[!t]
 \centering
 \subfigure[]{
 \includegraphics[height = 4.5 cm]{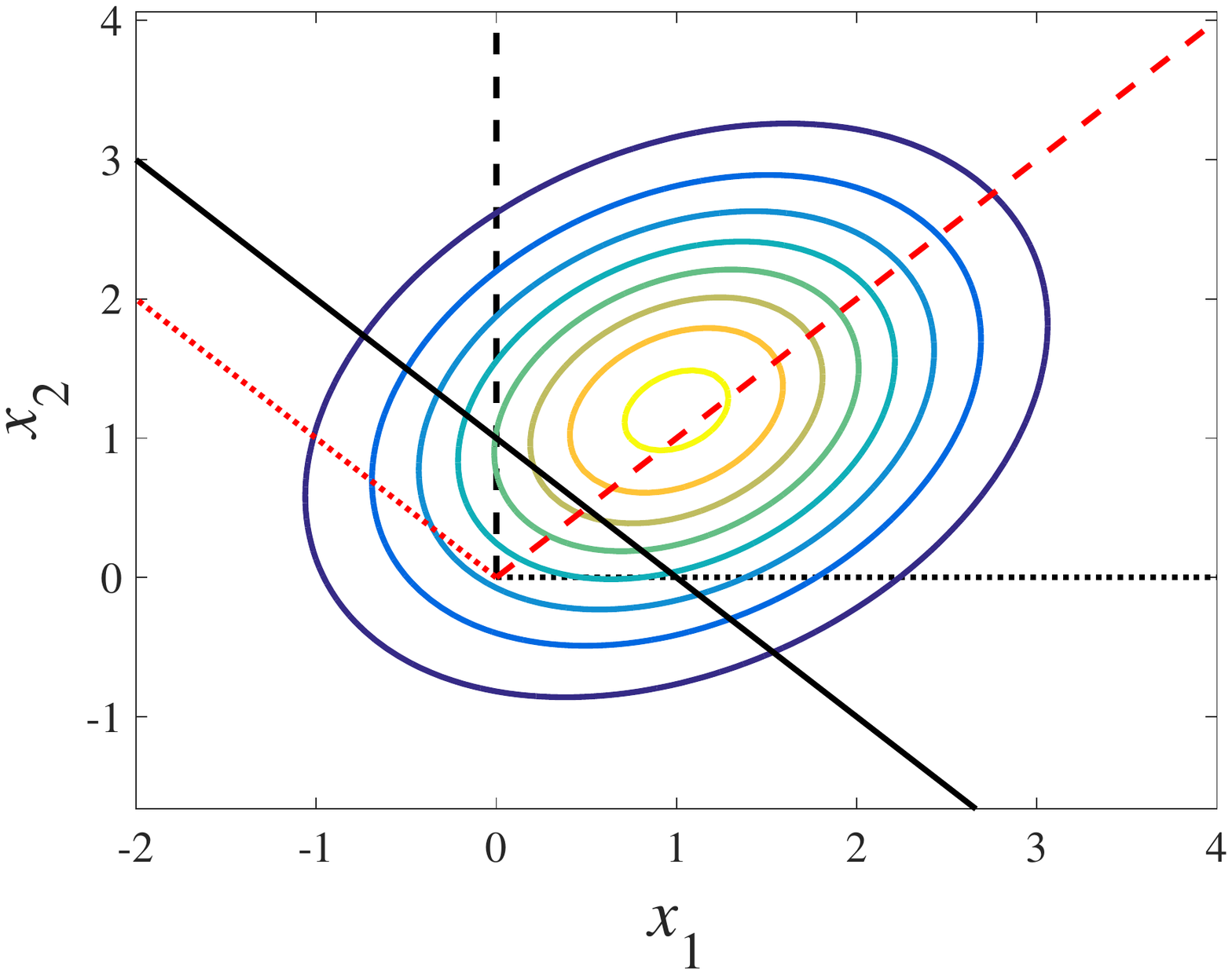} 
 }
 \subfigure[]{
 \includegraphics[height = 4.5 cm]{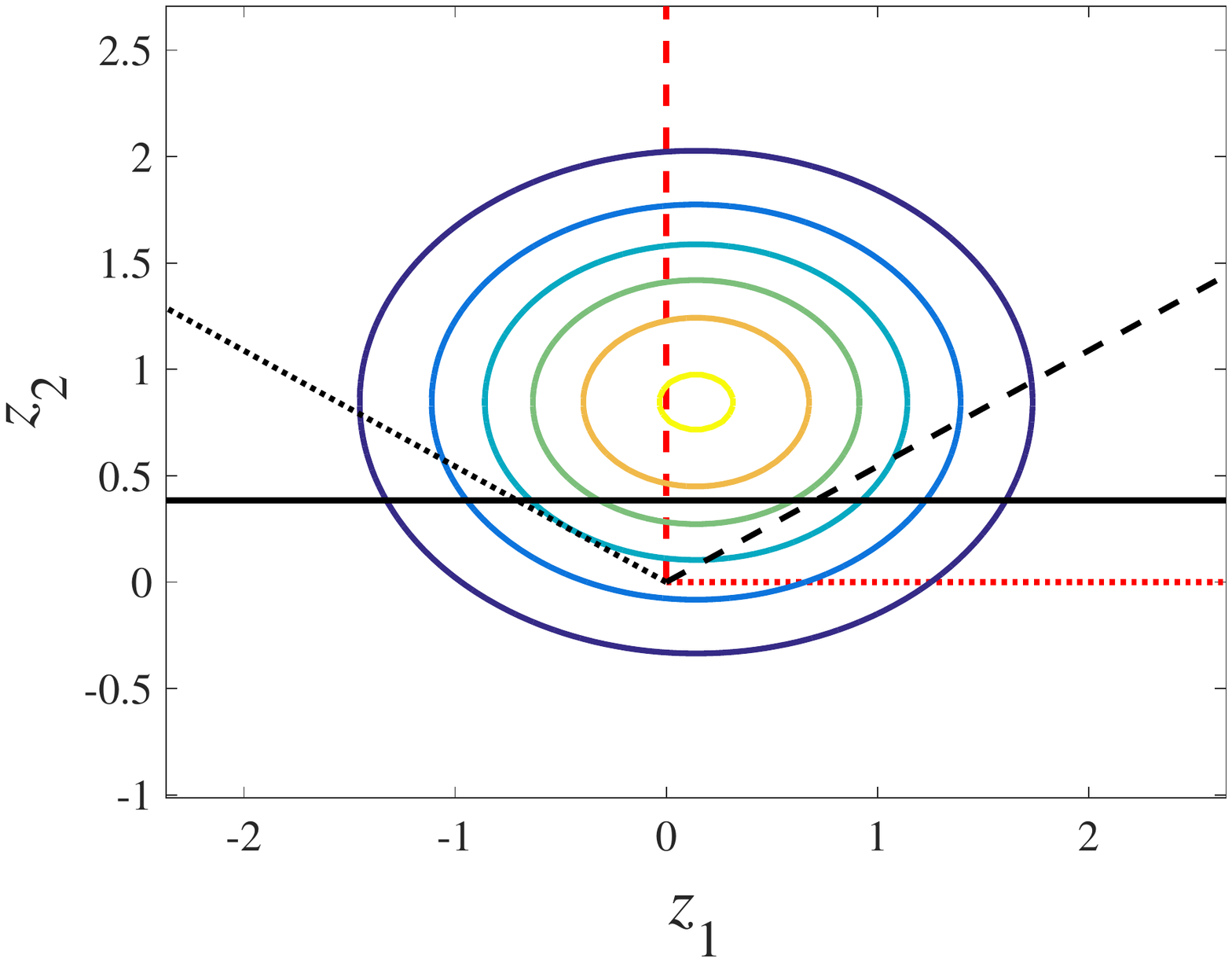} 
 }
 \caption{\small Illustration of (a) $p(\xv)$ in \eqref{eq:px}, where $\muv = (1,1.2)^T$,
$\Sigmamat = [(1,0.3)^T,(0.3,1)^T]
$, $\Gmat = (1,1)$, and $\rv = 1$, and (b) $p(\zv)$ in \eqref{eq:pztruccond}, where $\Hmat_1 = (-0.7071,0.7071)^T$, $\Hmat_2=(1.3,1.3)^T$, and $\Hmat^{-1}=[(-0.7071,0.3846)^T, (0.7071,0.3846)^T]$. The coordinate systems of $\xv$ and $\zv$ are shown in black and red, respectively, and the first and second axes of a coordinate system are shown as dotted and dashed lines, respectively.}
 \label{fig:PxPz}
\end{figure}

For high dimensional problems, however, Algorithm \ref{alg:1} in general requires a large number of intermediate variables that could be computationally expensive to compute. In the following discussion, we will show how to completely avoid instantiating these intermediate variables.

\section{Fast and exact simulation of MVN distributions }

Instead of using Algorithm \ref{alg:1},
we first provide a theorem to show how to efficiently and exactly simulate from a hyperplane-truncated MVN distribution. In the Appendix, we provide two different proofs. The first proof facilitates the derivations by 
employing an existing algorithm of \citet{hoffman1991constrained} and \citet{doucet2010note}, which describes how to simulate from the conditional distribution of a MVN distribution shown in \eqref{eq:conditional} 
without computing $\Sigmamat_{11} - \Sigmamat_{12}\Sigmamat_{22}^{-1}\Sigmamat_{21}$
and its Cholesky decomposition. 
Note it is straightforward to verify that the algorithm in \citet{hoffman1991constrained} and \citet{doucet2010note}, 
as shown in the Appendix,  can be considered as  
a special case of the proposed algorithm with $\Gmat = [\bds 0, \Imat]$.

\vspace{2mm}
\begin{algorithm}[H]
\begin{itemize}
\item Sample $\yv\sim\Nor (\muv,\Sigmamat)$;
\item Return $\xv = \yv + \Sigmamat \Gmat^T ( \Gmat \Sigmamat \Gmat^T )^{-1} (\rv - \Gmat \yv)$, which can be realized using \begin{itemize}
\item Solve $\alphav$ such that $( \Gmat \Sigmamat \Gmat^T ) \alphav = \rv - \Gmat \yv$;
\item Return $\xv = \yv + \Sigmamat \Gmat^T \alphav$. 
\end{itemize}
\end{itemize}
 \caption{\label{alg:2} Simulation of the hyperplane truncated MVN distribution $\xv\sim\Nor_{\mathcal{S}}(\muv,\Sigmamat)$, where $\mathcal{S}=\{\xv: \Gmat \xv = \rv\}$, by transforming a random variable drawn from $\yv\sim\Nor(\muv,\Sigmamat)$. }
\end{algorithm}\vspace{2mm}

\begin{thm} \label{maintheorem1}
Suppose $\xv$ is simulated with Algorithm \ref{alg:2}, then it is distributed as
$
\xv\sim\Nor_{\mathcal{S}}(\muv,\Sigmamat),~~\mathcal{S}=\{\xv: \Gmat \xv = \rv\},
$
where
$
\Gmat\in\mathbb{R}^{k_2\times k}, ~~\rv\in\mathbb{R}^{k_2},
$
and $\mbox{Rank}(\Gmat)=k_2<k$.
%
\end{thm}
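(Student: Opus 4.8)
The plan is to establish the two defining properties of $\Nor_{\mathcal{S}}(\muv,\Sigmamat)$ separately: that the output of Algorithm~\ref{alg:2} lands on the hyperplane, and that, given it lies there, it carries the correct (degenerate) Gaussian law.

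\textbf{Membership in $\mathcal{S}$.} First I would check that $\xv\in\mathcal{S}$ with probability one. Because $\mathrm{Rank}(\Gmat)=k_2$ and $\Sigmamat\succ\mathbf{0}$, the matrix $\Gmat\Sigmamat\Gmat^T$ is invertible, and substituting the update of Algorithm~\ref{alg:2} gives $\Gmat\xv=\Gmat\yv+(\Gmat\Sigmamat\Gmat^T)(\Gmat\Sigmamat\Gmat^T)^{-1}(\rv-\Gmat\yv)=\rv$, so every sample satisfies the constraint exactly.

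\textbf{Reduction to the canonical constraint by equivariance.} The cleanest route I would take is to reduce a general $\Gmat$ to the special case $\Gmat=(\mathbf{0}_{k_2\times k_1},\Imat_{k_2})$ treated in \eqref{eq:conditional}. The key observation is that the map defined by Algorithm~\ref{alg:2} is equivariant under any invertible linear change of variables $\xv=\Hmat\zv$: if $\yv=\Hmat\yv_z$ with $\yv_z\sim\Nor(\Hmat^{-1}\muv,\Hmat^{-1}\Sigmamat(\Hmat^{-1})^T)$ and the constraint matrix becomes $\Gmat\Hmat$, then using $\Hmat^{-1}\Sigmamat(\Hmat^{-1})^T(\Gmat\Hmat)^T=\Hmat^{-1}\Sigmamat\Gmat^T$ and $(\Gmat\Hmat)\Hmat^{-1}\Sigmamat(\Hmat^{-1})^T(\Gmat\Hmat)^T=\Gmat\Sigmamat\Gmat^T$, running the algorithm in the $\zv$-coordinates and mapping back through $\Hmat$ reproduces the $\xv$-space update verbatim. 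Since the target family transforms in the same way, namely $\Nor_{\mathcal{S}}(\muv,\Sigmamat)$ becomes $\Nor_{\mathcal{D}}(\Hmat^{-1}\muv,\Hmat^{-1}\Sigmamat(\Hmat^{-1})^T)$ as in \eqref{eq:pztruccond}, it suffices to establish the claim after choosing $\Hmat$ so that $\Gmat\Hmat=(\mathbf{0}_{k_2\times k_1},\Imat_{k_2})$. For this constraint $\Gmat\Sigmamat\Gmat^T=\Sigmamat_{22}$ and $\Sigmamat\Gmat^T=\begin{bmatrix}\Sigmamat_{12}\\\Sigmamat_{22}\end{bmatrix}$, so the two blocks of the update read $\xv_2=\rv$ and $\xv_1=\yv_1+\Sigmamat_{12}\Sigmamat_{22}^{-1}(\rv-\yv_2)$; the latter is exactly the conditional sampler of \citet{hoffman1991constrained} and \citet{doucet2010note}, already known to reproduce \eqref{eq:conditional}, which for this constraint \emph{is} the truncated MVN. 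This closes the canonical case and, by the reduction, the theorem.

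\textbf{Where the difficulty sits.} A self-contained alternative I could instead pursue is direct moment matching: $\xv$ is an affine image of a Gaussian, hence Gaussian with mean $\muv+\Sigmamat\Gmat^T(\Gmat\Sigmamat\Gmat^T)^{-1}(\rv-\Gmat\muv)$ and covariance $(\Imat-\Mmat)\Sigmamat(\Imat-\Mmat)^T=\Sigmamat-\Sigmamat\Gmat^T(\Gmat\Sigmamat\Gmat^T)^{-1}\Gmat\Sigmamat$, where $\Mmat=\Sigmamat\Gmat^T(\Gmat\Sigmamat\Gmat^T)^{-1}\Gmat$ satisfies the idempotent-type identity $\Mmat\Sigmamat\Mmat^T=\Mmat\Sigmamat$. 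The main obstacle on this route is the degeneracy: the law of $\xv$ is singular, supported on the $(k-k_2)$-dimensional affine subspace $\mathcal{S}$, so I would have to argue carefully that matching the mean together with the covariance restricted to the tangent space of $\mathcal{S}$ suffices to identify the two singular laws, and that these restricted moments coincide with those read off from \eqref{eq:px}. The equivariance argument is attractive precisely because it sidesteps this degenerate-density bookkeeping, leaving the only real computation in the transparent canonical coordinates, where the restricted covariance is simply the Schur complement $\Sigmamat_{11}-\Sigmamat_{12}\Sigmamat_{22}^{-1}\Sigmamat_{21}$ of \eqref{eq:conditional}.
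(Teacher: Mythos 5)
Your proof is correct, and it takes a genuinely different route from the paper's. The paper gives two proofs, and both work forward from the naive sampler (Algorithm \ref{alg:1}): they construct a transformation $\Hmat=(\Hmat_1,\Hmat_2)$ with $\Gmat\Hmat_1=\mathbf{0}$ and then make the \emph{specific} choice $\Hmat_2=\Sigmamat\Gmat^T\Mmat$ so that $\Lambdamat_{12}=\Hmat_1^T\Sigmamat^{-1}\Hmat_2=\mathbf{0}$, i.e., so that $\zv_1$ and $\zv_2$ decorrelate; Algorithm \ref{alg:2}'s formula then emerges from simplifying the transformed sampler, which requires the additional identification $(\mathbf{0}_{k\times k_1},\Sigmamat\Gmat^T\Mmat)\Hmat^{-1}=\Sigmamat\Gmat^T(\Gmat\Sigmamat\Gmat^T)^{-1}\Gmat$. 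You instead verify Algorithm \ref{alg:2} directly: the exact constraint check, plus the observation that the map $\yv\mapsto\yv+\Sigmamat\Gmat^T(\Gmat\Sigmamat\Gmat^T)^{-1}(\rv-\Gmat\yv)$ is equivariant under any invertible change of coordinates (with $\Sigmamat\mapsto\Hmat^{-1}\Sigmamat(\Hmat^{-1})^T$ and $\Gmat\mapsto\Gmat\Hmat$), while the target family $\Nor_{\mathcal{S}}(\muv,\Sigmamat)$ transforms the same way as in \eqref{eq:pztruccond}; this reduces the general case to $\Gmat=(\mathbf{0}_{k_2\times k_1},\Imat_{k_2})$, where the update is exactly the Hoffman--Doucet conditional sampler and the target is exactly \eqref{eq:conditional}. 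What your route buys is that $\Hmat$ need only satisfy $\Gmat\Hmat=(\mathbf{0},\Imat_{k_2})$ --- no decorrelating choice of $\Hmat_2$ and no computation of the matrix $\Cmat$ is needed --- at the cost of relying, as the paper's first proof also does, on the known correctness of the conditional sampler (itself a one-line mean/covariance computation for an affine image of a Gaussian). Both arguments share the same two ingredients, a linear change of variables and the Hoffman--Doucet algorithm, but your equivariance lemma replaces the paper's decorrelation construction and sidesteps the degenerate-density bookkeeping you correctly flag as the obstacle to a direct moment-matching proof.
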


The above algorithm and theorem, whose computational complexity is described in Table \ref{tab:CompAlg2} of the Appendix, show that one may draw $\yv$ from the unconstrained MVN as $\yv\sim\Nor ( \muv , \Sigmamat )$ and directly map it to a vector $\xv$ on the intersection of hyperplanes using
$ 
\xv 
=\Sigmamat \Gmat^T (\Gmat \Sigmamat \Gmat^T )^{-1} \rv +\left[ \Imat-\Sigmamat \Gmat^T ( \Gmat \Sigmamat \Gmat^T )^{-1}\Gmat\right]\yv.\notag
$ 
For illustration, with the same $\muv$, $\Sigmamat$, $\Gmat$, and $\rv$ as those in Figure \ref{fig:PxPz}, we show in Figure \ref{fig:DemoAlg2} a simple two dimensional example, where the unrestricted Gaussian distribution $ \Nor ( \muv , \Sigmamat )$ is represented with a set of ellipses, and 
the constrained sample space $\mathcal{S}$ is represented as a straight line in the two-dimensional setting.
With $\Sigmamat \Gmat^T (\Gmat \Sigmamat \Gmat^T )^{-1}  \rv = (0.5,0.5)^T$, $\left[ \Imat-\Sigmamat \Gmat^T ( \Gmat \Sigmamat \Gmat^T )^{-1}\Gmat\right] = [(0.5,-0.5)^T,(-0.5,0.5)^T]$, 
one may directly maps a sample $\yv \sim \Nor ( \muv , \Sigmamat )$ to a vector on the constrained space. For example, if $\yv = (1,2)^T$, then it would be mapped to $\xv = (0,1)^T$ on the straight line.

\begin{figure}[h] 
  \centering
  \includegraphics[height = 5 cm]{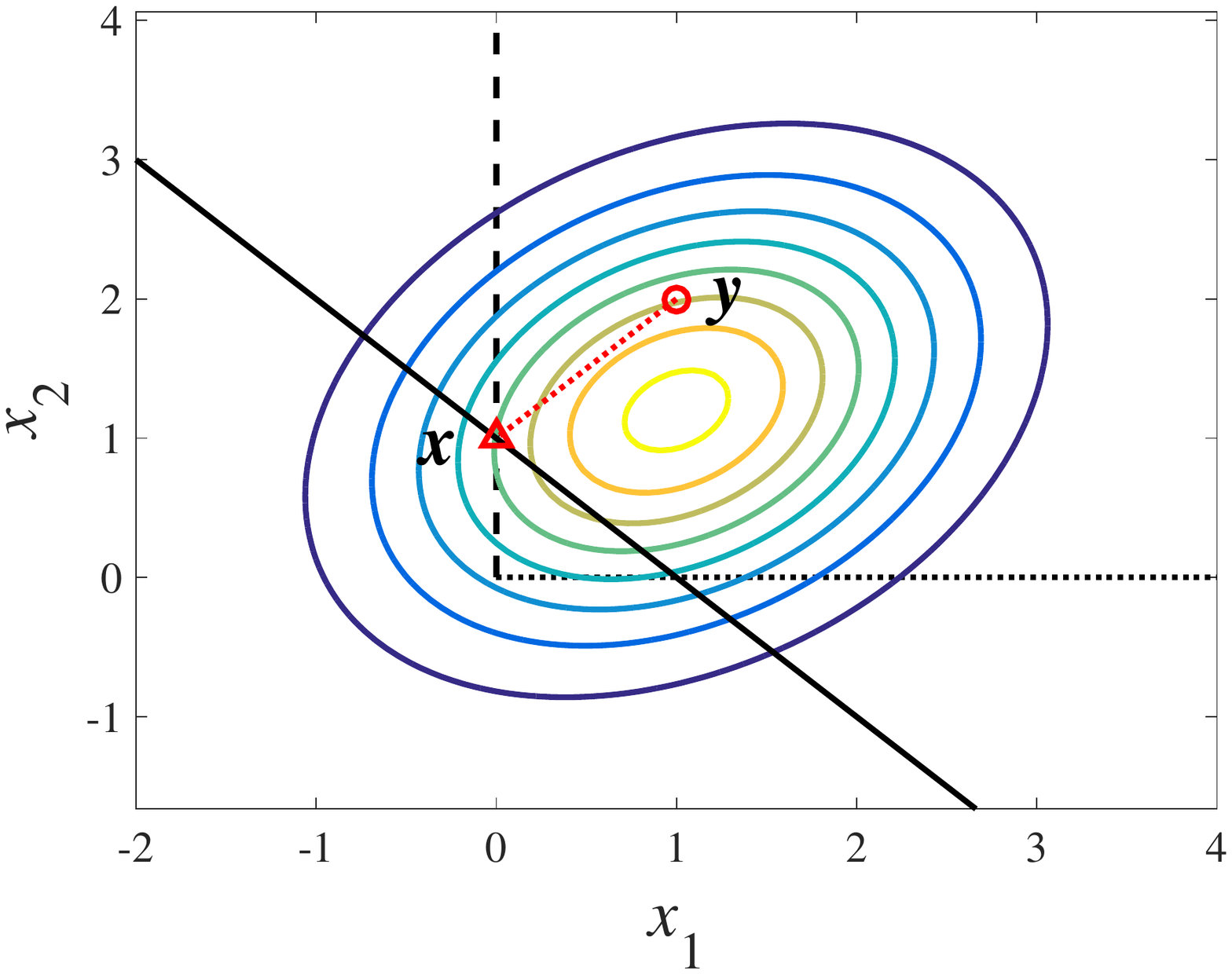}\\
  \caption{A two dimensional demonstration of Algorithm \ref{alg:2} that maps a random sample from $\yv\sim\Nor ( \muv , \Sigmamat )$  to a sample in the constrained space using $ 
\xv 
=\Sigmamat \Gmat^T (\Gmat \Sigmamat \Gmat^T )^{-1} \rv +\left[ \Imat-\Sigmamat \Gmat^T ( \Gmat \Sigmamat \Gmat^T )^{-1}\Gmat\right]\yv
$.  For example, if $\muv = (1,1.2)^T$,
  $\Sigmamat = [(1,0.3)^T,(0.3,1)^T]$,
    $\Gmat = (1,1)$, and $\rv = 1$, then
      $\yv = (1,2)^T$ would be mapped to $\xv=(0,1)^T$ on a straight line using Algorithm \ref{alg:2}.
}
  \label{fig:DemoAlg2}
\end{figure}

\subsection{Fast simulation of MVN distributions with structured covariance or precision matrices}
For fast simulation of MVN distributions with structured covariance or precision matrices, our idea is to relate them to higher-dimensional 
hyperplane-truncated MVN distributions, with block-diagonal covariance matrices, that can be efficiently simulated with Algorithm \ref{alg:2}.
We first introduce an efficient algorithm for the simulation of a MVN distribution, whose covariance matrix is a positive-definite matrix subtracted by a low-rank symmetric matrix. Such kind of covariance matrices commonly arise in the conditional distributions of MVN distributions, as shown in~\eqref{eq:conditional}. 
We then further extend this algorithm to the simulation of a MVN distribution whose precision (inverse covariance) matrix is 
the sum of a positive-definite matrix and a low-rank symmetric matrix. Such kind of precision matrices commonly arise in the conditional posterior distributions of the regression coefficients in both linear regression and generalized linear models. 


\begin{thm}\label{thm_condition}
The probability density function (PDF) of the MVN distribution
\beq\label{eq:px1}
\xv_1\sim
 \Nor ( \muv_1,~ \Sigmamat_{11} - \Sigmamat_{12}\Sigmamat_{22}^{-1}\Sigmamat_{21}),
\eeq
is the same as the PDF of the marginal distribution of $\xv_1=(x_1,\ldots,x_{k_1})^T$
in $\xv=(\xv_1^T,x_{k_1+1},\ldots,x_{k})^T$, whose PDF is expressed as
\begin{align}
p( \xv\given \muv,\tilde \Sigmamat,\Gmat,\rv ) &= \Nor_{\{ \xv: \Gmat \xv = \rv\}}\left(
 \muv,
\tilde \Sigmamat \right)
 \notag\\
 &=\frac{1}{Z} \exp\left[-\frac{1}{2}(\xv- \muv)^T\tilde \Sigmamat^{-1} (\xv- \muv) \right] \delta(\Gmat\xv = \rv),
\end{align}
where $Z$ is a normalization constant, $\Gmat_1 = \Sigmamat_{21}\Sigmamat_{11}^{-1} $ is a matrix of size $k_2\times k_1$,
$\Gmat_2 $ is a user-specified full rank invertible matrix of size $k_2\times k_2$, $\rv\in\mathbb{R}^{k_2}$ is a user-specified vector, 
and
\beq\label{eq:14}
 \Gmat=(\Gmat_1,\Gmat_2) 
\in\mathbb{R}^{k_2\times k},~~~~
 \muv = \begin{bmatrix}
 \muv_1\\
 \muv_2
 \end{bmatrix}\in\mathbb{R}^{k},
 ~~~~\tilde \Sigmamat = \begin{bmatrix}
 \Sigmamat_{11} & \mathbf{0} \\
\mathbf{0} & \tilde \Sigmamat_{22}
 \end{bmatrix}\in\mathbb{R}^{k\times k},
 \eeq
 where
 \begin{align}\label{eq:tilde_mu2}
 & \muv_2= \Gmat_{2}^{-1}(\rv - \Sigmamat_{21}\Sigmamat_{11}^{-1} \muv_1),\\
 \label{eq:tilde_sig22}
 &\tilde \Sigmamat_{22} = \Gmat_2^{-1}\left(\Sigmamat_{22} - \Sigmamat_{21} \Sigmamat_{11}^{-1} \Sigmamat_{12}\right)(\Gmat_{2}^{-1})^T. 
 \end{align}
\end{thm}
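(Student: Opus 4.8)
The plan is to restrict the joint Gaussian kernel to the constraint set, parametrize that set by $\xv_1$, and show the resulting $\xv_1$-density collapses---via the Woodbury identity---to the target law in \eqref{eq:px1}.

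First I would use that $\Gmat_2$ is invertible: the constraint $\Gmat\xv = \Gmat_1\xv_1 + \Gmat_2\xv_2 = \rv$ forces $\xv_2 = \Gmat_2^{-1}(\rv - \Gmat_1\xv_1)$, so the support $\{\xv:\Gmat\xv=\rv\}$ is a $k_1$-dimensional affine subspace globally parametrized by $\xv_1$. The marginal law of $\xv_1$ is the pushforward of the truncated density under $(\xv_1,\xv_2)\mapsto\xv_1$; since this parametrization is affine, its Jacobian is constant in $\xv_1$ and is absorbed into $Z$. Hence the marginal density of $\xv_1$ is proportional to the joint Gaussian kernel restricted to the surface and read as a function of $\xv_1$ alone. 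Verifying that this Jacobian is indeed $\xv_1$-independent (so it does not distort the $\xv_1$-density) is a minor point I would confirm first.

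Next I would exploit the block-diagonal $\tilde\Sigmamat$ in \eqref{eq:14} to split the exponent as $(\xv_1-\muv_1)^T\Sigmamat_{11}^{-1}(\xv_1-\muv_1) + (\xv_2-\muv_2)^T\tilde\Sigmamat_{22}^{-1}(\xv_2-\muv_2)$. Substituting the constraint and the definition of $\muv_2$ in \eqref{eq:tilde_mu2}, and recalling $\Gmat_1 = \Sigmamat_{21}\Sigmamat_{11}^{-1}$, gives the clean relation $\xv_2 - \muv_2 = -\Gmat_2^{-1}\Sigmamat_{21}\Sigmamat_{11}^{-1}(\xv_1 - \muv_1)$. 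Writing $\Mmat = \Sigmamat_{22} - \Sigmamat_{21}\Sigmamat_{11}^{-1}\Sigmamat_{12}$ for the Schur complement, the definition of $\tilde\Sigmamat_{22}$ in \eqref{eq:tilde_sig22} yields $\tilde\Sigmamat_{22}^{-1} = \Gmat_2^T\Mmat^{-1}\Gmat_2$, so the factors $\Gmat_2^{-1}$ and $\Gmat_2$ cancel and the second quadratic form becomes $(\xv_1-\muv_1)^T\Sigmamat_{11}^{-1}\Sigmamat_{12}\Mmat^{-1}\Sigmamat_{21}\Sigmamat_{11}^{-1}(\xv_1-\muv_1)$, with no residual dependence on the user-specified $\Gmat_2$ or $\rv$.

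Collecting the two terms, the precision of $\xv_1$ is $\Sigmamat_{11}^{-1} + \Sigmamat_{11}^{-1}\Sigmamat_{12}\Mmat^{-1}\Sigmamat_{21}\Sigmamat_{11}^{-1}$, and the crux of the argument is to recognize this as the Woodbury matrix identity with $\Sigmamat_{11}$, $\Sigmamat_{12}$, $\Sigmamat_{22}$, $\Sigmamat_{21}$ in the roles of $A,U,C,V$, which equals $(\Sigmamat_{11} - \Sigmamat_{12}\Sigmamat_{22}^{-1}\Sigmamat_{21})^{-1}$---the inverse of the target covariance. Because the exponent is a pure quadratic in $\xv_1 - \muv_1$, the mean is manifestly $\muv_1$, so the marginal of $\xv_1$ coincides with \eqref{eq:px1}. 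I expect the only real obstacle to be bookkeeping---tracking transposes (using $\Sigmamat_{21}^T = \Sigmamat_{12}$ and symmetry of $\Sigmamat_{11}$) and confirming the $\Gmat_2$-cancellation---after which the single conceptual ingredient is the Woodbury step.
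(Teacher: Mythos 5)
Your proposal is correct and follows essentially the same route as the paper's proof: both substitute the constraint in the form $\xv_2-\muv_2=-\Gmat_2^{-1}\Gmat_1(\xv_1-\muv_1)$ into the block-diagonal quadratic form and identify the resulting precision $\Sigmamat_{11}^{-1}+\Sigmamat_{11}^{-1}\Sigmamat_{12}(\Sigmamat_{22}-\Sigmamat_{21}\Sigmamat_{11}^{-1}\Sigmamat_{12})^{-1}\Sigmamat_{21}\Sigmamat_{11}^{-1}$ with $(\Sigmamat_{11}-\Sigmamat_{12}\Sigmamat_{22}^{-1}\Sigmamat_{21})^{-1}$ via the Woodbury identity. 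The only cosmetic differences are the direction of the match (the paper expands the target first, you expand the constrained marginal first) and your explicit remark about the constant Jacobian of the affine parametrization, which the paper leaves implicit.
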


The above theorem shows how the simulation of a MVN distribution, whose covariance matrix is a positive-definite matrix minus a symmetric  matrix, can be realized by the simulation of a higher-dimensional hyperplane-truncated MVN distribution. By construction, it makes the covariance matrix $\tilde \Sigmamat$ of the truncated-MVN be block diagonal, but still preserves the flexibility  to customize the full-rank matrix $\Gmat_2$ and the vector~$\rv$. While there are infinitely many choices for 
both $\Gmat_2$ and $\rv$, 
in the following discussion, we remove that flexibility by specifying $\Gmat_2 = \Imat_{k_2} $, leading to $\Gmat=(\Gmat_1,\Gmat_2)=(\Sigmamat_{21}\Sigmamat_{11}^{-1} ,\Imat_{k_2})$, and $\rv = \Sigmamat_{21}\Sigmamat_{11}^{-1} \muv_1$. This specific  setting of $\Gmat_2$ and $\rv$ leads to the following Corollary that is a special case of Theorem~\ref{thm_condition}. Note that while we choose this specific setting in the paper, depending on the problems under study, other settings may lead to even more efficient simulation algorithms. 

\begin{cor}\label{cor1}
The PDF of the MVN distribution
\beq
\xv_1 \sim \Nor ( \muv_1, \Sigmamat_{11} - \Sigmamat_{12}\Sigmamat_{22}^{-1}\Sigmamat_{21})
\eeq
is the same as the PDF of the marginal distribution of $\xv_1$
in $\xv=(\xv_1^T,x_{k_1+1},\ldots,x_{k})^T$, whose PDF is expressed as
\begin{align}
p( \xv) &= \Nor_{\{ \xv: \, \Sigmamat_{21}\Sigmamat_{11}^{-1}\xv_1 + \xv_2 = \Sigmamat_{21}\Sigmamat_{11}^{-1} \muv_1\}}\left(
 \muv,
\tilde \Sigmamat \right)
 \notag\\
 &=\frac{1}{Z} \exp\left[-\frac{1}{2}(\xv- \muv)^T\tilde \Sigmamat^{-1} (\xv- \muv) \right] \delta( \Sigmamat_{21}\Sigmamat_{11}^{-1}\xv_1 + \xv_2 = \Sigmamat_{21}\Sigmamat_{11}^{-1} \muv_1),
\end{align}
where $\xv_2 = (x_{k_1+1},\ldots,x_{k})^T$, $Z$ is a normalization constant, 
and
\beq 
 \muv = \begin{bmatrix}
 \muv_1\\
\mathbf{0}
 \end{bmatrix}\in\mathbb{R}^{k},
 ~~~~\tilde \Sigmamat = \begin{bmatrix}
 \Sigmamat_{11} & \mathbf{0} \\
\mathbf{0} & \Sigmamat_{22} - \Sigmamat_{21} \Sigmamat_{11}^{-1} \Sigmamat_{12}
 \end{bmatrix}\in\mathbb{R}^{k\times k}.
 \eeq
 \end{cor}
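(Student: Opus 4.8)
The plan is to derive the Corollary as the special case of Theorem~\ref{thm_condition} obtained by fixing the user-specified full-rank matrix at $\Gmat_2 = \Imat_{k_2}$, together with the induced choice $\rv = \Sigmamat_{21}\Sigmamat_{11}^{-1}\muv_1$. Since Theorem~\ref{thm_condition} is already available, the only task is to check that its formulas collapse to the expressions claimed here.

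First I would substitute $\Gmat_2 = \Imat_{k_2}$ into $\Gmat = (\Gmat_1,\Gmat_2) = (\Sigmamat_{21}\Sigmamat_{11}^{-1},\Imat_{k_2})$, so that the linear constraint $\Gmat\xv = \rv$ becomes $\Sigmamat_{21}\Sigmamat_{11}^{-1}\xv_1 + \xv_2 = \rv$; taking $\rv = \Sigmamat_{21}\Sigmamat_{11}^{-1}\muv_1$ reproduces exactly the hyperplane $\mathcal{S} = \{\xv:\,\Sigmamat_{21}\Sigmamat_{11}^{-1}\xv_1 + \xv_2 = \Sigmamat_{21}\Sigmamat_{11}^{-1}\muv_1\}$ appearing in the statement. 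Because $\Gmat_2^{-1} = \Imat_{k_2}$, equation~\eqref{eq:tilde_mu2} reduces to $\muv_2 = \rv - \Sigmamat_{21}\Sigmamat_{11}^{-1}\muv_1 = \mathbf{0}$, and equation~\eqref{eq:tilde_sig22} reduces to $\tilde\Sigmamat_{22} = \Sigmamat_{22} - \Sigmamat_{21}\Sigmamat_{11}^{-1}\Sigmamat_{12}$. These are precisely the mean vector $\muv = (\muv_1^T,\mathbf{0}^T)^T$ and the lower-right block of $\tilde\Sigmamat$ asserted in the Corollary, so the result follows immediately from Theorem~\ref{thm_condition}.

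For completeness I would also sketch a self-contained verification that does not invoke Theorem~\ref{thm_condition}. Because $\tilde\Sigmamat$ is block diagonal, the untruncated density factorizes into a $\Nor(\muv_1,\Sigmamat_{11})$ factor in $\xv_1$ and a $\Nor(\mathbf{0},\tilde\Sigmamat_{22})$ factor in $\xv_2$. The constraint determines $\xv_2 = \Sigmamat_{21}\Sigmamat_{11}^{-1}(\muv_1 - \xv_1)$ with unit Jacobian in $\xv_2$, since the coefficient block multiplying $\xv_2$ is the identity; integrating the $\delta$-constraint over $\xv_2$ therefore substitutes this value and leaves a density in $\xv_1$ whose quadratic exponent has precision matrix $\Sigmamat_{11}^{-1} + \Sigmamat_{11}^{-1}\Sigmamat_{12}\tilde\Sigmamat_{22}^{-1}\Sigmamat_{21}\Sigmamat_{11}^{-1}$, with $\tilde\Sigmamat_{22} = \Sigmamat_{22} - \Sigmamat_{21}\Sigmamat_{11}^{-1}\Sigmamat_{12}$.

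The single step requiring any real work is to identify this precision as the inverse of the target covariance. Applying the Woodbury matrix identity with $A = \Sigmamat_{11}$, $U = -\Sigmamat_{12}$, $C = \Sigmamat_{22}^{-1}$, and $V = \Sigmamat_{21}$ gives $(\Sigmamat_{11} - \Sigmamat_{12}\Sigmamat_{22}^{-1}\Sigmamat_{21})^{-1} = \Sigmamat_{11}^{-1} + \Sigmamat_{11}^{-1}\Sigmamat_{12}(\Sigmamat_{22} - \Sigmamat_{21}\Sigmamat_{11}^{-1}\Sigmamat_{12})^{-1}\Sigmamat_{21}\Sigmamat_{11}^{-1}$, which matches the derived precision exactly and identifies the marginal of $\xv_1$ as $\Nor(\muv_1,\Sigmamat_{11} - \Sigmamat_{12}\Sigmamat_{22}^{-1}\Sigmamat_{21})$. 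I anticipate no genuine obstacle, as the block-diagonal form of $\tilde\Sigmamat$ and the choice $\Gmat_2 = \Imat_{k_2}$ were arranged precisely so that this Woodbury step closes cleanly.
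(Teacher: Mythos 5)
Your proposal is correct and matches the paper exactly: the paper offers no separate proof of Corollary \ref{cor1}, stating only that it is the special case of Theorem \ref{thm_condition} with $\Gmat_2 = \Imat_{k_2}$ and $\rv = \Sigmamat_{21}\Sigmamat_{11}^{-1}\muv_1$, which is precisely your first step, and your substitutions into \eqref{eq:tilde_mu2} and \eqref{eq:tilde_sig22} are right. Your supplementary self-contained verification simply re-runs the paper's proof of Theorem \ref{thm_condition} (marginalize the $\delta$-constraint, then apply the Woodbury identity) in this special case, so it is the same underlying argument rather than a new route.
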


%

Further applying  Theorem \ref{maintheorem1}
to Corollary \ref{cor1}, as described in detail in the Appendix, a MVN random variable $\xv$ with a structured covariance matrix can be generated as in Algorithm \ref{alg:3}, where there is no need to compute $\Sigmamat_{11}-\Sigmamat_{12}\Sigmamat_{22}^{-1}\Sigmamat_{21}$ and its Cholesky decomposition.
Suppose the covariance matrix $\Sigmamat_{11}$ admits some special structure that makes it easy to invert and computationally efficient to simulate from $\Nor (\mathbf{0},\Sigmamat_{11})$, 
then Algorithm \ref{alg:3} could lead to a significant saving in computation if $k_2\ll k_1$.
On the other hand, when $k_2\gg k_1$ and $\Sigmamat_{22}- \Sigmamat_{21} \Sigmamat_{11}^{-1} \Sigmamat_{12}$ admits no special structures, Algorithm \ref{alg:3} may not bring any computational advantage and  hence one may resort to the naive Cholesky decomposition based procedure. 
Detailed computational complexity analyses for both methods are provided in Tables \ref{tab:CompNaive3} and \ref{tab:CompAlg3} of the Appendix, respectively. 

\vspace{2mm}
\begin{algorithm}[H]
\begin{itemize}
\item Sample $\yv_1\sim\Nor (\mathbf{0},\Sigmamat_{11})$ and $\yv_2\sim\Nor (\mathbf{0},\Sigmamat_{22}- \Sigmamat_{21} \Sigmamat_{11}^{-1} \Sigmamat_{12})$ ;
\item Return $\xv_1 = \muv_1+\yv_1 - \Sigmamat_{12} \Sigmamat_{22}^{-1} (\Sigmamat_{21} \Sigmamat_{11}^{-1}\yv_1 + \yv_2)$, which can be realized using
\begin{itemize}
\item Solve $\alphav$ such that $ \Sigmamat_{22}\alphav = \Sigmamat_{21} \Sigmamat_{11}^{-1}\yv_1 + \yv_2$;
\item Return $\xv_1 = \muv_1+\yv_1 - \Sigmamat_{12} \alphav$. 
\end{itemize}

\end{itemize}
 \caption{\label{alg:3} Simulation of the MVN distribution $$\xv_1\sim\Nor(\muv_1,\Sigmamat_{11}-\Sigmamat_{12}\Sigmamat_{22}^{-1}\Sigmamat_{21}).$$ }
\end{algorithm}
\vspace{2mm}

\begin{cor} \label{cor2}
A random variable simulated with Algorithm \ref{alg:3} is distributed as
$
\xv_1\sim\Nor (\muv_1, \Sigmamat_{11} - \Sigmamat_{12}\Sigmamat_{22}^{-1}\Sigmamat_{21}). 
$

\end{cor}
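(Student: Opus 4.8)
The plan is to recognize that Algorithm~\ref{alg:3} is nothing but Algorithm~\ref{alg:2} applied to the augmented, block-diagonal construction of Corollary~\ref{cor1}, after which the claim follows by simply reading off the $\xv_1$ block and chaining the two earlier results. Concretely, I would instantiate Algorithm~\ref{alg:2} with the Corollary~\ref{cor1} quantities $\Gmat=(\Sigmamat_{21}\Sigmamat_{11}^{-1},\Imat_{k_2})$, $\rv=\Sigmamat_{21}\Sigmamat_{11}^{-1}\muv_1$, $\muv=(\muv_1^T,\mathbf{0}^T)^T$, and the block-diagonal $\tilde\Sigmamat=\mathrm{blockdiag}(\Sigmamat_{11},\,\Sigmamat_{22}-\Sigmamat_{21}\Sigmamat_{11}^{-1}\Sigmamat_{12})$. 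Because $\tilde\Sigmamat$ is block diagonal, the draw $\yv\sim\Nor(\muv,\tilde\Sigmamat)$ splits into independent pieces $\yv_1\sim\Nor(\muv_1,\Sigmamat_{11})$ and $\yv_2\sim\Nor(\mathbf{0},\Sigmamat_{22}-\Sigmamat_{21}\Sigmamat_{11}^{-1}\Sigmamat_{12})$, matching the sampling step of Algorithm~\ref{alg:3} up to the additive mean $\muv_1$.

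The crux of the argument is a pair of algebraic simplifications of the projection in Algorithm~\ref{alg:2}. Writing $P=\Sigmamat_{21}\Sigmamat_{11}^{-1}$ so that $P^T=\Sigmamat_{11}^{-1}\Sigmamat_{12}$, I would verify
\beq
\tilde\Sigmamat\Gmat^T=\begin{bmatrix}\Sigmamat_{11}P^T\\ \tilde\Sigmamat_{22}\end{bmatrix}=\begin{bmatrix}\Sigmamat_{12}\\ \Sigmamat_{22}-\Sigmamat_{21}\Sigmamat_{11}^{-1}\Sigmamat_{12}\end{bmatrix},\qquad \Gmat\tilde\Sigmamat\Gmat^T=P\Sigmamat_{12}+\tilde\Sigmamat_{22}=\Sigmamat_{22},
\eeq
where the Schur-complement structure forces the cross term to cancel so that $\Gmat\tilde\Sigmamat\Gmat^T$ collapses to $\Sigmamat_{22}$, precisely the matrix appearing in the linear solve of Algorithm~\ref{alg:3}. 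Substituting $\yv_1=\muv_1+\yv_1'$ with $\yv_1'\sim\Nor(\mathbf{0},\Sigmamat_{11})$ gives $\rv-\Gmat\yv=-(\Sigmamat_{21}\Sigmamat_{11}^{-1}\yv_1'+\yv_2)$, so the top block of $\xv=\yv+\tilde\Sigmamat\Gmat^T(\Gmat\tilde\Sigmamat\Gmat^T)^{-1}(\rv-\Gmat\yv)$ becomes
\beq
\xv_1=\muv_1+\yv_1'-\Sigmamat_{12}\Sigmamat_{22}^{-1}\left(\Sigmamat_{21}\Sigmamat_{11}^{-1}\yv_1'+\yv_2\right),
\eeq
which is exactly the output of Algorithm~\ref{alg:3} (with $\yv_1'$ playing the role of its $\yv_1$).

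With this identification in hand the conclusion follows immediately: Theorem~\ref{maintheorem1} guarantees that the full vector $\xv$ produced by Algorithm~\ref{alg:2} is distributed as $\Nor_{\{\xv:\Gmat\xv=\rv\}}(\muv,\tilde\Sigmamat)$, and Corollary~\ref{cor1} identifies the marginal of its $\xv_1$ block with $\Nor(\muv_1,\Sigmamat_{11}-\Sigmamat_{12}\Sigmamat_{22}^{-1}\Sigmamat_{21})$, the desired law. I expect the only genuine work to be the two displayed matrix identities and the cross-term cancellation they produce; everything else is bookkeeping. As an independent check that avoids invoking Theorem~\ref{maintheorem1}, one could argue directly that $\xv_1$ is an affine image of the independent Gaussians $\yv_1'$ and $\yv_2$, hence Gaussian with mean $\muv_1$, and that its covariance $A\Sigmamat_{11}A^T+B\tilde\Sigmamat_{22}B^T$, with $A=\Imat-\Sigmamat_{12}\Sigmamat_{22}^{-1}\Sigmamat_{21}\Sigmamat_{11}^{-1}$ and $B=-\Sigmamat_{12}\Sigmamat_{22}^{-1}$, collapses to $\Sigmamat_{11}-\Sigmamat_{12}\Sigmamat_{22}^{-1}\Sigmamat_{21}$ after the same cancellation.
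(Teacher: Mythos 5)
Your proposal is correct and follows essentially the same route as the paper's own proof: both instantiate Algorithm~\ref{alg:2} with the Corollary~\ref{cor1} construction, exploit the block-diagonal $\tilde\Sigmamat$ to split the draw into independent $\yv_1,\yv_2$, verify the identities $\tilde\Sigmamat\Gmat^T=(\Sigmamat_{12}^T,(\Sigmamat_{22}-\Sigmamat_{21}\Sigmamat_{11}^{-1}\Sigmamat_{12})^T)^T$ and $\Gmat\tilde\Sigmamat\Gmat^T=\Sigmamat_{22}$, and recenter via $\yv_1'=\yv_1-\muv_1$ to read off the $\xv_1$ block. The extra direct covariance check you sketch at the end is a nice sanity check but is not needed.
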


The efficient simulation algorithm for a MVN distribution with a structured covariance matrix can also be further extended to a MVN distribution with a structured precision matrix, as described below, 
 where $\betav \in \Rbb^p$, $\muv_{\beta} \in \Rbb^{p}$, $\Phimat \in \Rbb^{n \times p}$, and both $\Amat \in \Rbb^{p \times p}$ and $\Omegamat \in \Rbb^{n \times n}$ are positive-definite matrices.
Computational complexity analyses for both the naive Cholesky decomposition based implementation and Algorithm \ref{alg:4} are provided in Table \ref{tab:CompNaive4} and \ref{tab:CompAlg4} of the Appendix, respectively.
Similar to Algorithm~\ref{alg:3}, Algorithm \ref{alg:4} may bring a significant saving in computation when $p \gg n$  and 
$\Amat$ 
admits some special structure that makes it easy to invert and computationally efficient to simulate $\yv_1$. 

\vspace{2mm}
\begin{algorithm}[H]
 \begin{itemize}
\item Sample $\yv_1\sim\Nor (\mathbf{0},\Amat^{-1})$ and $\yv_2\sim\Nor (\mathbf{0},\Omegamat^{-1})$ ;
\item
 Return
$
 \betav 
 =\muv_{\beta}+\yv_1 - \Amat^{-1} \Phimat^T (\Omegamat^{-1}+\Phimat\Amat^{-1}\Phimat^T)^{-1} \left( \Phimat \yv_1+\yv_2\right)$, which can be realized using
\begin{itemize}
\item Solve $\alphav$ such that $ (\Omegamat^{-1}+\Phimat\Amat^{-1}\Phimat^T)\alphav = \Phimat \yv_1+\yv_2$.
\item Return $\betav =\muv_{\beta}+\yv_1 - \Amat^{-1} \Phimat^T \alphav$. 
\end{itemize}

\end{itemize}

\caption{\label{alg:4} Simulation of the MVN distribution 
$$\betav\sim\Nor\left[\muv_{\beta}, (\Amat + \Phimat^T \Omegamat \Phimat)^{-1}\right].$$ 
}
\end{algorithm}
\vspace{2mm}

\begin{cor}\label{cor3}
The random variable obtained with Algorithm \ref{alg:4} is distributed as
$\betav\sim\Nor(\muv_{\beta},\Sigmamat_{\beta})$, where 
$
\Sigmamat_{\beta} = (\Amat + \Phimat^T \Omegamat \Phimat)^{-1}$.
\end{cor}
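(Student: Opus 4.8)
The plan is to recognize that Algorithm~\ref{alg:4} is exactly Algorithm~\ref{alg:3} applied to a carefully chosen joint covariance matrix, so that Corollary~\ref{cor2} applies verbatim, with the final covariance identity then supplied by the Woodbury matrix identity. The whole argument is therefore a reduction plus one matrix identity, with no fresh probabilistic content needed.

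First I would set up the dictionary between the two algorithms by defining the blocks
\[
\Sigmamat_{11} = \Amat^{-1},\quad \Sigmamat_{21} = \Phimat\Amat^{-1},\quad \Sigmamat_{12} = \Sigmamat_{21}^T = \Amat^{-1}\Phimat^T,\quad \Sigmamat_{22} = \Omegamat^{-1} + \Phimat\Amat^{-1}\Phimat^T,
\]
together with $\muv_1 = \muv_{\beta}$. Under these choices one checks that $\Sigmamat_{21}\Sigmamat_{11}^{-1} = \Phimat\Amat^{-1}\Amat = \Phimat$, that the Schur complement $\Sigmamat_{22} - \Sigmamat_{21}\Sigmamat_{11}^{-1}\Sigmamat_{12} = \Omegamat^{-1}$, and that $\Sigmamat_{12}\Sigmamat_{22}^{-1} = \Amat^{-1}\Phimat^T(\Omegamat^{-1} + \Phimat\Amat^{-1}\Phimat^T)^{-1}$. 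Consequently the two auxiliary draws $\yv_1\sim\Nor(\mathbf{0},\Sigmamat_{11})$ and $\yv_2\sim\Nor(\mathbf{0},\Sigmamat_{22} - \Sigmamat_{21}\Sigmamat_{11}^{-1}\Sigmamat_{12})$ of Algorithm~\ref{alg:3} coincide with the draws $\yv_1\sim\Nor(\mathbf{0},\Amat^{-1})$ and $\yv_2\sim\Nor(\mathbf{0},\Omegamat^{-1})$ of Algorithm~\ref{alg:4}, and substituting the dictionary into the return formula $\muv_1 + \yv_1 - \Sigmamat_{12}\Sigmamat_{22}^{-1}(\Sigmamat_{21}\Sigmamat_{11}^{-1}\yv_1 + \yv_2)$ reproduces exactly the output $\betav = \muv_{\beta} + \yv_1 - \Amat^{-1}\Phimat^T(\Omegamat^{-1} + \Phimat\Amat^{-1}\Phimat^T)^{-1}(\Phimat\yv_1 + \yv_2)$. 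Hence the two algorithms generate identically distributed samples.

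Invoking Corollary~\ref{cor2} then gives $\betav\sim\Nor(\muv_{\beta},\,\Sigmamat_{11} - \Sigmamat_{12}\Sigmamat_{22}^{-1}\Sigmamat_{21})$, and it remains only to simplify the covariance. Substituting the dictionary yields
\[
\Sigmamat_{11} - \Sigmamat_{12}\Sigmamat_{22}^{-1}\Sigmamat_{21} = \Amat^{-1} - \Amat^{-1}\Phimat^T(\Omegamat^{-1} + \Phimat\Amat^{-1}\Phimat^T)^{-1}\Phimat\Amat^{-1},
\]
which is precisely the right-hand side of the Woodbury identity applied with $\Uv = \Phimat^T$, inner matrix $\Omegamat$, and $\Vv = \Phimat$, namely $(\Amat + \Phimat^T\Omegamat\Phimat)^{-1} = \Amat^{-1} - \Amat^{-1}\Phimat^T(\Omegamat^{-1} + \Phimat\Amat^{-1}\Phimat^T)^{-1}\Phimat\Amat^{-1}$. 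Therefore $\Sigmamat_{11} - \Sigmamat_{12}\Sigmamat_{22}^{-1}\Sigmamat_{21} = (\Amat + \Phimat^T\Omegamat\Phimat)^{-1} = \Sigmamat_{\beta}$, which is the claimed distribution.

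The step to be careful about is the legitimacy of the reduction: for Corollary~\ref{cor2} to apply, the block matrix assembled from the dictionary must be a genuine covariance matrix, and the inverse $(\Omegamat^{-1} + \Phimat\Amat^{-1}\Phimat^T)^{-1}$ must exist. Both hold because $\Amat\succ 0$ forces $\Sigmamat_{11}=\Amat^{-1}\succ 0$ while $\Omegamat\succ 0$ forces the Schur complement $\Omegamat^{-1}\succ 0$, so the joint matrix is positive definite and $\Omegamat^{-1}+\Phimat\Amat^{-1}\Phimat^T\succ 0$ is invertible; this is exactly where the positive-definiteness hypotheses on $\Amat$ and $\Omegamat$ enter. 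An equally valid but more computational alternative would bypass the reduction entirely and verify directly that the affine map $\betav = \muv_{\beta} + \Mmat_1\yv_1 + \Mmat_2\yv_2$, with $\Mmat_1 = \Imat - \Amat^{-1}\Phimat^T\Kmat^{-1}\Phimat$, $\Mmat_2 = -\Amat^{-1}\Phimat^T\Kmat^{-1}$, and $\Kmat = \Omegamat^{-1}+\Phimat\Amat^{-1}\Phimat^T$, sends the two independent Gaussians to a Gaussian of mean $\muv_{\beta}$ and covariance $\Mmat_1\Amat^{-1}\Mmat_1^T + \Mmat_2\Omegamat^{-1}\Mmat_2^T$, which after expansion collapses to $(\Amat + \Phimat^T\Omegamat\Phimat)^{-1}$.
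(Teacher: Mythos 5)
Your proof is correct and follows essentially the same route as the paper's: identify Algorithm~\ref{alg:4} as Algorithm~\ref{alg:3} under the substitutions $\muv_1=\muv_{\beta}$, $\Sigmamat_{11}=\Amat^{-1}$, $\Sigmamat_{12}=\Amat^{-1}\Phimat^T$, $\Sigmamat_{22}=\Omegamat^{-1}+\Phimat\Amat^{-1}\Phimat^T$, invoke Corollary~\ref{cor2}, and finish with the Woodbury (matrix inversion) identity. You simply spell out the verification of the dictionary and the positive-definiteness conditions in more detail than the paper does.
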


\section{Illustrations} 

Below we provide several examples to illustrate Theorem \ref{maintheorem1}, which shows how to efficiently simulate from a hyperplane-truncated MVN distribution, and Corollary \ref{cor2} (Corollary \ref{cor3}), which shows how to efficiently simulate from a MVN distribution with a structured covariance (precision) matrix. We run all our experiments 
on a 2.9 GHz computer.


\subsection{Simulation of hyperplane-truncated MVNs}

We first compare Algorithms \ref{alg:1} and \ref{alg:2}, whose generated random samples follow the same distribution, as suggested by Theorem \ref{maintheorem1}, to highlight the advantages of 
Algorithm~\ref{alg:2}
 over Algorithm  \ref{alg:1}. 
 We then employ Algorithm \ref{alg:2} for a real application whose data dimension is high and sample size is large.

\subsubsection{Comparison of Algorithms \ref{alg:1} and \ref{alg:2}}
\label{sec:A1vsA2}

We compare Algorithms \ref{alg:1} and \ref{alg:2} in a wide variety of settings by varying the data dimension $k$, varying the number of hyperplane constraints $k_2$, and choosing either a diagonal covariance matrix  $\Sigmamat$ or a non-diagonal one. We generate random diagonal covariance matrices using the MATLAB command  $\diag(0.05+\text{rand}(k,1))$ and random non-diagonal ones using $U.'*\diag(0.05+\text{rand}(k,1))*U$, where $\text{rand}(k,1)$ is a vector of $k$ uniform random numbers and $U$ consists of a set of $k$ orthogonal basis vectors. 
The elements of $\muv$, $\rv$, and $\Gmat$ are all sampled from $\Nor(0,1)$, 
with 
 the singular value decomposition applied to $\Gmat$  to check whether  $\mbox{Rank}(\Gmat)=k_2$.

First, to verify Theorem \ref{maintheorem1},  we conduct an experiment with $k=5000$ data dimension, $k_2=20$ hyperplanes, and a diagonal $\Sigmamat$. 
Contour plots of two randomly selected dimensions of the 10,000 random samples simulated with Algorithms \ref{alg:1} and \ref{alg:2} are shown in the top and bottom rows of Figure \ref{fig:A1vsA2CutPatch}, respectively.   
The clear matches between the contour plots of these two different algorithms 
 suggest the correctness of Theorem \ref{maintheorem1}.

\begin{figure}[t]
	\centering
	\subfigure[]{
		\includegraphics[width = 2.2 cm]{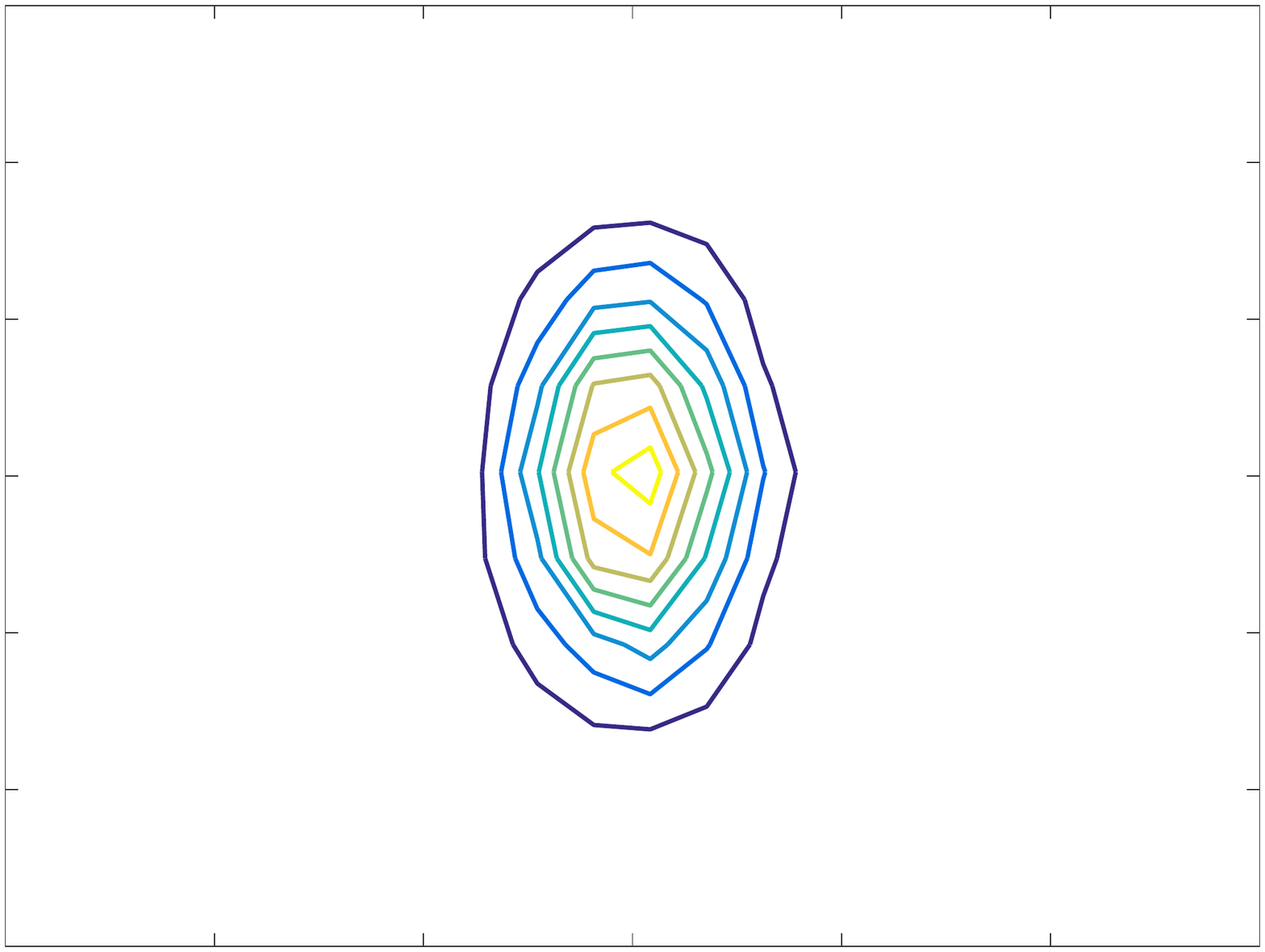}
	}
	\subfigure[]{
		\includegraphics[width = 2.2 cm]{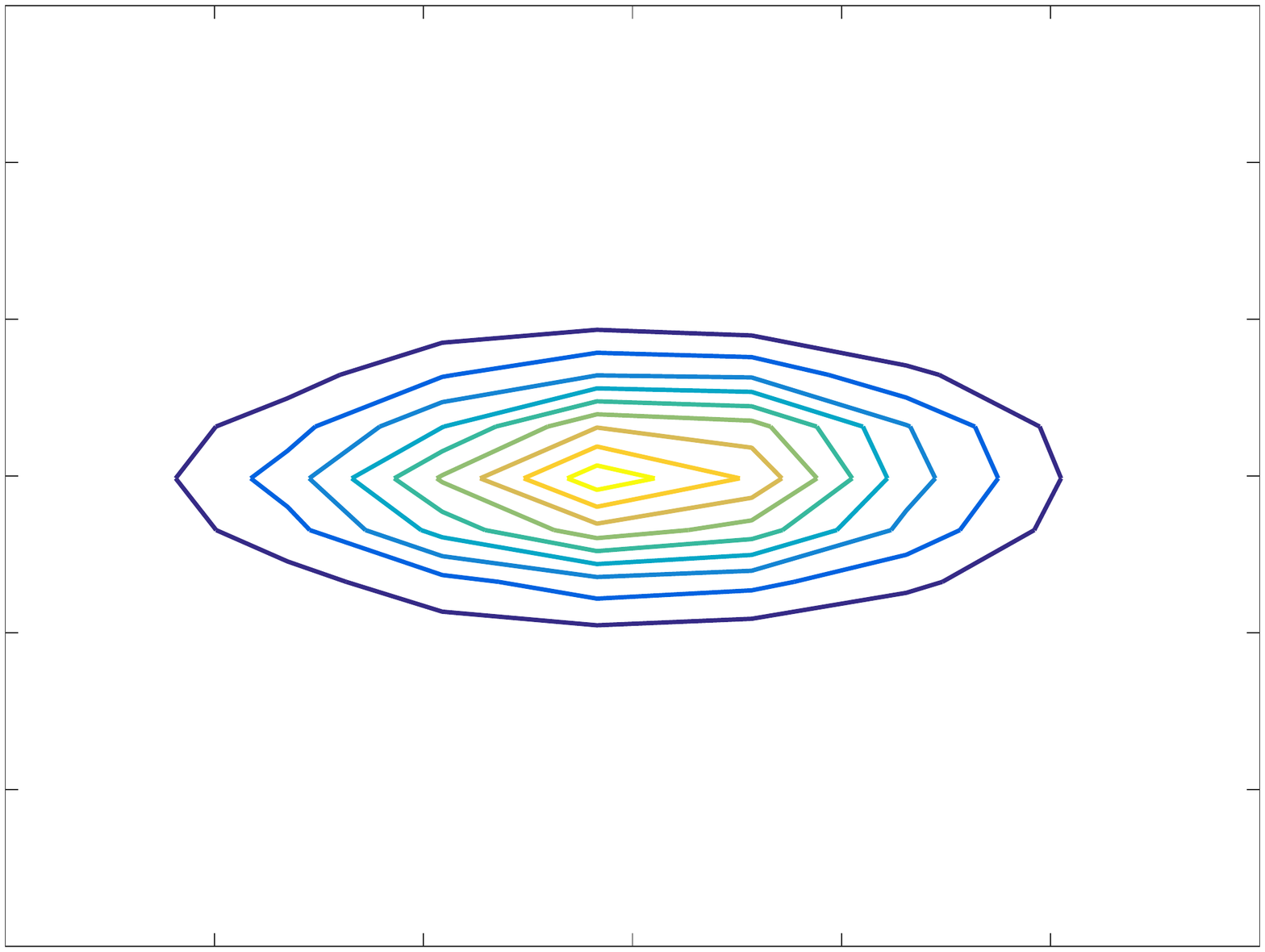}
	}
	\subfigure[]{
		\includegraphics[width = 2.2 cm]{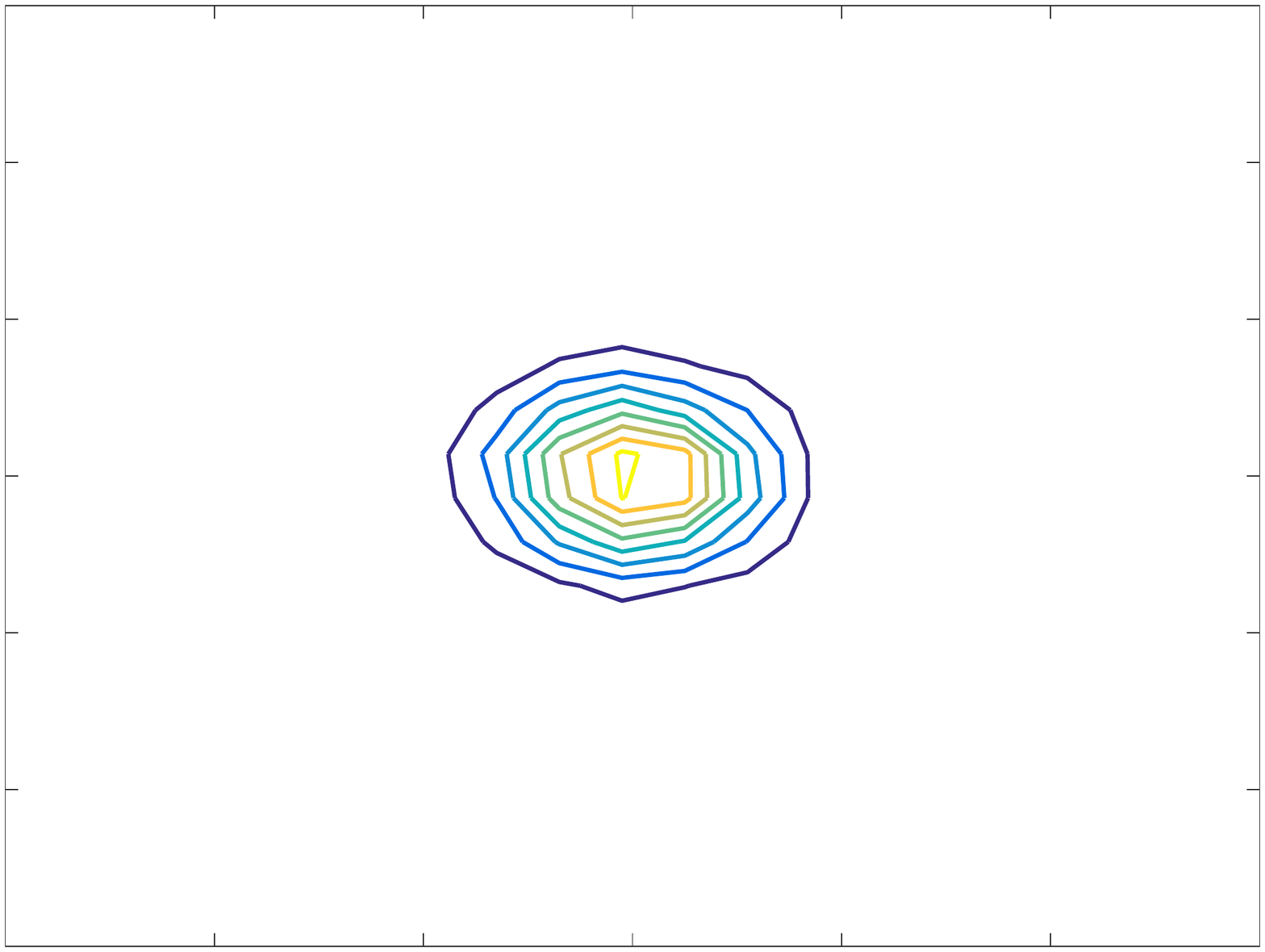}
	}
	\subfigure[]{
		\includegraphics[width = 2.2 cm]{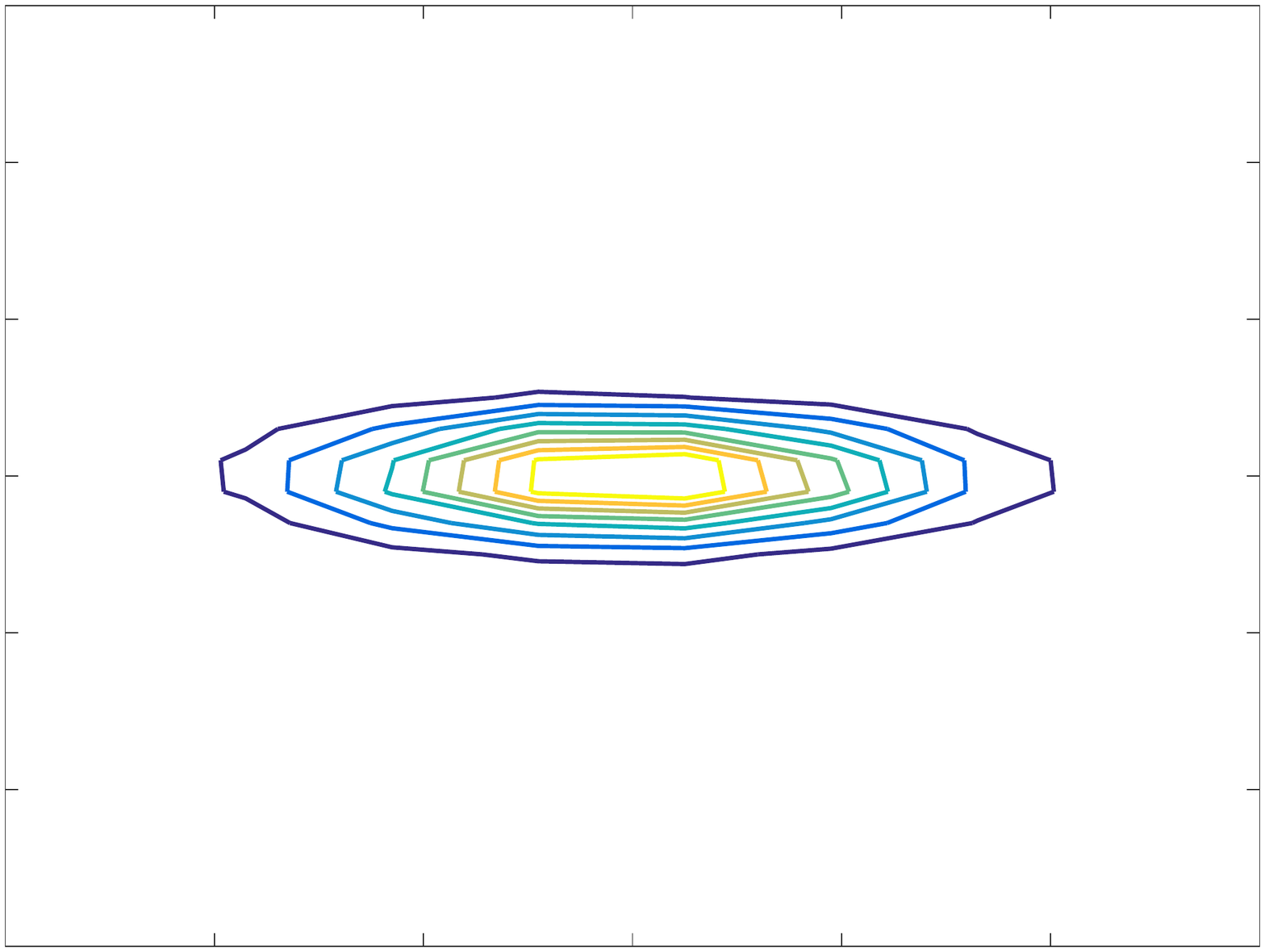}
	}
	\subfigure[]{
		\includegraphics[width = 2.2 cm]{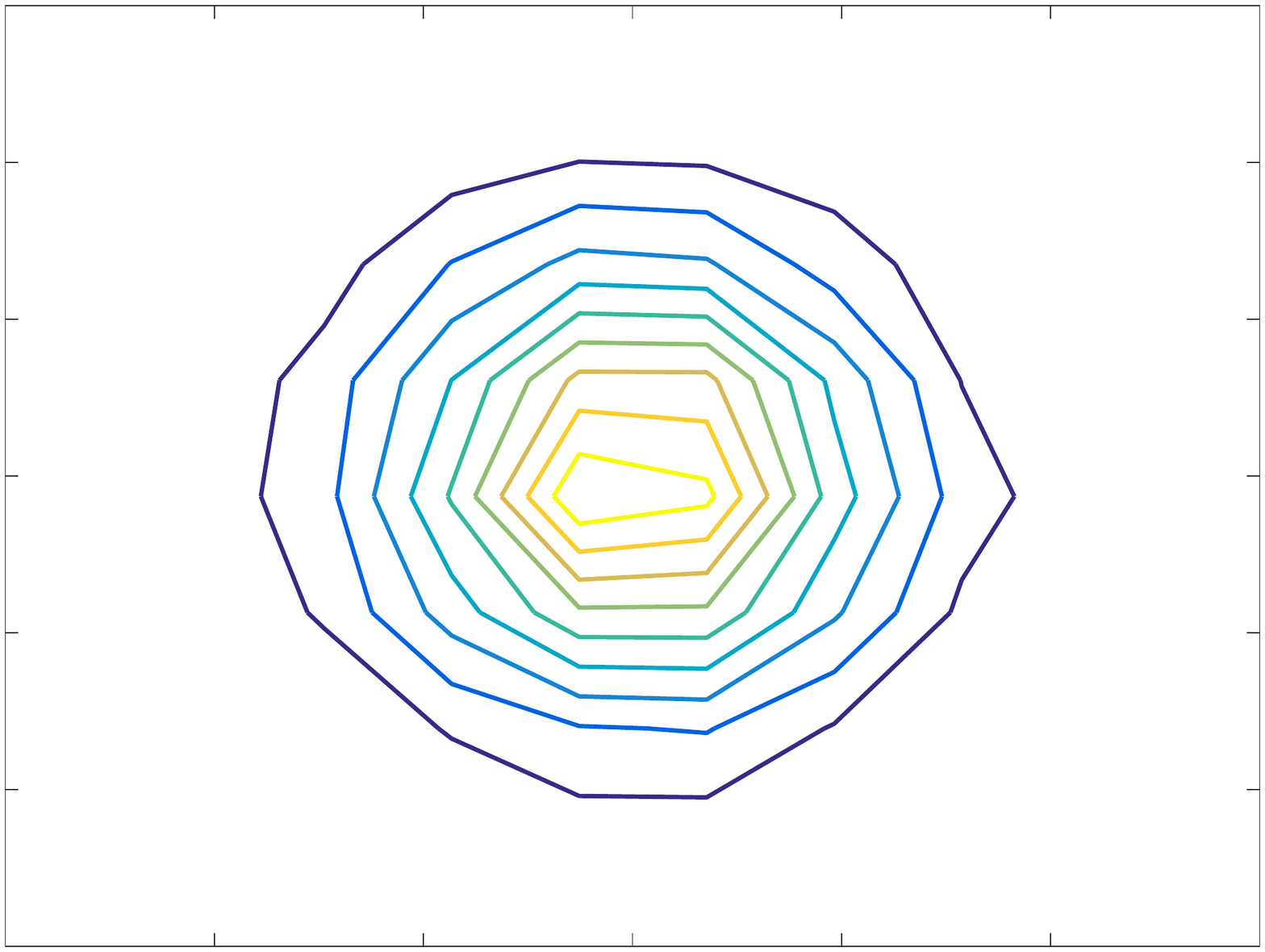}
	}
	\subfigure[]{
		\includegraphics[width = 2.2 cm]{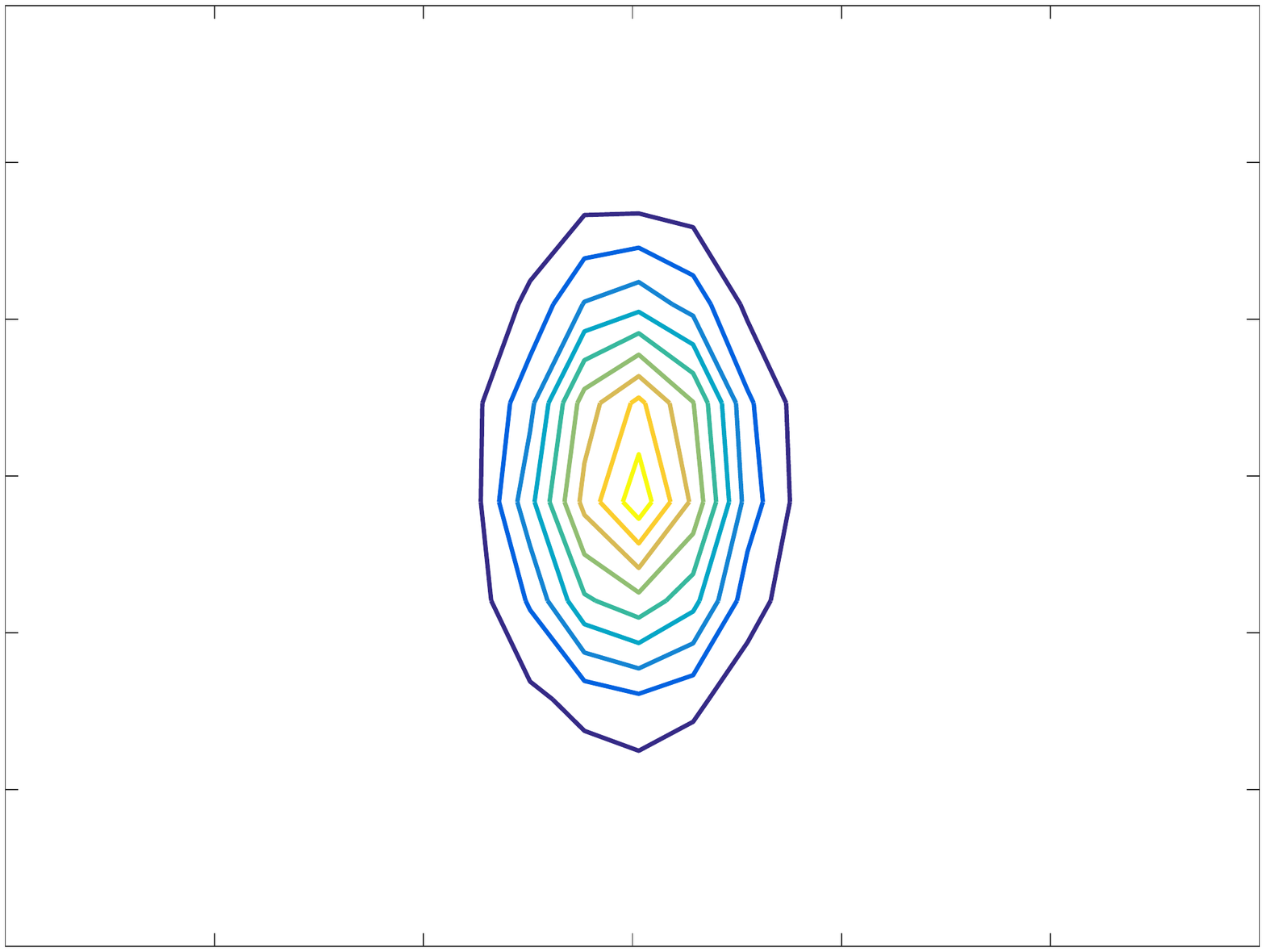}
	}
	\subfigure[]{
		\includegraphics[width = 2.2 cm]{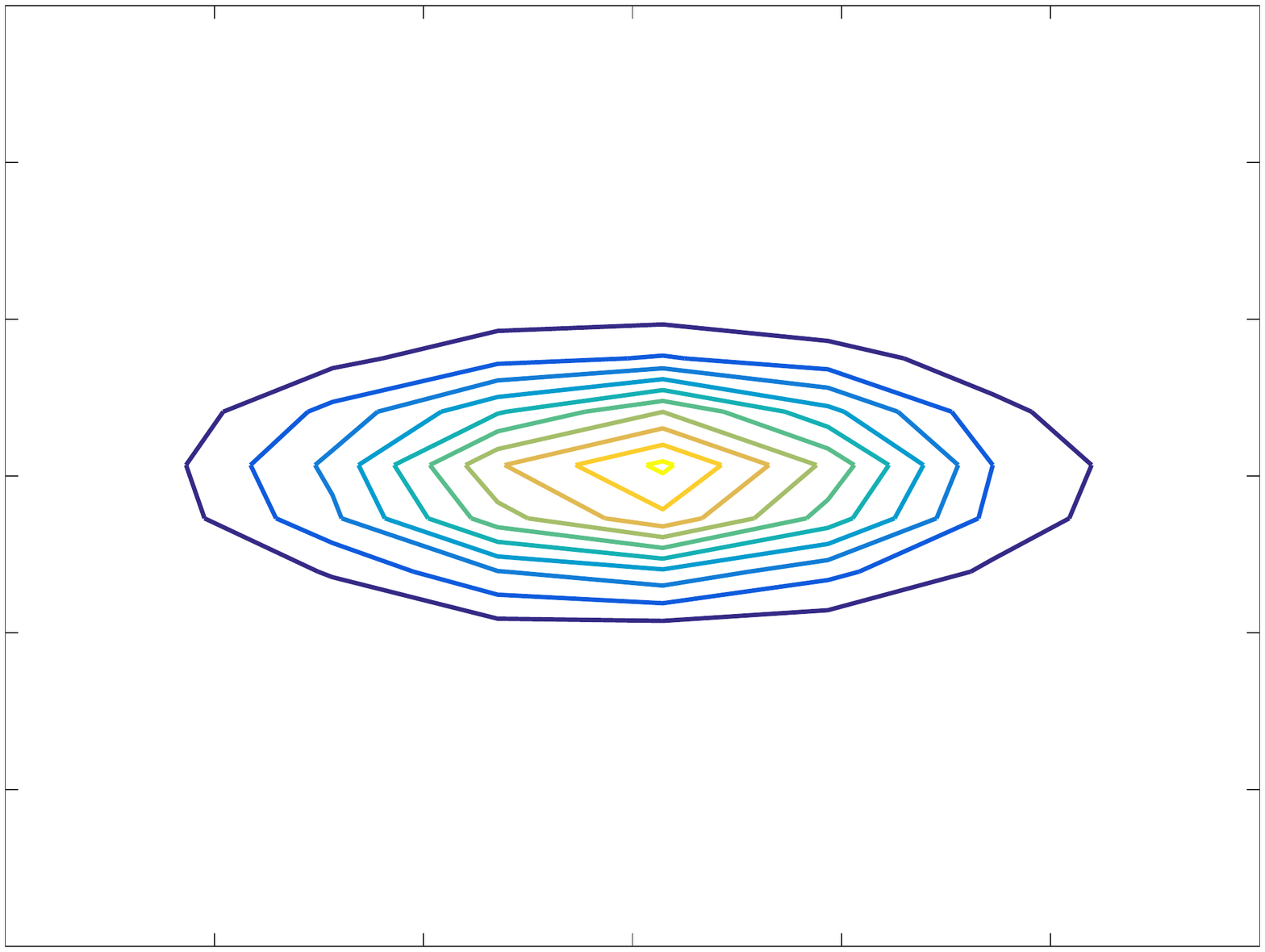}
	}
	\subfigure[]{
		\includegraphics[width = 2.2 cm]{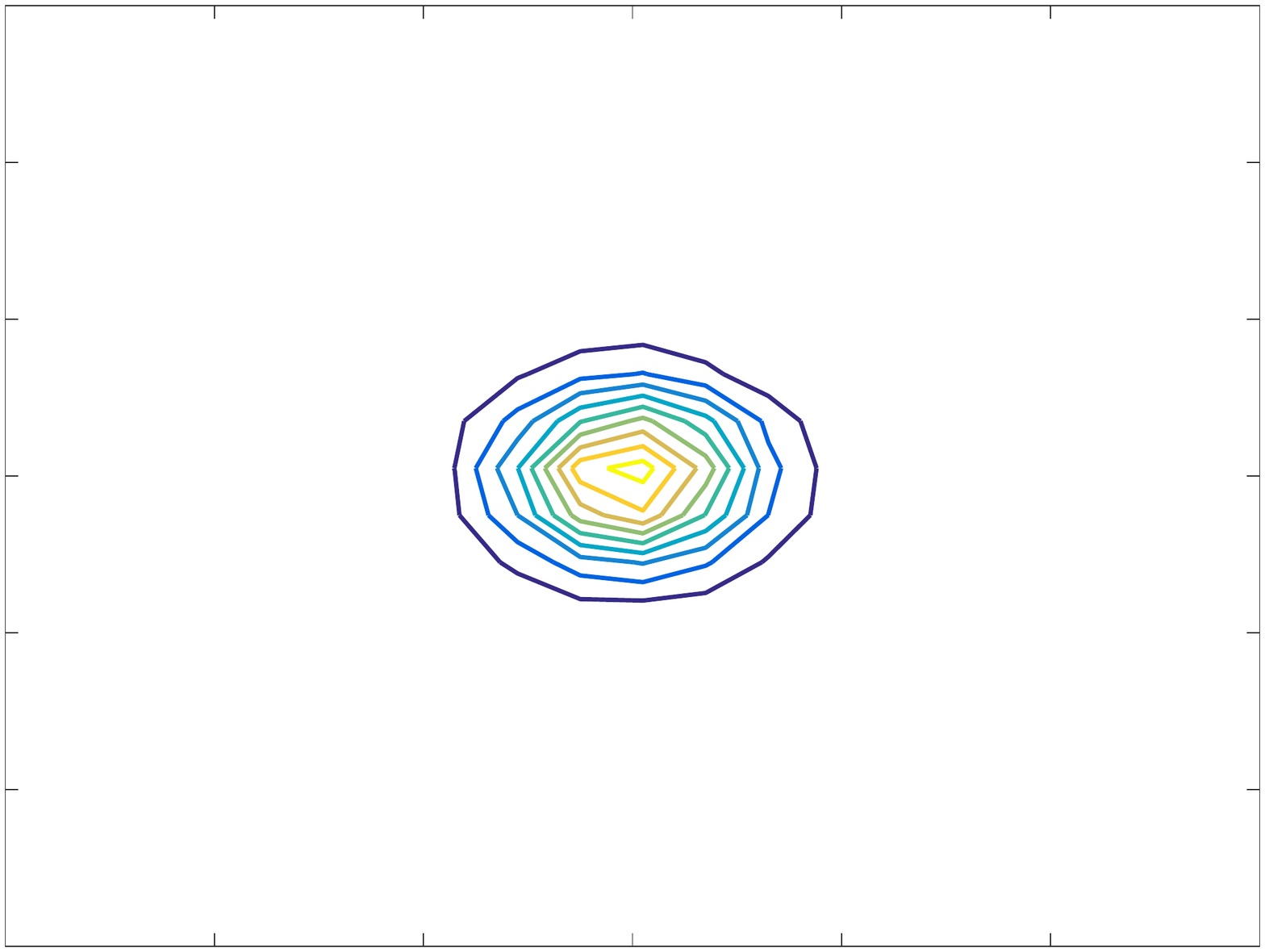}
	}
	\subfigure[]{
		\includegraphics[width = 2.2 cm]{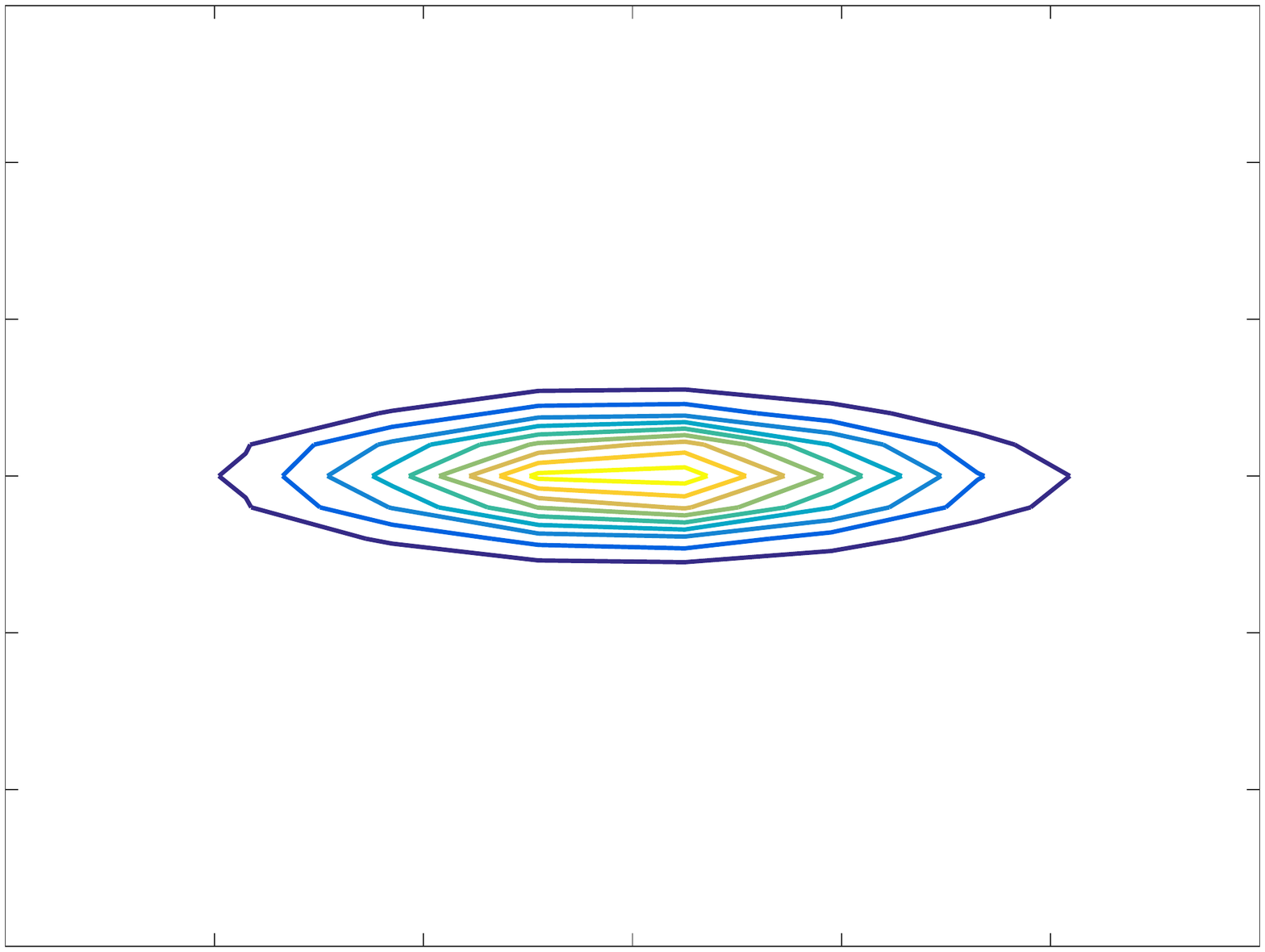}
	}
	\subfigure[]{
		\includegraphics[width = 2.2 cm]{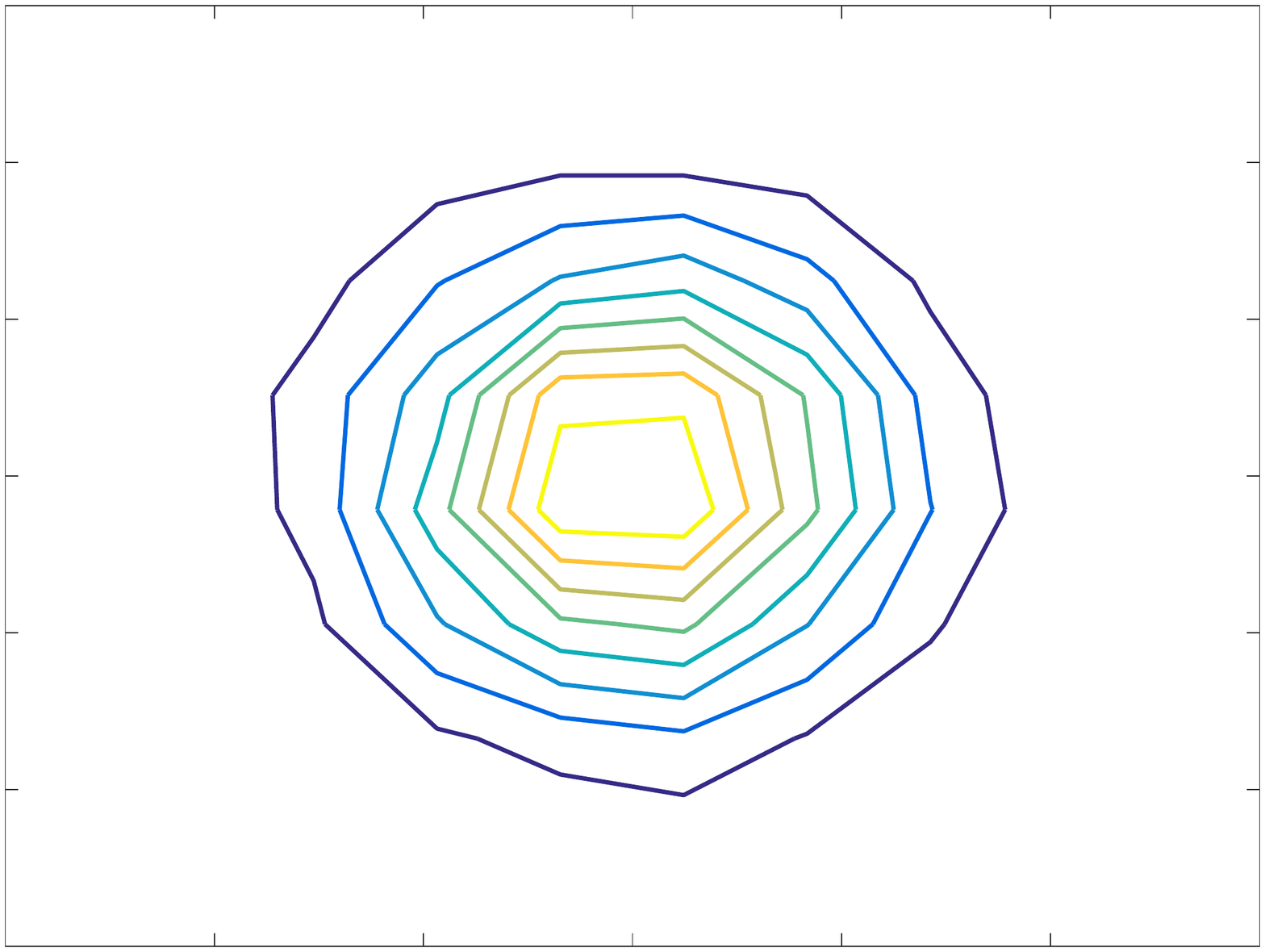}
	}
	\caption{\small Comparison of the contour plots of two randomly selected dimensions of the 10,000 $k=5000$ dimensional random samples simulated with Algorithm \ref{alg:1} (top row) and Algorithm \ref{alg:2} (bottom row). Each of the five columns corresponds to a random trial. 
	}
	\label{fig:A1vsA2CutPatch}
\end{figure}

To demonstrate the efficiency of Algorithm \ref{alg:2}, 
we first carry out a series of experiments 
with the number of hyperplane constraints fixed at $k_2=20$ and the data dimension increased from $k=50$ to $k=5000$.
The computation time of simulating 10,000 samples averaged over five random trials is shown in Figure \ref{fig:A1A2K220GenCov} for non-diagonal $\Sigmamat$'s and in Figure \ref{fig:A1A2K220DiagCov} for diagonal ones. It is clear that, when the data dimension $k$ is high,  Algorithm \ref{alg:2} has a clear advantage over Algorithm \ref{alg:1} by avoiding computing 
unnecessary  intermediate variables, which is especially evident when $\Sigmamat$ is diagonal. 
We then carry out a series of experiments where we vary not only  $k$, but also $k_2$ from $0.1k$ to $0.9k$ for each~$k$. 
As shown in Figure \ref{fig:A1vsA2Time}, 
it is evident that Algorithm \ref{alg:2} dominates 
 Algorithm \ref{alg:1} in all scenarios, which can be explained by the fact that  Algorithm \ref{alg:2}  needs to compute much fewer intermediate variables. 
Also observed is that a larger $k_2$ leads to slower simulation for both algorithms, but to a much lesser extent for Algorithm \ref{alg:2}. 
Moreover, the curvatures of those curves indicate that Algorithm \ref{alg:2} is more practical in a high dimensional setting. 
Note that since Algorithm \ref{alg:2} can naturally exploit the structure of the covariance matrix $\Sigmamat$ for fast simulation, it is clearly more capable of benefiting from having a diagonal or block-diagonal $\Sigmamat$, demonstrated by comparing  Figures \ref{fig:Alg1GenCovVaryK2} and \ref{fig:Alg2GenCovVaryK2} with Figures \ref{fig:Alg1DiagCovVaryK2}  and \ref{fig:Alg2DiagCovVaryK2}. 
All these observations agree with 
our computational complexity analyses for Algorithms  \ref{alg:1}  and  \ref{alg:2}, as  shown in Table \ref{tab:CompAlg1} and  \ref{tab:CompAlg2} of the Appendix, respectively.

\begin{figure}[t]
	\centering
	\subfigure[]{\label{fig:A1A2K220GenCov}
		\includegraphics[width = 4.1 cm]{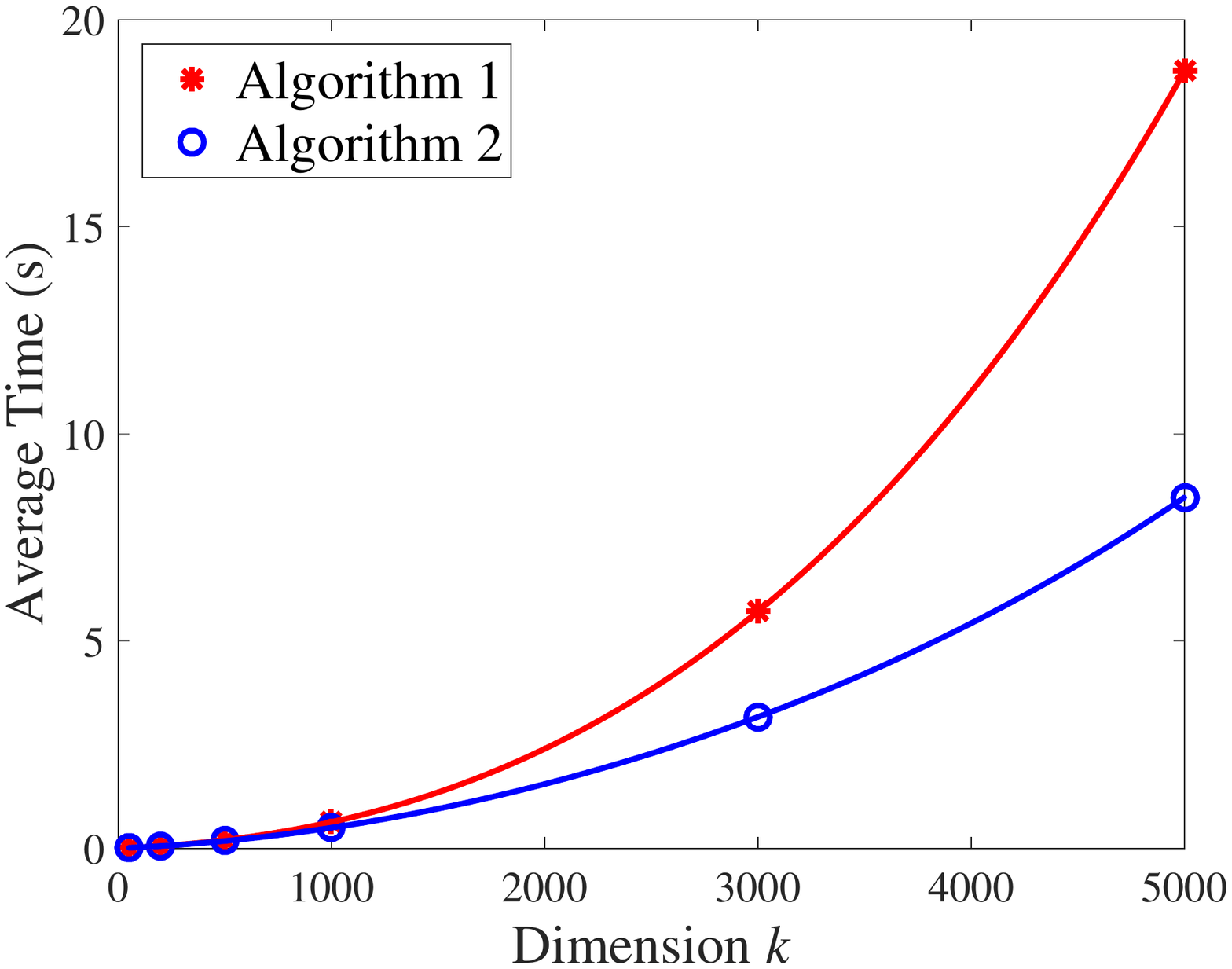}
	}
	\subfigure[]{\label{fig:Alg1GenCovVaryK2}
		\includegraphics[width = 4.1 cm]{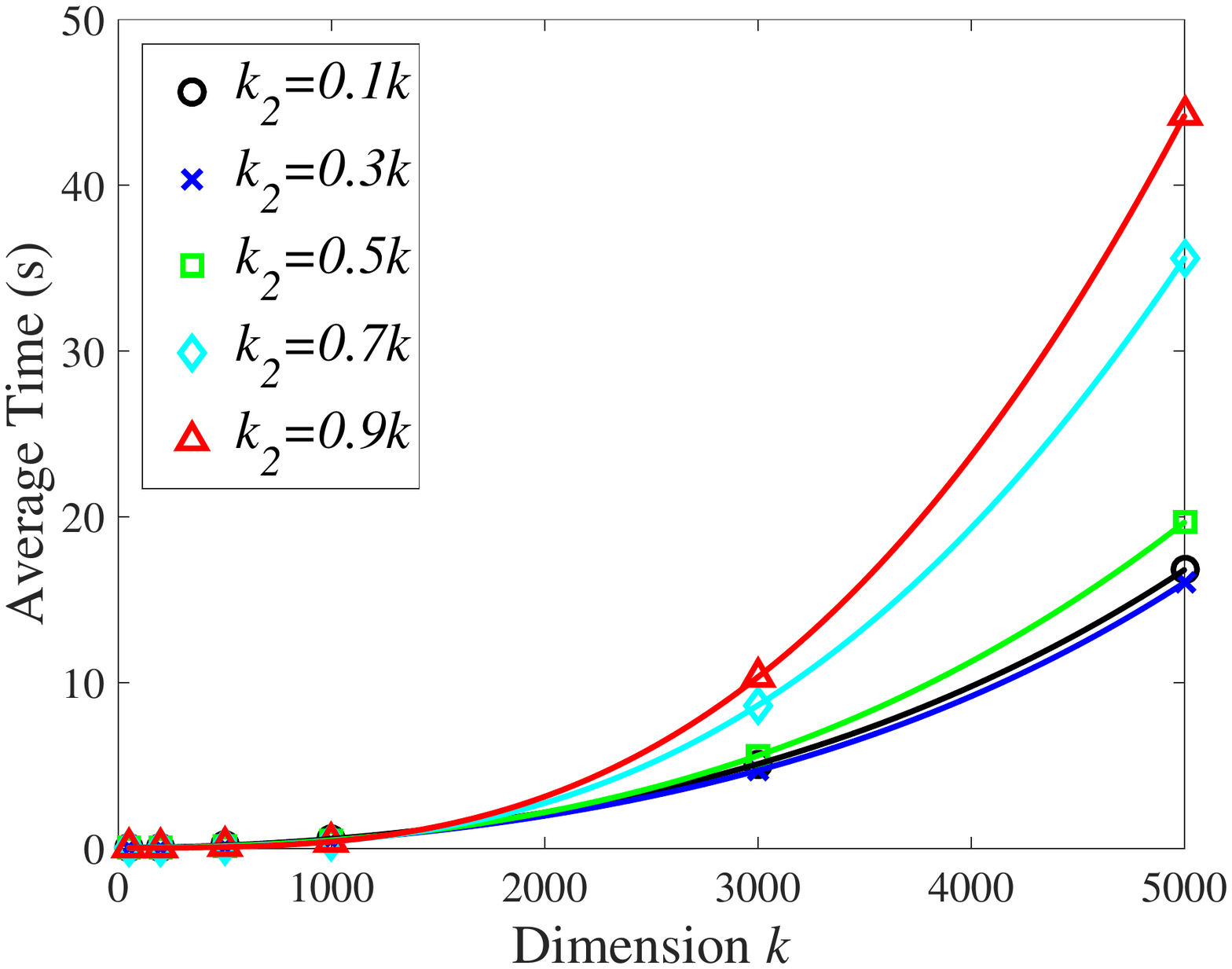}
	}
	\subfigure[]{\label{fig:Alg2GenCovVaryK2}
		\includegraphics[width = 4.1 cm]{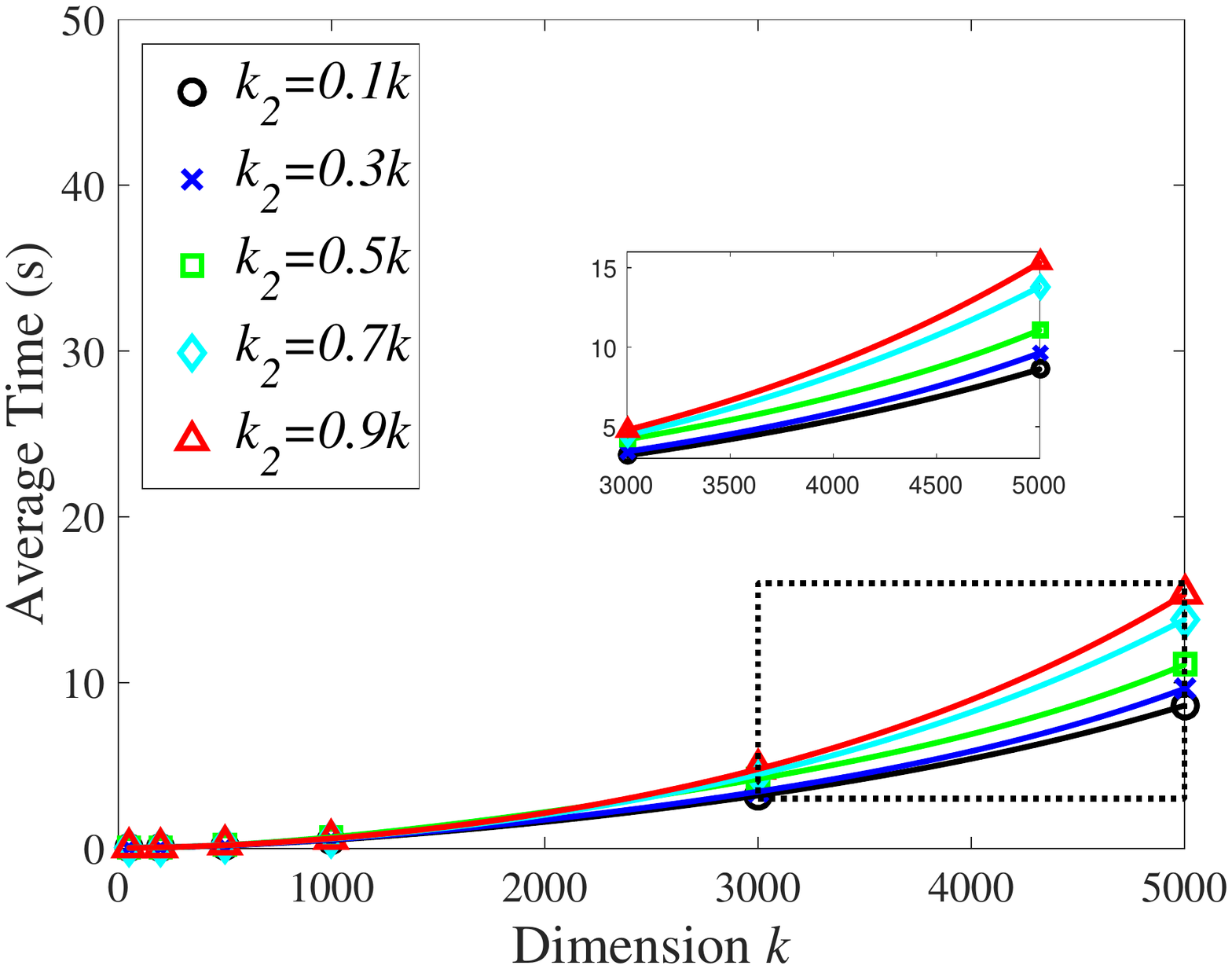}
	}
	\subfigure[]{\label{fig:A1A2K220DiagCov}
		\includegraphics[width = 4.1 cm]{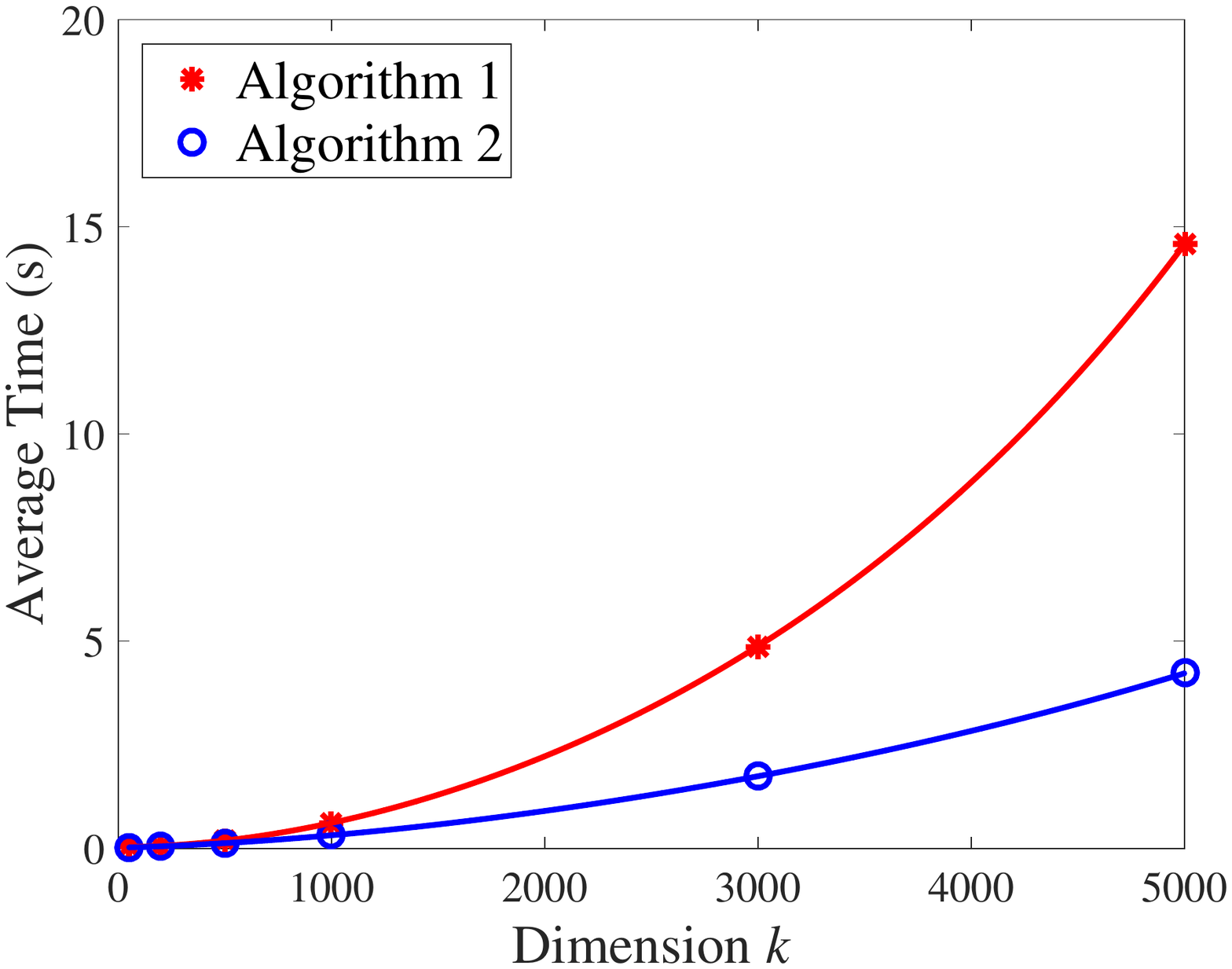}
	}
	\subfigure[]{\label{fig:Alg1DiagCovVaryK2}
		\includegraphics[width = 4.1 cm]{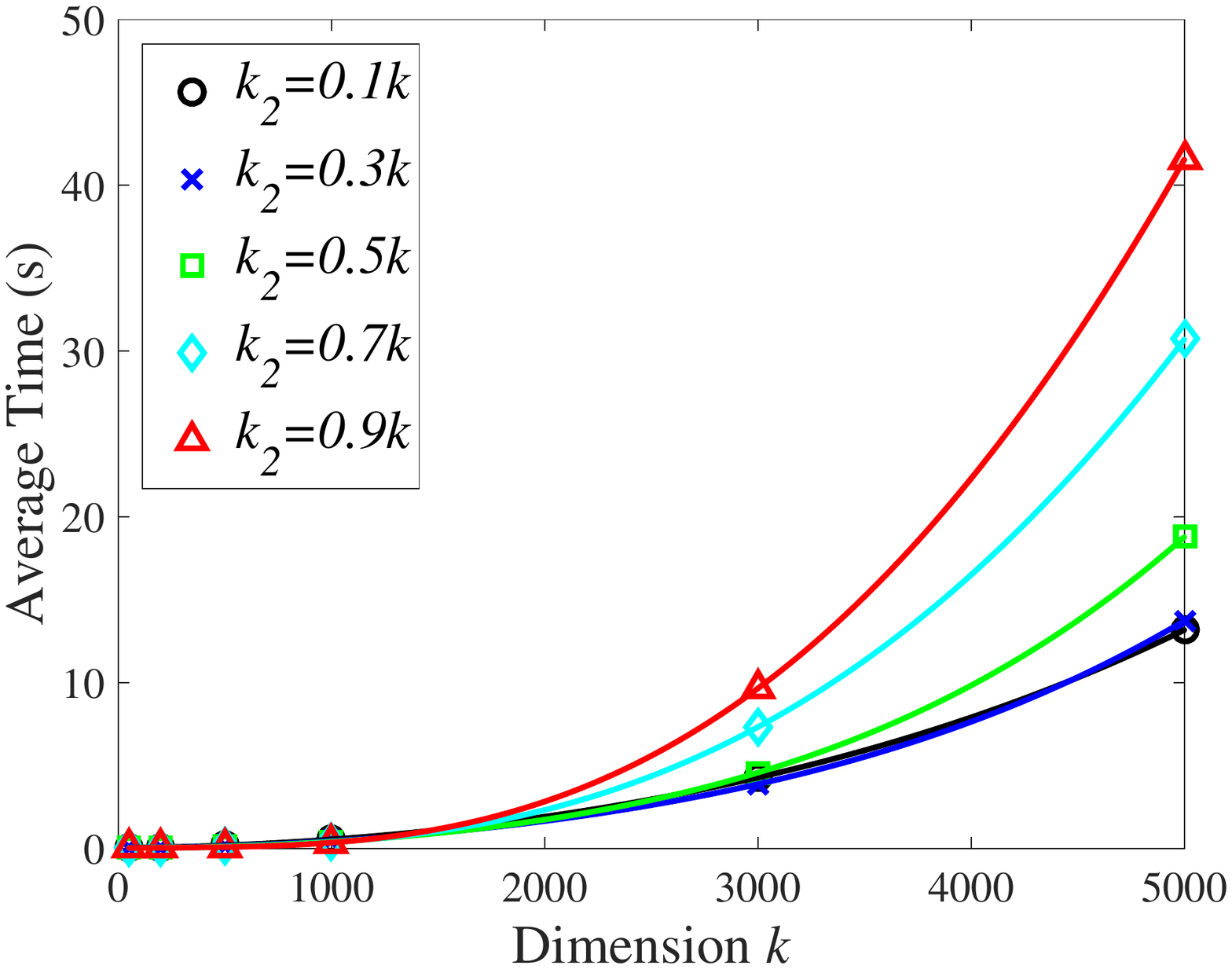}
	}
	\subfigure[]{\label{fig:Alg2DiagCovVaryK2}
		\includegraphics[width = 4.1 cm]{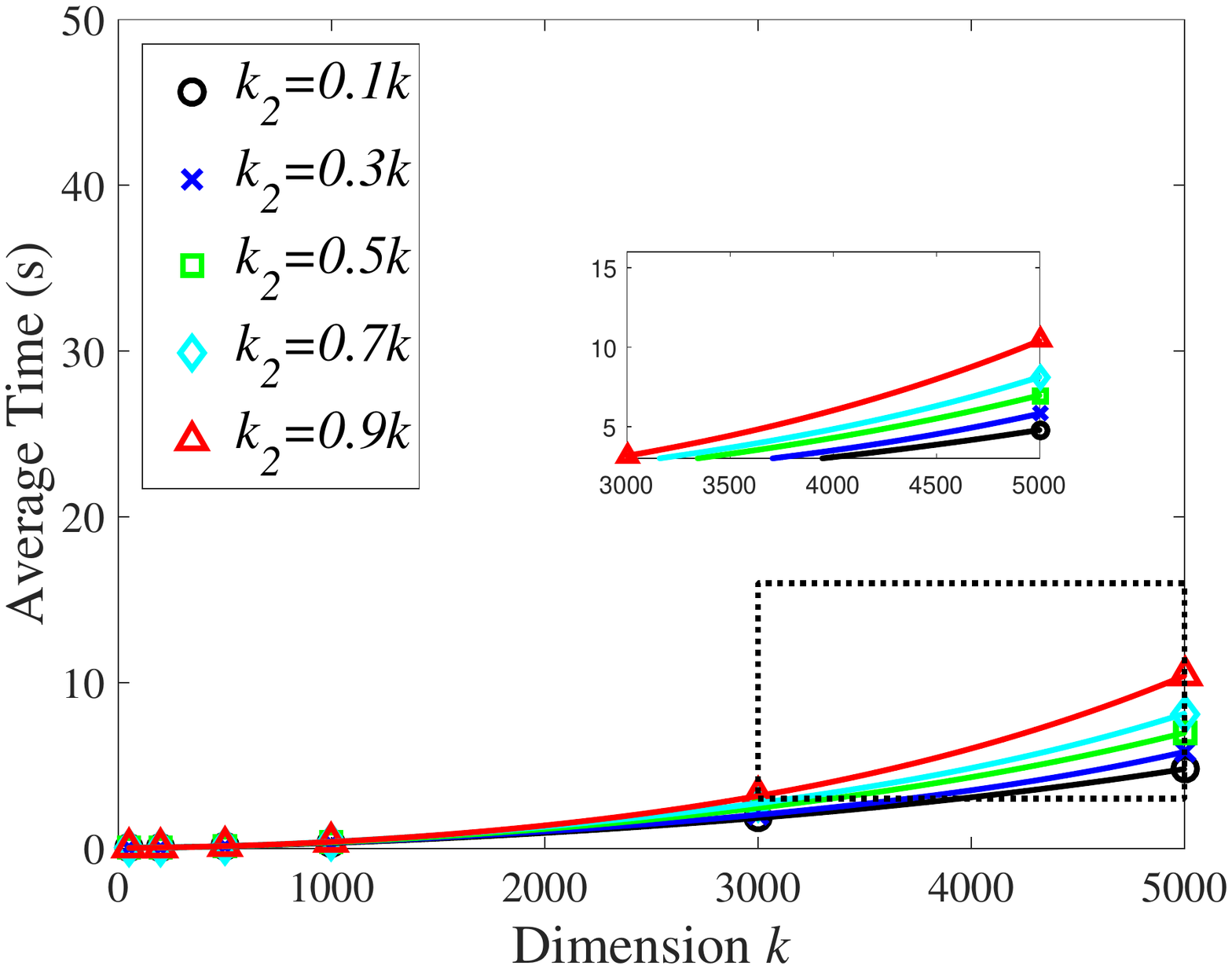}
	}
	\caption{\small Average time of simulating 10,000 hyperplane-truncated MVN samples over five random trials in different dimensions with non-diagonal covariance matrixes (top row) and diagonal ones (bottom row). (a)(d) Comparison with fixed $k_2 = 20$. (b)(e) Algorithm \ref{alg:1} with varying $k_2$. (c)(f) Algorithm~\ref{alg:2} with varying $k_2$.
	}
	\label{fig:A1vsA2Time}
\end{figure}

\subsubsection{A practical application of Algorithm \ref{alg:2}}

In what follows, we extend Algorithm \ref{alg:2} to facilitate simulation from a MVN distribution truncated on a probability simplex $\mathbb{S}^k = \{ \xv : \xv \in \Rbb^k, \bds 1^T \xv = 1, x_i \ge 0, i= 1,\cdots,k \} $.
This problem frequently arises when unknown parameters can be interpreted as fractions or probabilities, for instance, in topic models \citep{blei2003latent}, 
admixture models \citep{pritchard2000inference,dobigeon2009bayesian,bazot2013unsupervised}, and discrete directed graphical models \citep{heckerman1998tutorial}.
With Algorithm \ref{alg:2}, one may remove the equality constraint 
to greatly simplify the problem. 

More specifically, we focus on a big data setting in which the globally shared simplex-constrained model parameters  could be linked to some latent counts via the multinomial likelihood. 
When there are tens of thousands or millions of observations in the dataset, scalable  Bayesian inference  for the  simplex-constrained globally shared model parameters  
is highly desired, for example, for 
inferring the topics' distributions over words  in 
 latent Dirichlet allocation \citep{blei2003latent,OnlineLDA} and Poisson factor analysis \citep{zhou2012beta,GBN}. 

Let us denote the $\kappa$th model parameter vector constrained on a $V$-dimensional simplex by $\phiv_\kappa\in \mathbb{S}^V$, which could be linked to the latent counts $n_{vj\kappa}\in \Zbb$ of the $j$th document under a multinomial likelihood as $(n_{1j\kappa},\ldots, n_{Vj\kappa})\sim \mbox{Mult}(n_{\cdotv j\kappa}, \phiv_\kappa)$, where $\Zbb = \{ 0,1,2,\cdots\}$, $v\in\{1,\ldots,V\}$, $\kappa\in\{1,\ldots,K\}$, and $j\in\{1,\ldots,N\}$. In topic modeling, one may consider $K$ as the total number of latent topics and $n_{vj\kappa}$ as the number of words at the $v$th vocabulary term  in the $j$th document that are associated with  the $\kappa$th latent topic. Note that the dimension $V$ in real applications is often large, such as tens of thousands in topic modeling. 
 Given the observed counts $n_{vj}$ for the whole dataset, in a batch-learning setting, one typically iteratively updates the latent counts $n_{vj\kappa}$ conditioning on $\phiv_\kappa$, and updates $\phiv_\kappa$ conditioning on $n_{vj\kappa}$. 
 
 However, this batch-learning inference procedure would become inefficient and even impractical when the dataset size $N$ grows to a level that makes it too time consuming to finish even a single iteration of updating all local variables $n_{vj\kappa}$.
To address this issue, we consider constructing a mini-batch based Bayesian inference procedure that could make substantial progress in posterior simulation while the batch-learning one may still be waiting to finish a single iteration. 

Without loss of generality, in the following discussion, we drop the latent factor/topic index $\kappa$ to simplify the notation, focusing on the update of a single simplex-constrained global parameter vector. 
More specifically, 
we let  the latent local count vector $\nv_j =(n_{1j},\ldots,n_{Vj})^T$ be linked to the simplex-constrained global parameter vector $\phiv \in \mathbb{S}^V$   via the multinomial likelihood as $\nv_j \sim \Mult \left( n_{\cdotv j}, \phiv \right)$, and impose  a Dirichlet distribution prior on $\phiv$ as $\phiv \sim \Dir \left( \eta \mathbf{1}_V\right)$. 

Instead of waiting for all $\nv_j$ to be updated before performing a single update of $\phiv$, we develop a mini-batch based Bayesian inference algorithm under a general framework for constructing stochastic gradient Markov chain Monte Carlo (SG-MCMC) \citep{ma2015complete}, allowing 
$\phiv$ to be updated every time a mini-batch of $\nv_j$ are processed. 
For the sake of completeness, we concisely describe the derivation for a SG-MCMC algorithm, as outlined  below, for simplex-constrained globally shared model parameters. We refer the readers to 
\citet{TLASGR}
 for more details on the derivation and 
  its application to scalable inference for topic modeling. 

Using the reduced-mean parameterization of the simplex constrained vector $\phiv$, namely $\varphiv = (\phi_1,\cdots,\phi_{V-1})^T$, where $\varphiv \in \mathbb{R}_{+}^{V-1}$ is constrained with $\varphi_{\cdotv} \le 1$, 
we develop a SG-MCMC algorithm that updates $\varphiv$ for the $t$th mini-batch as
\beq \label{eq:upvarphi} 
	\varphiv_{t + 1} \!=\! \left[ 
	\varphiv_t \!+\! \frac{\varepsilon _t}{M}
	\!\left[ \left( \rho \bar \nv_{: \cdotv} \!+\! \eta \right) \!-\! \left(\rho n_{\cdotv \cdotv} \!+\! \eta V\right) \varphiv_t \right]
	\!+\! \Nc \left( \bds 0,\frac{2\varepsilon _t}{M} \!\left[ \diag \left( \varphiv_t \right) \!-\! \varphiv_t \varphiv_t ^T \right] \right)
	\right]_{\triangle} \!,
\eeq
where $\varepsilon _t$ are annealed step sizes, $\rho$ 
is the ratio of the dataset size $N$ to the mini-batch size, 
\zz $\nv_{: \cdotv}=(\nv_{1 \cdotv},\cdots,\nv_{V \cdotv})^T = \sum_{j\in I_t} \nv_{j}$, $\bar \nv_{: \cdotv} = (\nv_{1 \cdotv},\cdots,\nv_{(V-1) \cdotv})^T$, $[\cdotv]_{\triangle}$ denotes the constraint that $\varphiv \in \mathbb{R}_{+}^{V-1}$ and $\varphi_{\cdotv} \le 1$, and $M := \Ebb \left[\sum_{j=1}^N  n_{\cdotv j} \right]$ is approximated along the updating using $ M = \left( 1 - \varepsilon_t \right) M + {\varepsilon_t} \rho \Ebb \left[ n_{ \cdotv \cdotv } \right] $. 
Alternatively, we have an equivalent update equation for $\phiv$ as
\beq \label{eq:upphi} 
\phiv_{t + 1} \!=\! \left[ 
\phiv_t \!+\! \frac{\varepsilon _t}{M}
\!\left[ \left( \rho \nv_{: \cdotv} \!+\! \eta \right) \!-\! \left(\rho n_{\cdotv \cdotv} \!+\! \eta V\right) \phiv_t \right]
\!+\! \Nc \left( \bds 0,\frac{2\varepsilon _t}{M} \diag \left( \phiv_t \right)  \right)
\right]_{\angle} \!,
\eeq
where $[\cdotv]_{\angle}$ represents the constraint that $\phiv \in \Rbb_{+}^{V}$ and $\bds 1 ^T \phiv = 1$.

It is clear that \eqref{eq:upvarphi} corresponds to simulation of a $V-1$ dimensional truncated MVN distribution with $V$ inequality constraints. 
Since the number of constraints is larger than the dimension, previously proposed iterative simulation methods such as the one in \citet{botev2016normal} are often inappropriate. 
Note that, 
by omitting the non-negative constraints, 
the update in \eqref{eq:upphi} corresponds to simulation of  a hyperplane-truncated MVN simulation with a diagonal covariance matrix, which can be efficiently sampled as described in the following example.

\noindent\textbf{Example 1:} \textit{
	Simulation of a hyperplane-truncated MVN distribution as
	$$
	\xv\sim\Nor_{\mathcal{S}}[\muv,a\,\diag(\phiv)],~~\mathcal{S}=\left\{\xv: \mathbf{1}^T \xv = 1\right\},
	$$
	where $\xv\in\mathbb{R}^{k}$, $\muv\in\mathbb{R}^{k}$, $\mathbf{1}^T\xv =\sum_{i=1}^k x_i$, $\phiv\in\mathbb{R}^{k}$, $a>0$, $\phi_i > 0$ for $i \in\{ 1,\cdots,k\}$, and $\mathbf{1}^T \phiv = \sum_{i=1}^k \phi_i=1$,
	can be realized as follows.
	\begin{itemize}
		\item Sample $\yv\sim\Nor [ \muv , a \diag(\phiv) ]$;
		\item Return $\xv = \yv + ( 1 - \mathbf{1}^T \yv) \phiv$.
	\end{itemize}
}

\noindent The sampling steps in Example 1 directly follow Algorithm \ref{alg:2} and Theorem~\ref{maintheorem1} with the distribution parameters specified as
$\Sigmamat = a \diag (\phiv)$, $\Gmat = \bds 1^T$, and $\rv = 1$.
Accordingly, we present the following fast sampling procedure for \eqref{eq:upvarphi}.


\noindent\textbf{Example 2:} \textit{
	Simulation from \eqref{eq:upvarphi} can be approximately but rapidly realized as
	\begin{itemize}
		\item Sample $\yv \sim \Nor \big[ \phiv_t + \frac{\varepsilon _t}{M} \left[ \left( \rho \nv_{: \cdotv} + \eta \right) - \left(\rho n_{\cdotv \cdotv} + \eta V\right) \phiv_t \right] ,\frac{2\varepsilon _t}{M} \diag \left( \phiv_t \right) \big]$;
		\item Calculate $\zv = \yv + ( 1 - \mathbf{1}^T \yv) \phiv_t$;
		\item If $\zv \in \mathbb{S}$, return $\varphiv_{t+1} = (z_1,\cdots,z_{V-1})^T$; else
		calculate $\dv = \max(\epsilon,\zv)$ with a small constant $\epsilon \ge 0$, let $\ev = \dv / \sum\nolimits_{i=1}^V d_i$, and return $\varphiv_{t+1} = (e_1,\cdots,e_{V-1})^T$.
	\end{itemize}
}

To verify Example 2, we conduct an experiment using multinomial-distributed data vectors of $V=2000$ dimensions, which are generated as follows: 
considering that the simplex-constrained vector $\phiv$ is usually sparse in a high-dimensional application, we  sample a $V=2000$ dimensional vector $\fv$ whose elements are uniformly distributed between 0 and 1, randomly select 40 dimensions and reset their values to be 100, and set $\phiv = \fv / \sum\nolimits_{i=1}^V f_i$; we simulate $N=10,000$ samples, each $\nv_j$ of which is generated from the multinomial distribution $\mbox{Mult}(n_{\cdotv j}, \phiv)$, where the number of trials is random and generated as $n_{\cdotv j} \sim \Pois (50)$. 
We set $\varepsilon_t = t^{-0.99}$ and use mini-batches, each of which consists of 10 data samples, to stochastically update global parameters  
 via SG-MCMC. 

For comparison, we choose the same SG-MCMC inference procedure but consider  simulating \eqref{eq:upvarphi}, as performed every time a mini-batch of data samples are provided,  either as in Example 2 or with the Gibbs sampler of \citet{rodriguez2004efficient}.
Simulating \eqref{eq:upvarphi} with the Gibbs sampler of \citet{rodriguez2004efficient} is realized by updating all the $V$ dimensions, one dimension at a time, in each Gibbs sampling iteration. We set the total number of Gibbs sampling iterations for  \eqref{eq:upvarphi}  in each mini-batch based update as 1, 5, or 10.  Note that in practice, the $\nv_j$ belonging to the current mini-batch are often latent and are updated conditioning on the data samples in the mini-batch and $\phiv$. For simplicity,  all $\nv_j$ here are simulated once and then fixed.  



\begin{figure}[!t]
	\centering
	\subfigure[]{\label{fig:ResErrIterationComp}
		\includegraphics[height = 4.6 cm]{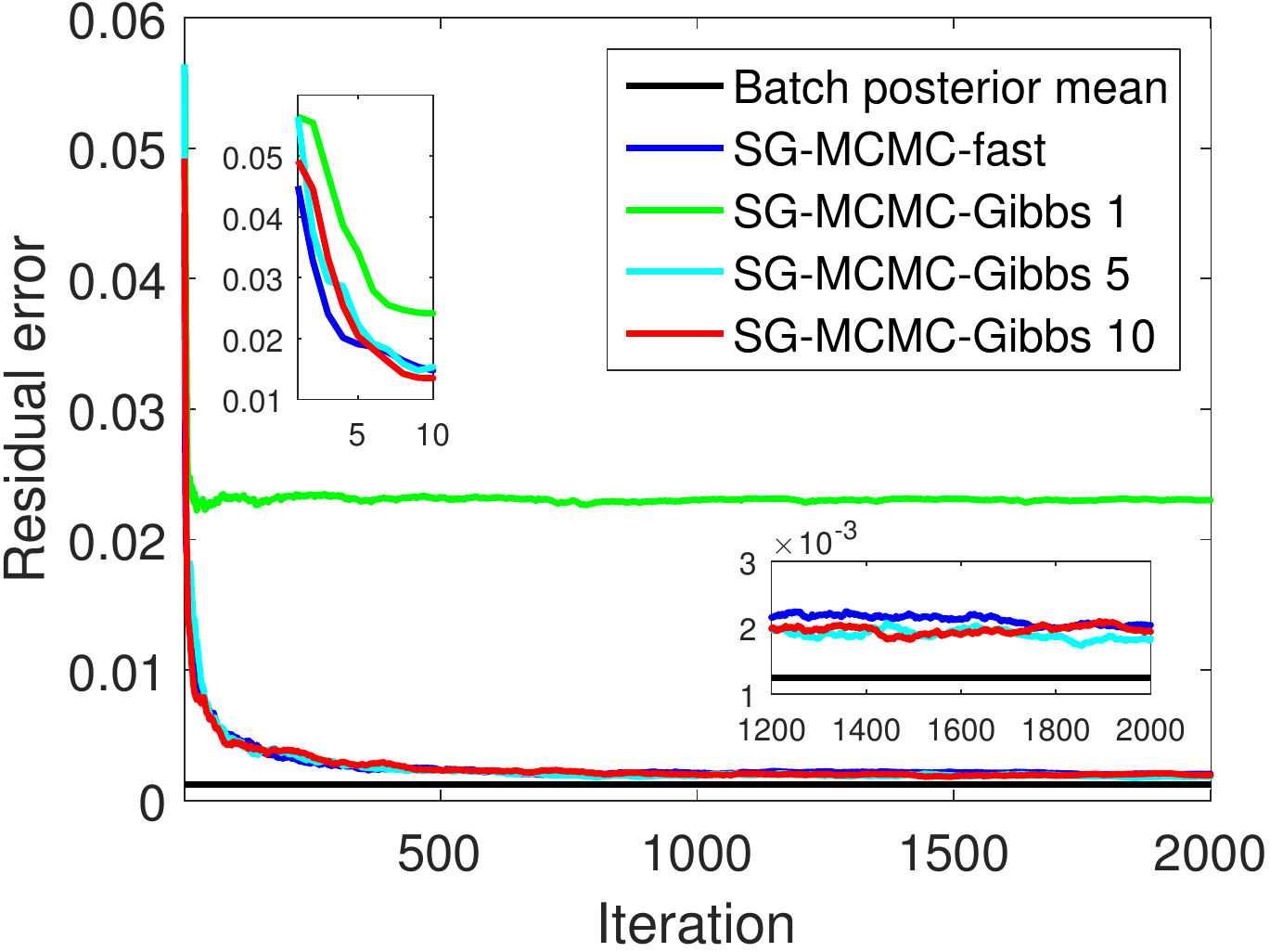}
	}
	\subfigure[]{\label{fig:ResErrTimeComp}
		\includegraphics[height = 4.6 cm]{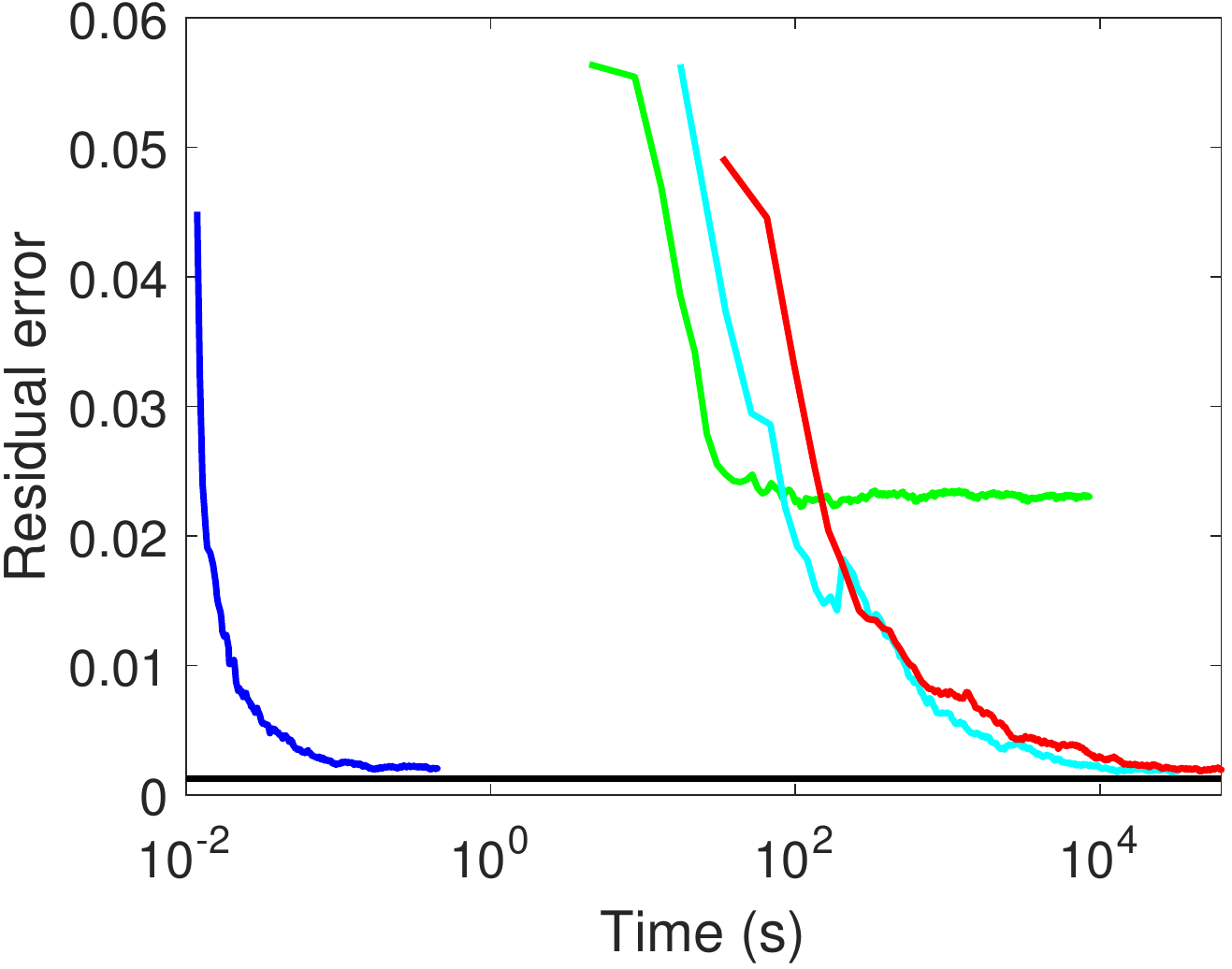}
	}
	\caption{\small Comparisons of the residual errors of the simplex-constrained parameter vector, estimated under various settings of the stochastic-gradient MCMC (SG-MCMC) algorithm, as a function of (a) the number of processed mini-batches and (b) time.
		The curves labeled as ``Batch posterior mean'', ``SG-MCMC-fast'', and ``SG-MCMC-Gibbs'' correspond to the batch posterior mean, SG-MCMC with \eqref{eq:upvarphi} simulated as in Example 2, and SG-MCMC with \eqref{eq:upvarphi} simulated with the Gibbs sampler of  \citet{rodriguez2004efficient},  respectively. 
		 The digit following ``SG-MCMC-Gibbs'' represents the number of Gibbs sampling iterations to simulate  \eqref{eq:upvarphi} for each mini-batch. 
	}
	\label{fig:ResErrComp}
\end{figure}

Using $\phiv_{post}^* = ( \sum_{j=1}^N \nv_{j} + \eta ) / ( \sum_{j=1}^N n_{\cdotv j} + \eta V ) $, the posterior mean of $\phiv$ in a batch-learning setting, as the reference, we show in Figure \ref{fig:ResErrComp} how the residual errors for the estimated $\phiv^*$, defined as $\left\| \phiv^* - \phiv \right\|_2 $, change both as a function of the number of processed mini-batches and as a function of computation time under various settings of the mini-batch based SG-MCMC algorithm. 
The curves shown in Figure \ref{fig:ResErrComp} suggest that for each mini-batch, to simulate \eqref{eq:upvarphi} with the Gibbs sampler of \citet{rodriguez2004efficient},  it is necessary to have more than one Gibbs sampling iteration to achieve satisfactory results. 
It is clear from Figure \ref{fig:ResErrIterationComp} that the Gibbs sampler with 5 or 10 iterations for each mini-batch, even though each mini-batch has only 10 data samples, provides residual errors that quickly approach that of the batch posterior mean with a tiny gap, indicating the effectiveness of the SG-MCMC updating in \eqref{eq:upvarphi}.  
While simulating \eqref{eq:upvarphi} with Gibbs sampling could in theory lead to unbiased samples if the number of Gibbs sampling iterations is large enough, it is much more efficient to simulate \eqref{eq:upvarphi} with  the procedure described in Example 2, which provides a performance that is undistinguishable from those of the Gibbs sampler with as many as 5 or 10 iterations for each mini-batch, but at the expense of a tiny fraction of a single Gibbs sampling iteration. 

\subsection{Simulation of MVNs with structured covariance matrices}

To illustrate Corollary \ref{cor2}, we mimic the truncated MVN simulation in \eqref{eq:upvarphi} and present the following simulation example with a structured covariance matrix. 

\noindent\textbf{Example 3:} 
Simulation of a MVN distribution as
$$
\xv_1\sim\Nor [\muv_1,a\,\diag(\phiv_1)-a\,\phiv_1\phiv_1^T],
$$
where $\xv_1\in\mathbb{R}^{k-1}$, $\muv_1\in\mathbb{R}^{k-1}$, $a>0$, $\phiv_1=(\phi_1,\ldots,\phi_{k-1})^T$, $\phi_{i}>0$ for $i\in\{1,\ldots,k-1\}$, and $\sum_{i=1}^{k-1}\phi_i <1$, can be realized as follows.
\begin{itemize}
	\item Sample $\yv_1\sim\Nor [ \mathbf{0} , a\,\diag(\phiv_1) ]$ and $\yv_2\sim\Nor ( 0 , a^{-1}\,\phi_k )$, where $\phi_k=1- \sum_{i=1}^{k-1} \phi_i$; 
	\item Return $\xv_1 = \muv_1+\yv_1 - (\mathbf{1}^T \yv_1 + a\yv_2) \phiv_1$. 
\end{itemize}
Denoting $\xv=(\xv_1^T,x_{k})^T$, $\phiv=(\phiv_1^T,\phi_{k})^T$, $\muv=(\muv_1^T,\mu_{k})^T$, and $\mu_k=1-\mathbf{1}^T \muv_1$,
the above sampling steps can also be equivalently expressed as follows.
\begin{itemize}
	\item Sample $\yv\sim\Nor [ \muv , a\,\diag(\phiv) ]$; 
	\item Return $\xv_1 = \yv_1 + (1- \mathbf{1}^T \yv ) \phiv_1$. 
\end{itemize}

Directly following Algorithm \ref{alg:3} and Corollary \ref{cor2}, the first sampling approach for the above example can be derived by specifying
the distribution parameters as
$\Sigmamat_{11} = a \diag (\phiv_1)$, $\Sigmamat_{12} = \phiv_1$, $\Sigmamat_{21} = \phiv_1^T$, and $\Sigmamat_{22} = a^{-1}$, while the second approach can be derived by specifying 
$\Sigmamat_{11} = a \diag (\phiv_1)$, $\Sigmamat_{12} = a\phiv_1$, $\Sigmamat_{21} = a\phiv_1^T$, and $\Sigmamat_{22} = a$.

To illustrate the efficiency of the proposed algorithms in Example 3, 
we simulate from the MVN distribution $\xv_1\sim\Nor [\muv_1,a\,\diag(\phiv_1)-a\,\phiv_1\phiv_1^T]$
using both a naive implementation via Cholesky decomposition of the covariance matrix and the fast simulation algorithm for a hyperplane-truncated MVN random variable described in Example 3. 
We set the dimension from $k = 10^2$ 
up to $k=10^4$ and set $\muv = (1/ k,\ldots,1/k)$ and $a = 0.5$. For each $k$ and each simulation algorithm, we perform 100 independent random trials, in each of which $\phiv$ is sampled from the Dirichlet distribution $\Dir ( 1,\ldots,1)$ and 10,000 independent random samples are simulated using that same $\phiv$. 

\begin{figure}[!t]
	\centering
	\includegraphics[width = 6.3 cm]{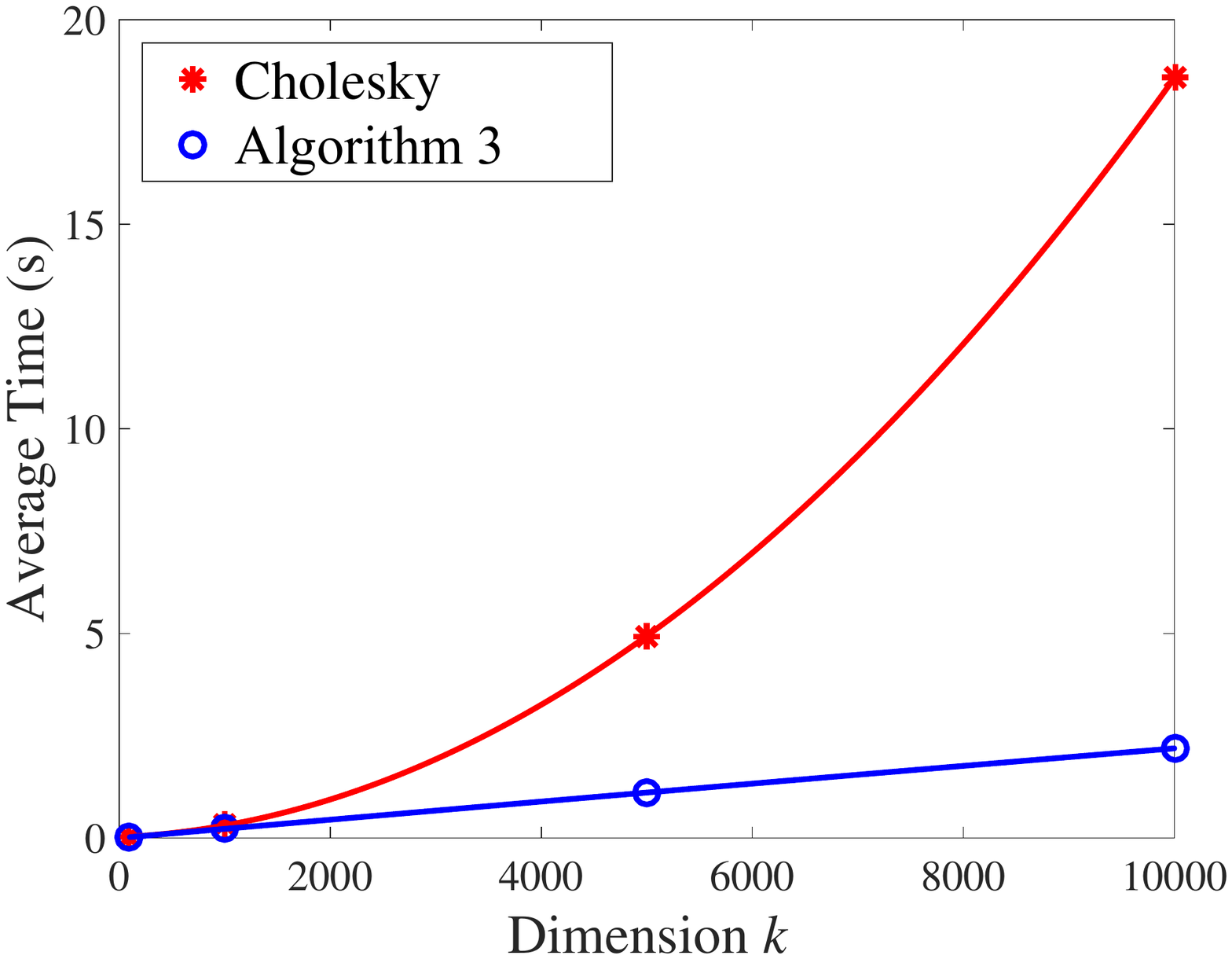}
	\caption{\small Comparison of the naive Cholesky decomposition based implementation and Algorithm \ref{alg:3} in terms of the average time of generating 10,000 $k$-dimensional random samples from $
		\xv_1\sim\Nor [\muv_1,a\,\diag(\phiv_1)-a\,\phiv_1\phiv_1^T]
		$. The distribution parameters are randomly generated and computation time averaged over 100 random trials is displayed.		
	}
	\label{fig:TimeVsDim}
\end{figure}

As shown in Figure \ref{fig:TimeVsDim}, for the proposed 
Algorithm \ref{alg:3}, the average time of simulating 10,000 random samples increases linearly in the dimension $k$. By contrast, for the naive Cholesky decomposition based simulation algorithm, whose computational complexity is $O(k^3)$ \citep{golub2012matrix}, the average simulation time increases at a significantly faster rate as the dimension $k$ increases. 

\begin{figure}[t]
	\centering
	\subfigure[]{
		\includegraphics[width = 2.2cm]{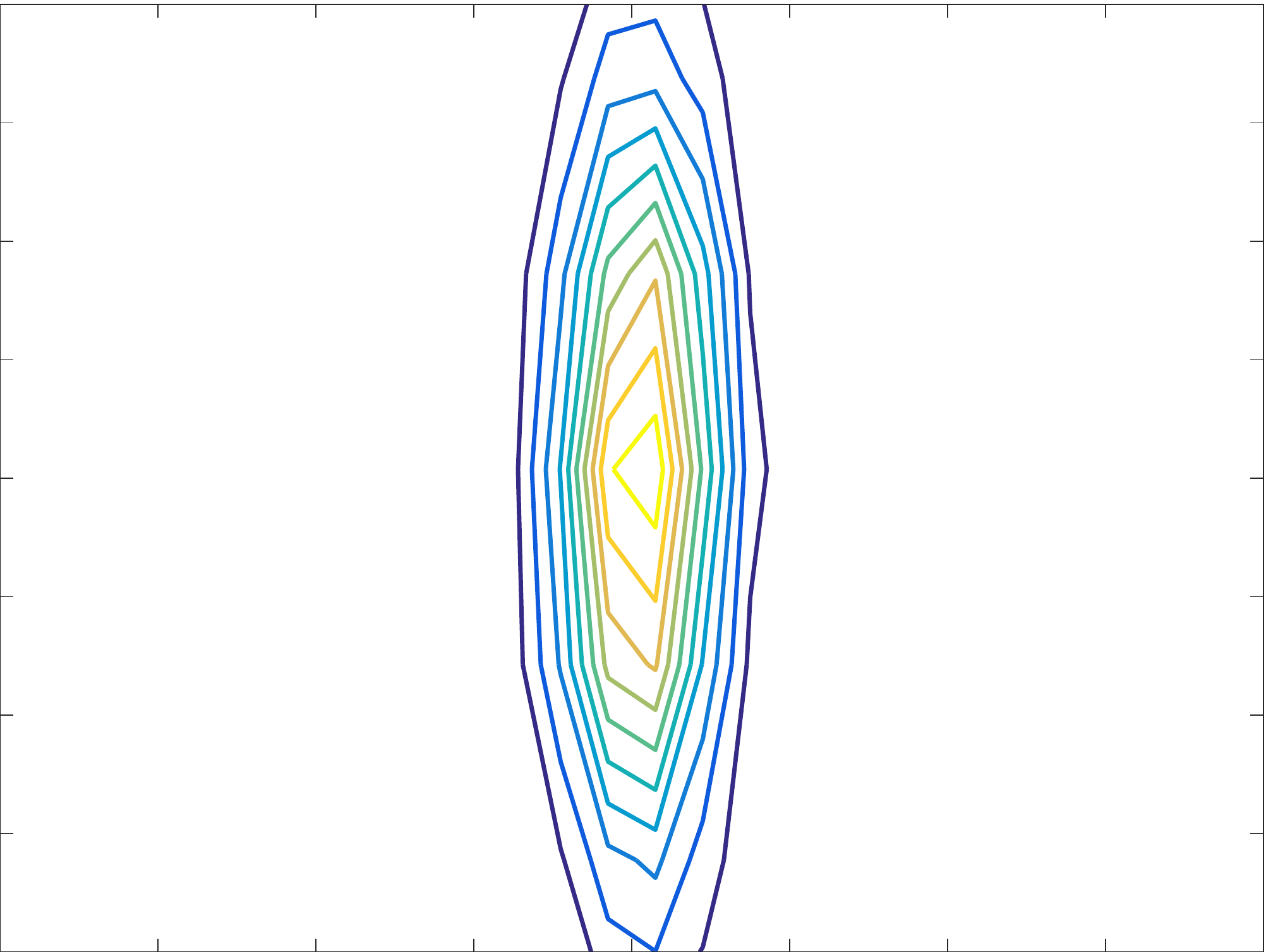}
	}
	\subfigure[]{
		\includegraphics[width = 2.2 cm]{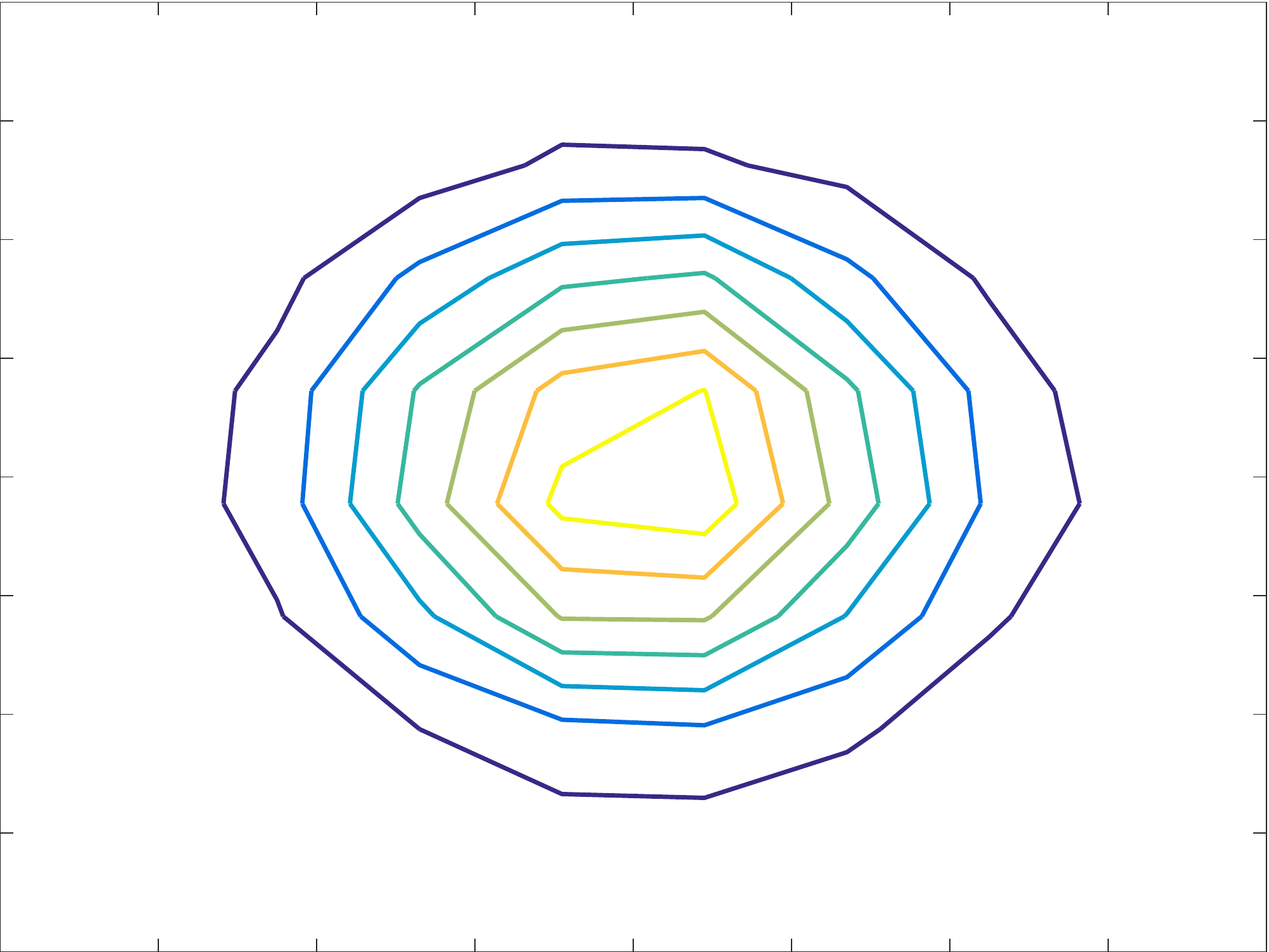}
	}
	\subfigure[]{
		\includegraphics[width = 2.2 cm]{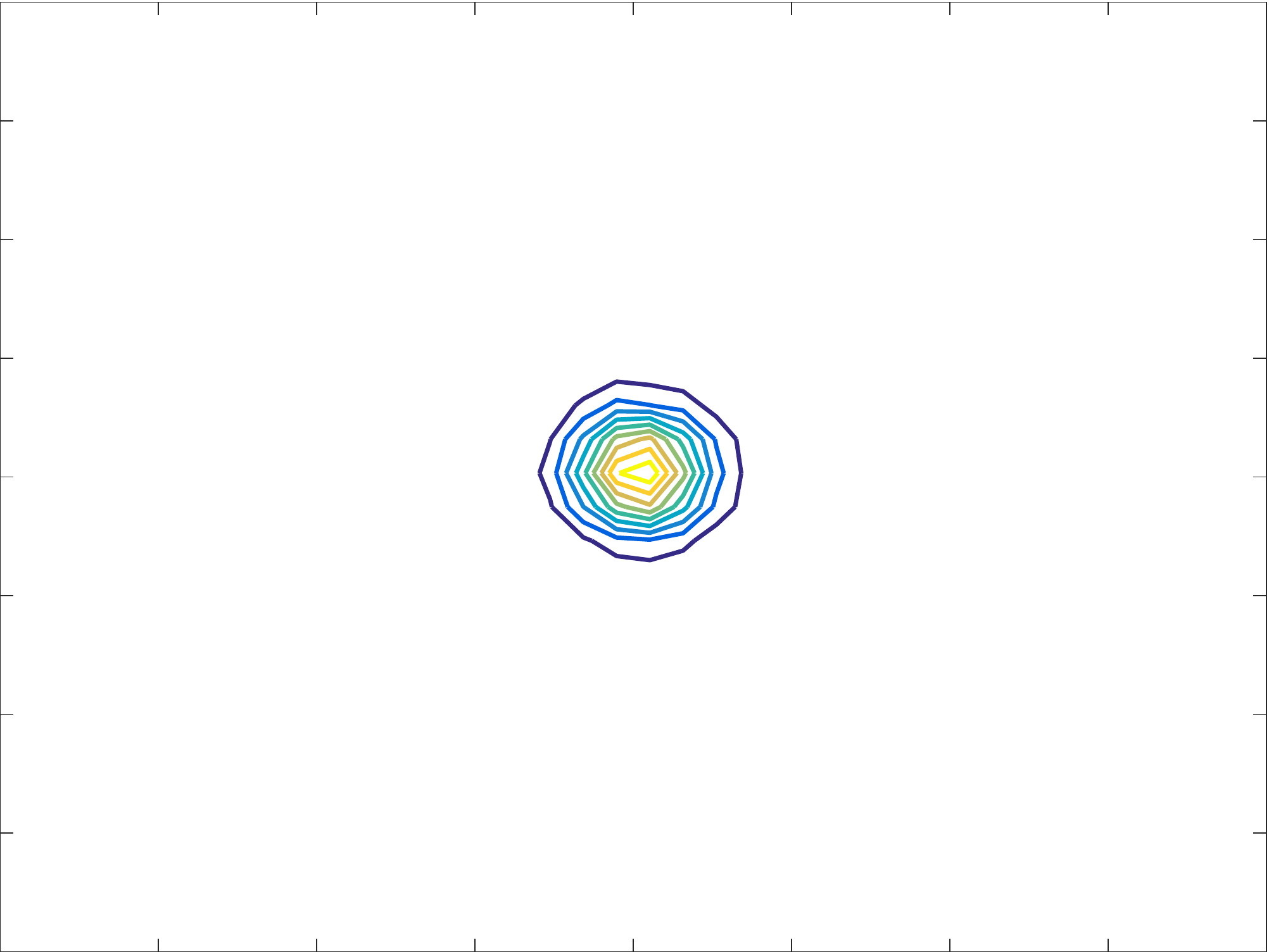}
	}
	\subfigure[]{
		\includegraphics[width = 2.2 cm]{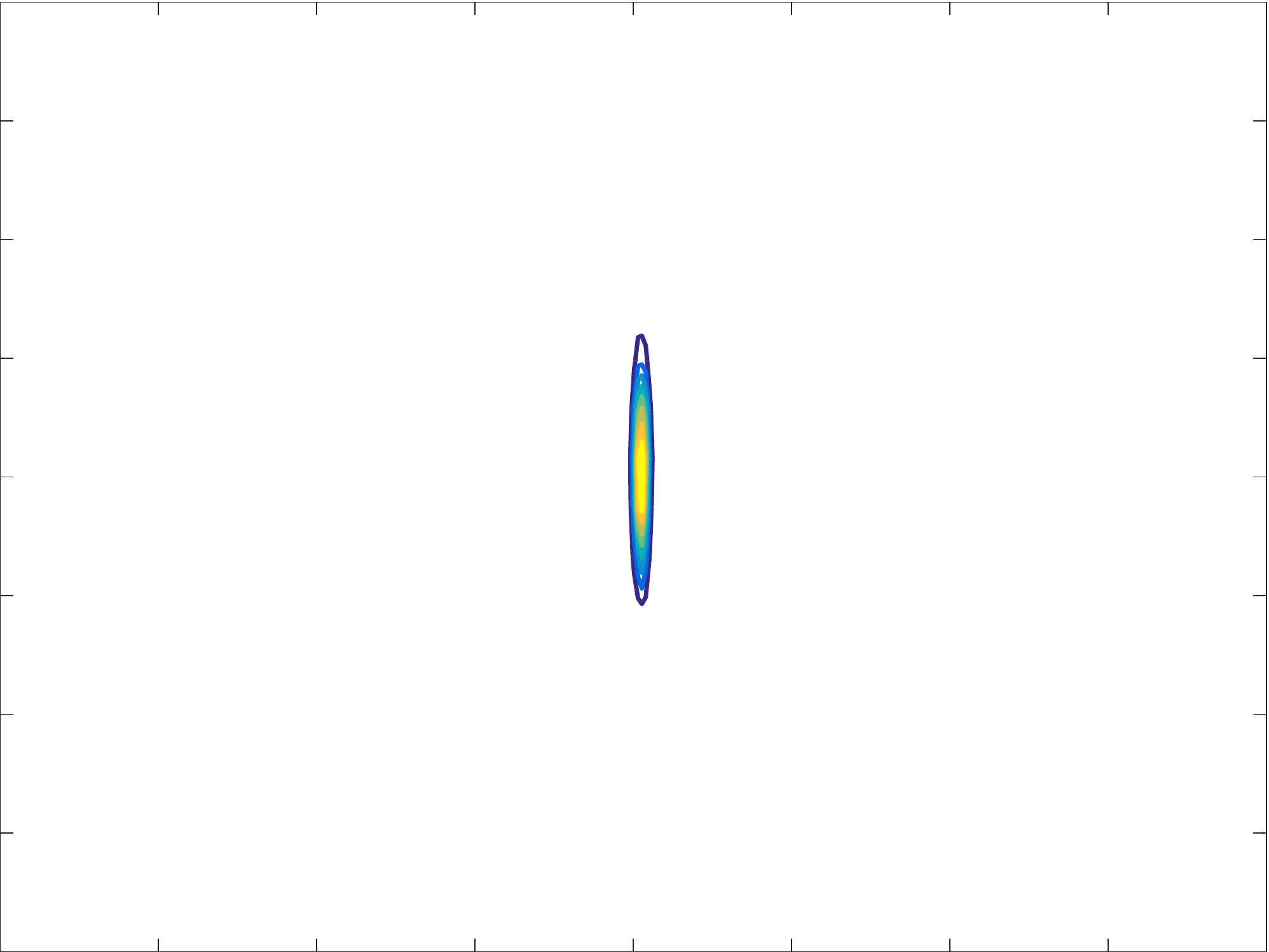}
	}
	\subfigure[]{
		\includegraphics[width = 2.2 cm]{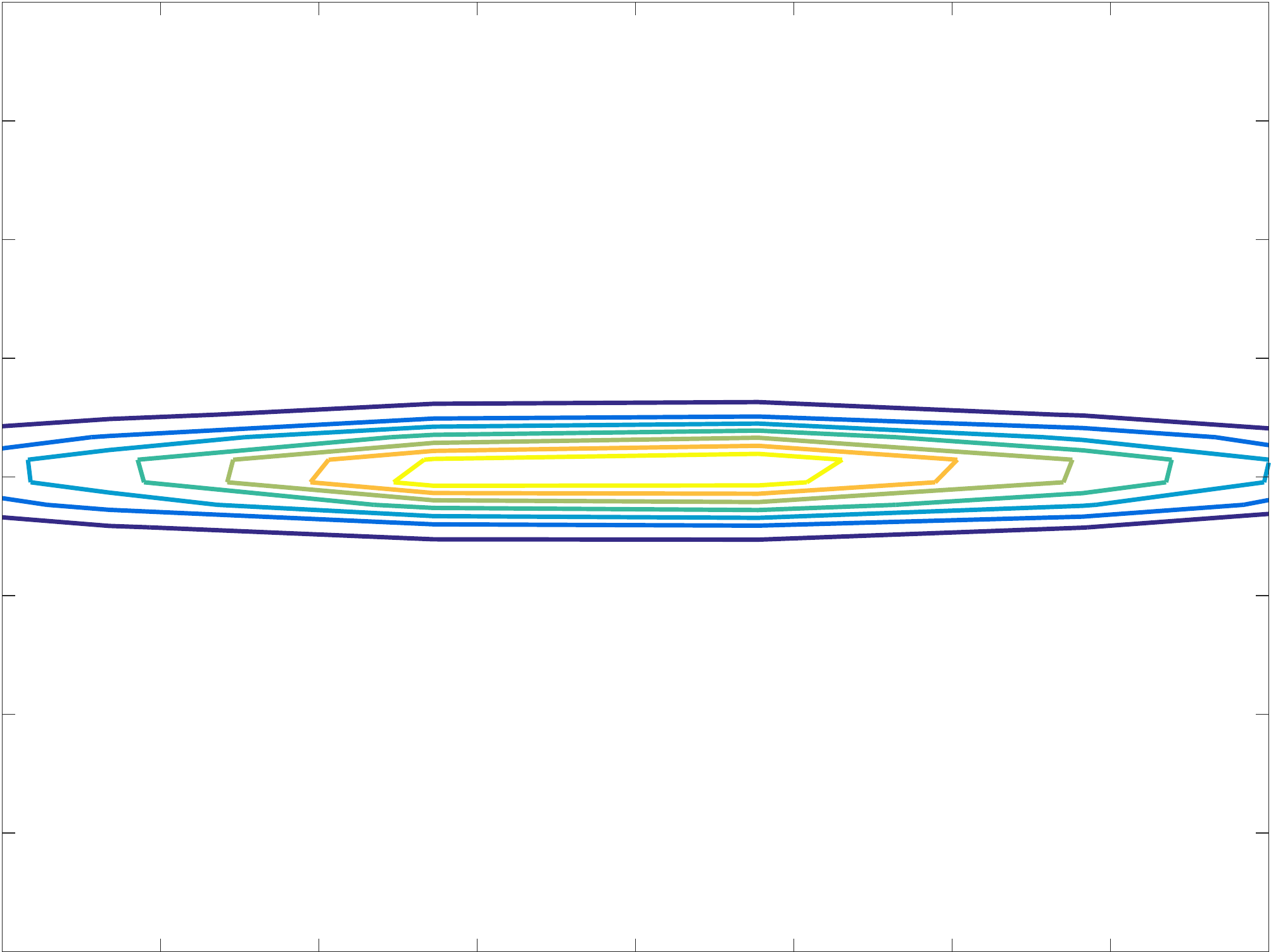}
	}
	\subfigure[]{
		\includegraphics[width = 2.2 cm]{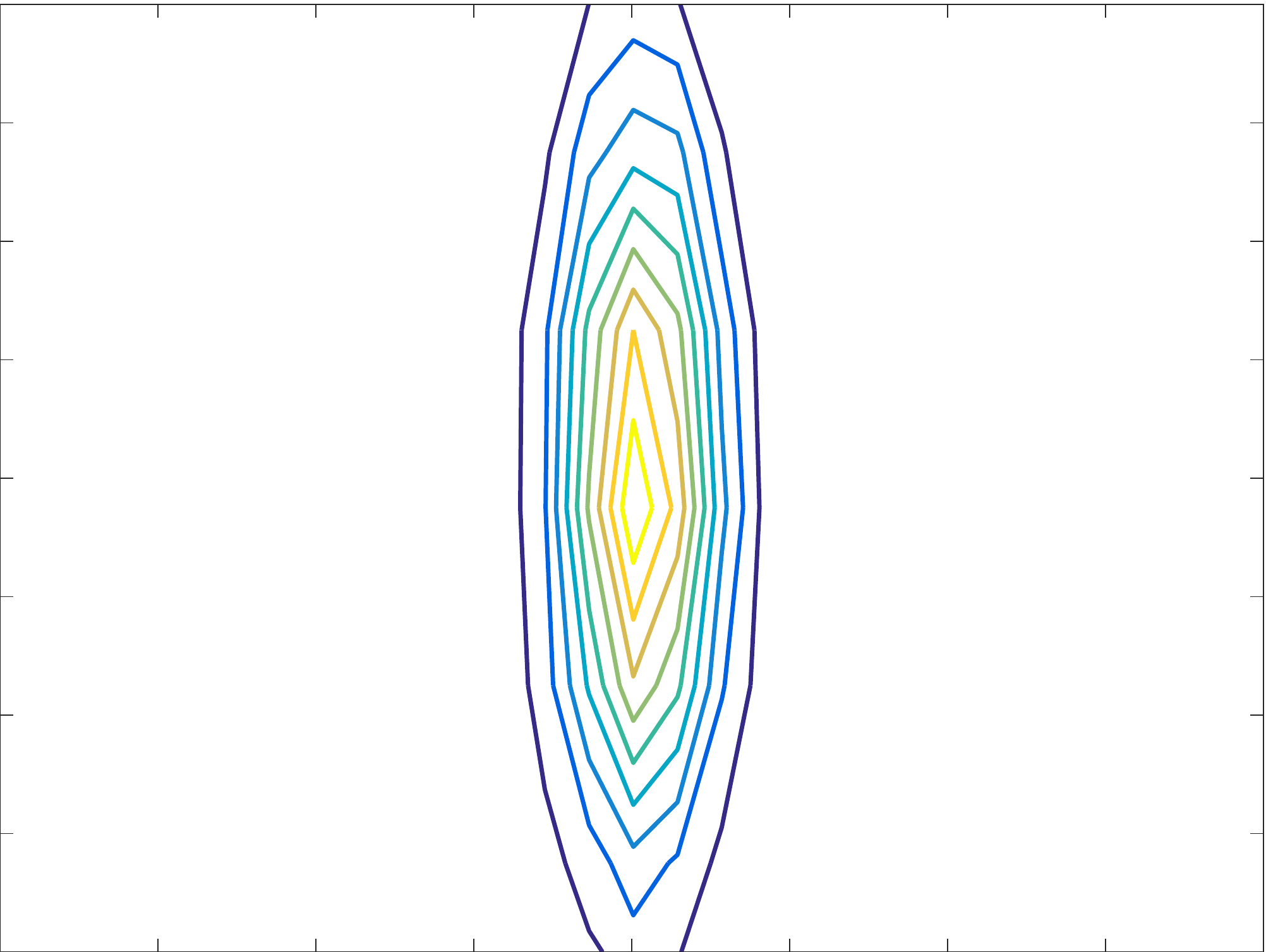}
	}
	\subfigure[]{
		\includegraphics[width = 2.2 cm]{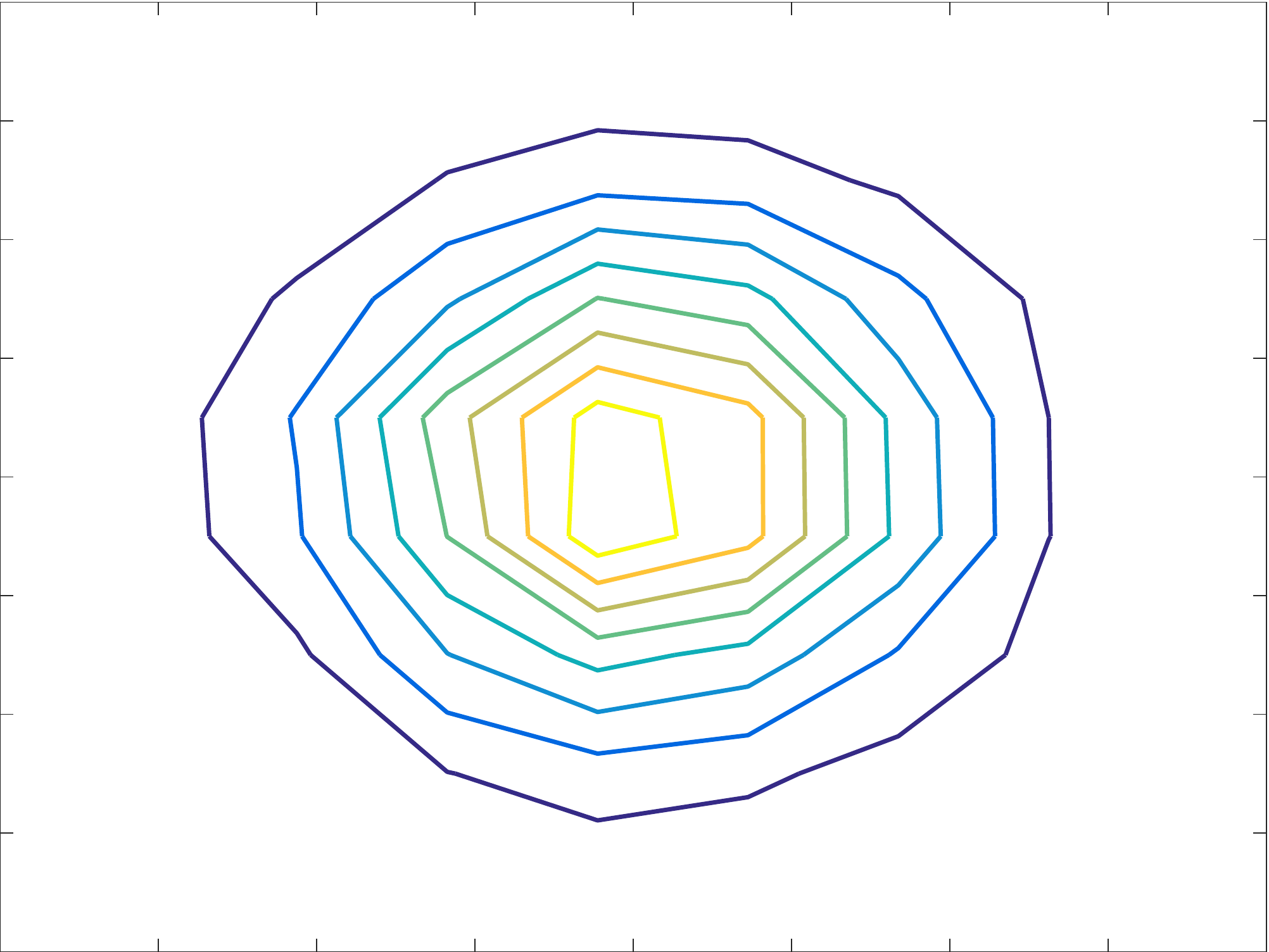}
	}
	\subfigure[]{
		\includegraphics[width = 2.2 cm]{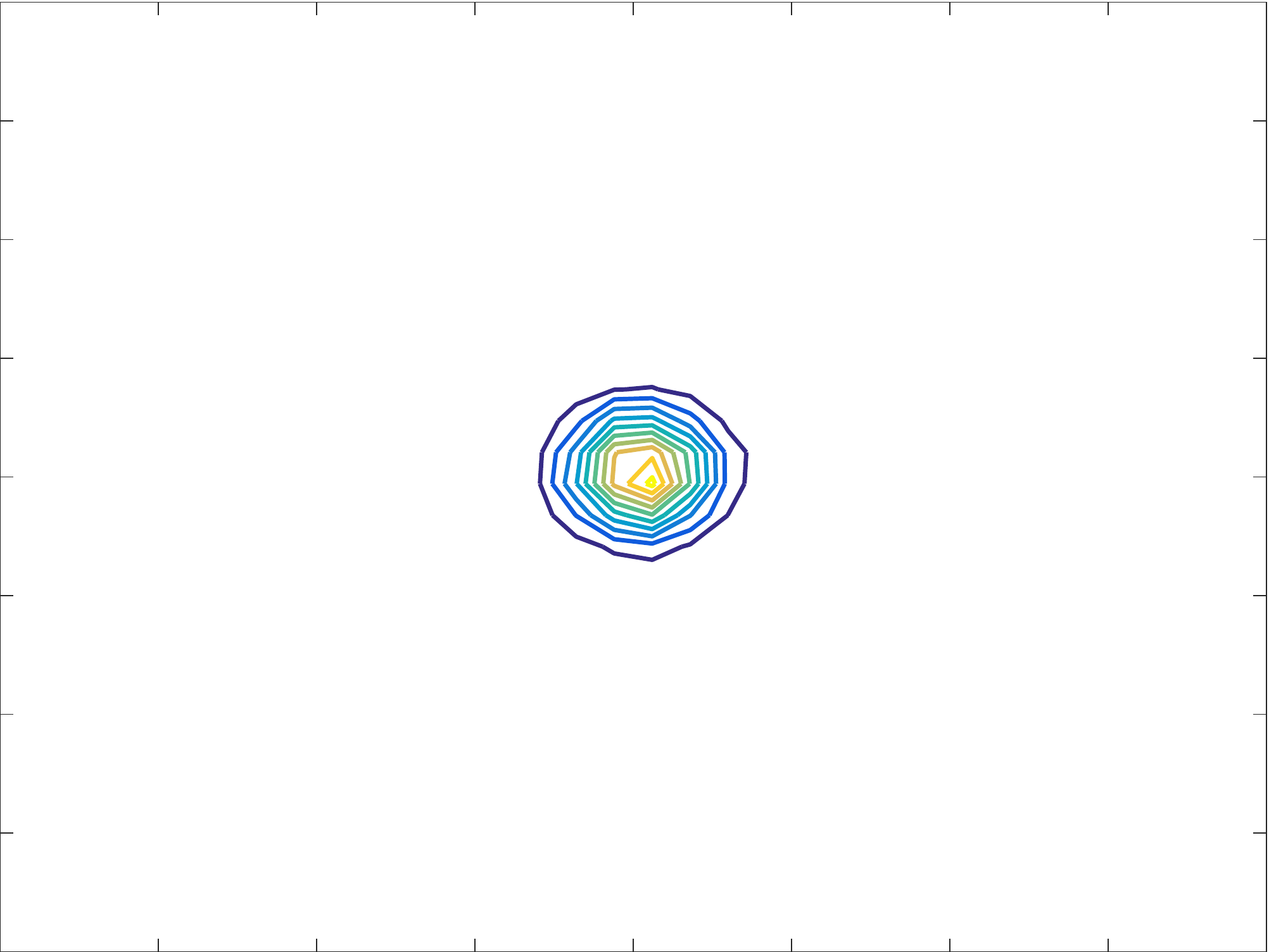}
	}
	\subfigure[]{
		\includegraphics[width = 2.2 cm]{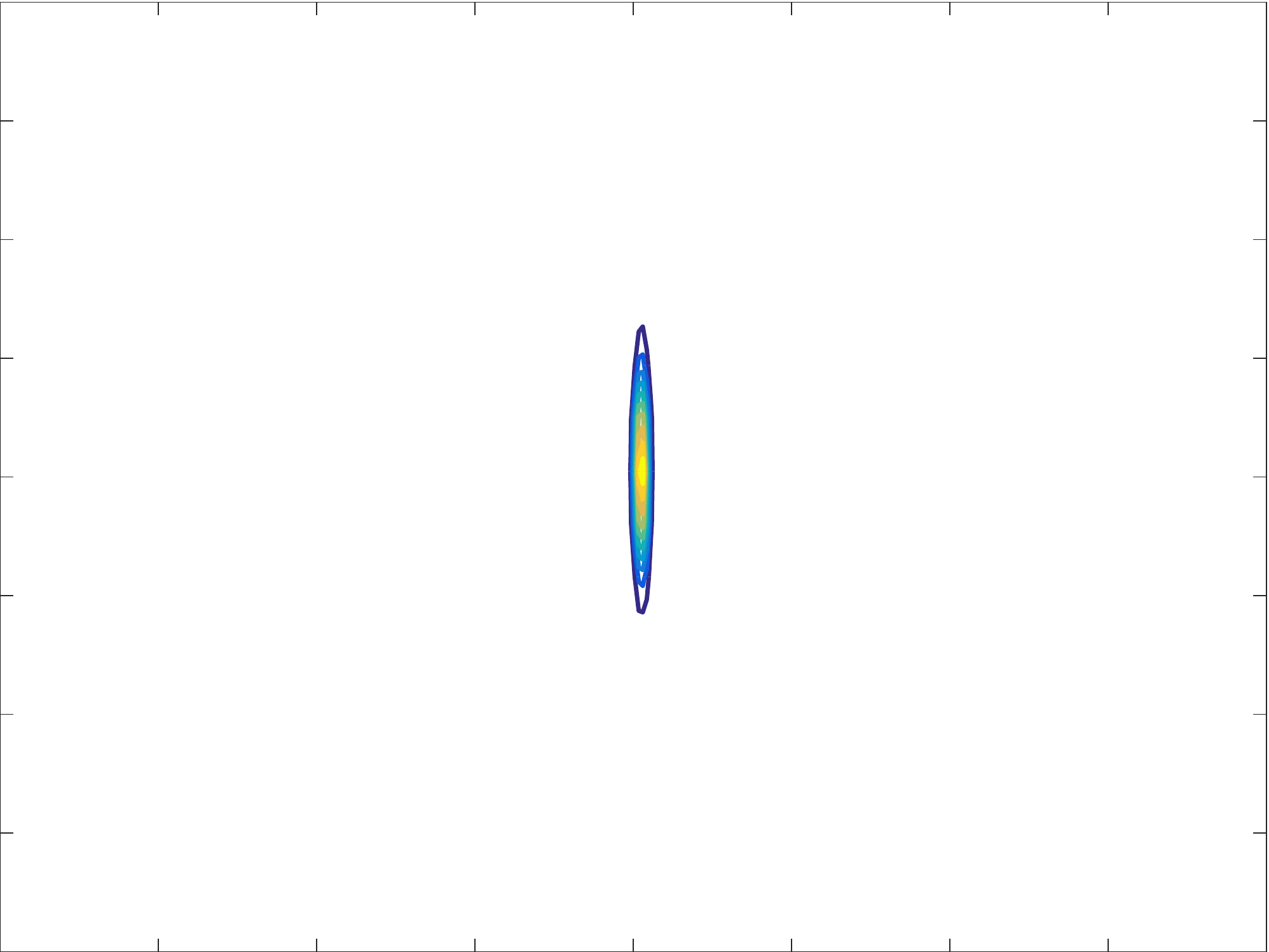}
	}
	\subfigure[]{
		\includegraphics[width = 2.2 cm]{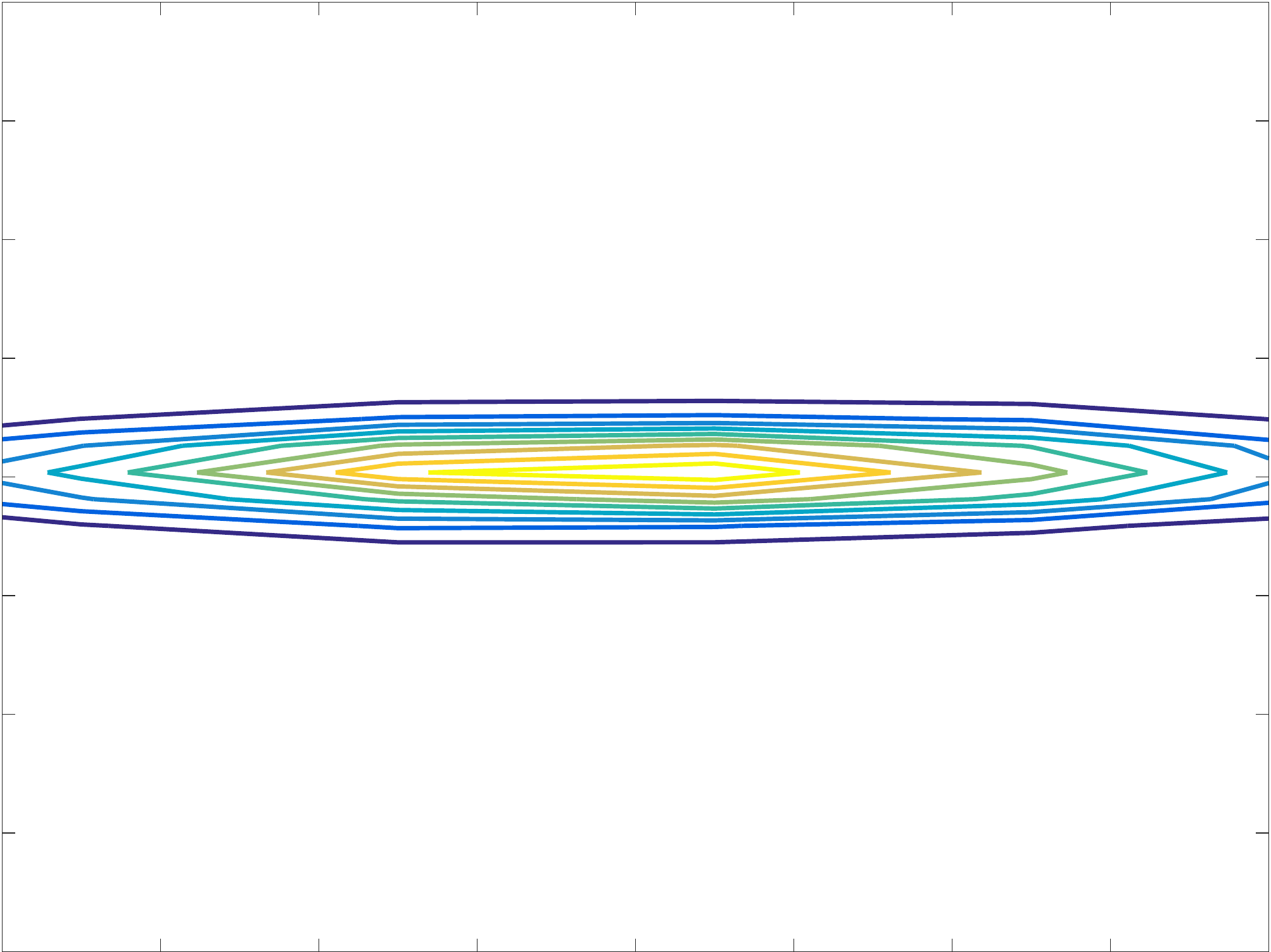}
	}
	\caption{\small Comparison of the contour plots of two randomly selected dimensions of the 10,000 $k=10^4$ dimensional random samples simulated with the naive Cholesky implementation (top row) and Algorithm \ref{alg:3} (bottom row). Each of the five columns corresponds to a random trial. 
	}
	\label{fig:ContourCompare}
\end{figure}

For explicit verification, with the 10,000 simulated $k=10^4$ dimensional random samples in a random trial, we randomly choose two dimensions and display their joint distribution using a contour plot. 
As in Figure \ref{fig:ContourCompare}, shown in the first row are the contour plots of five different random trials for the naive Cholesky implementation, whereas shown in the second row are the corresponding ones for the proposed Algorithm \ref{alg:3}. As expected, the contour lines of the two figures in the same column closely match each other. 

\begin{figure}[!th]
	\centering
	\includegraphics[width = 8 cm]{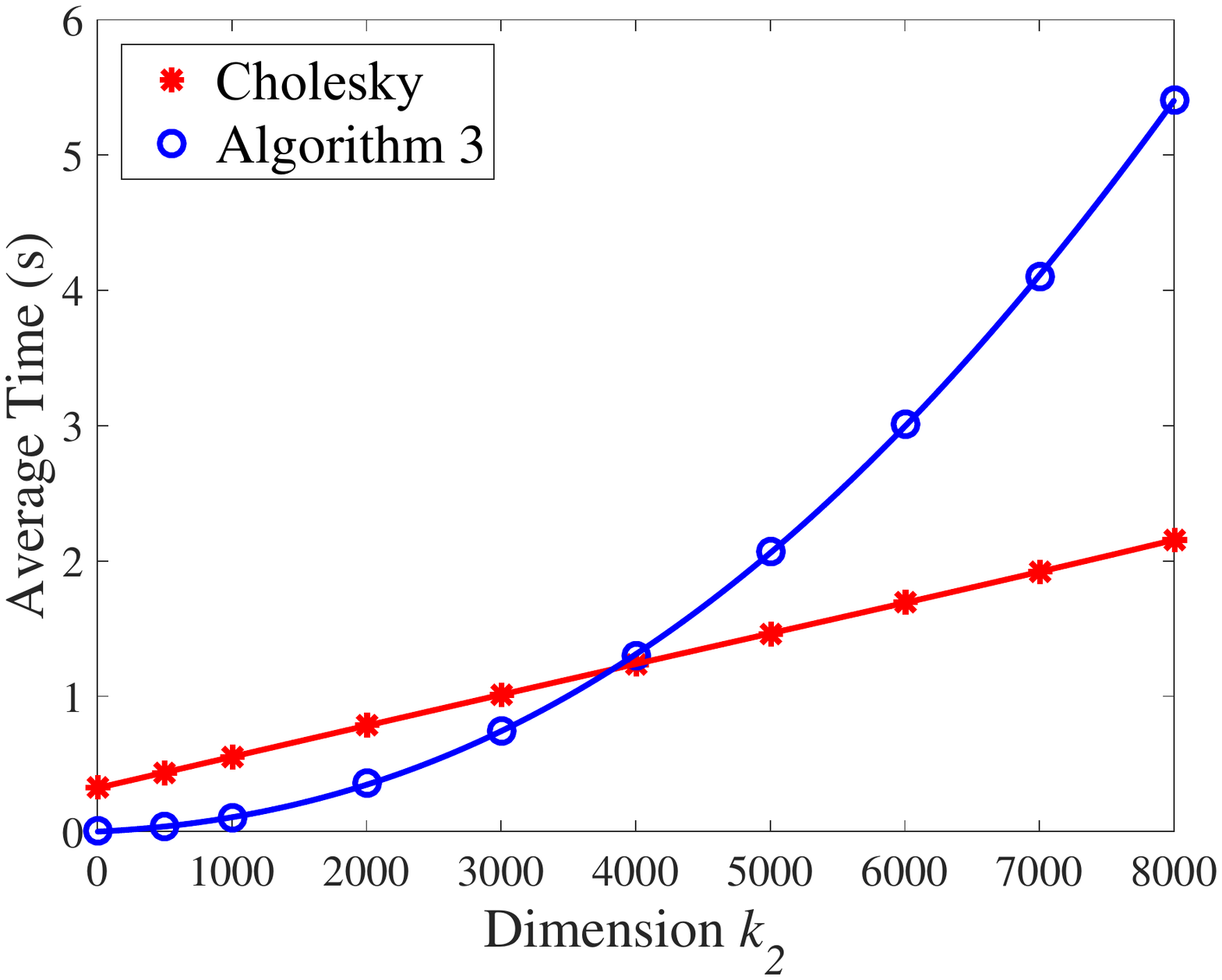}
	\caption{\small Comparison of the naive Cholesky decomposition based implementation and Algorithm \ref{alg:3} in terms of the average time of generating one $k_1=4000$ dimensional sample from $\xv_1\sim\Nor(\muv_1,\Sigmamat_{11}-\Sigmamat_{12}\Sigmamat_{22}^{-1}\Sigmamat_{21})$, with diagonal $\Sigmamat_{11}$ and $\Sigmamat_{22}$. 
		The distribution parameters are randomly generated and computation time averaged over 50 random trials is displayed.		
	}
	\label{fig:A3}
\end{figure}

To further examine  when to apply Algorithm \ref{alg:3} instead of the naive Cholesky decomposition based implementation in a general setting, 
we present the computational complexity analyses in Tables \ref{tab:CompNaive3} and \ref{tab:CompAlg3} of the Appendix for the naive approach and Algorithm \ref{alg:3}, respectively. In addition, 
we mimic the settings in Section \ref{sec:A1vsA2} to conduct a set of experiments with randomly generated $\Sigmamat_{12}$, diagonal $\Sigmamat_{11}$, and diagonal $\Sigmamat_{22}$. We fix $k_1 = 4000$ and vary $k_2$ from 1 to 8000.
The computation time for one sample averaged over 50 random trials is presented in Figure \ref{fig:A3}.
It is clear from 
Tables \ref{tab:CompNaive3} and \ref{tab:CompAlg3} and  Figure \ref{fig:A3} that, as a general guideline, one may choose Algorithm \ref{alg:3} when $k_2$ is  smaller than  $ k_1$ and $\Sigmamat_{11}$ admits some special structure that makes it easy to invert and computationally efficient to simulate from $\Nor(\bf{0},\Sigmamat_{11})$. 

\subsection{Simulation of MVNs with structured precision matrices}

To examine when to apply Algorithm \ref{alg:4} instead of the naive Choleskey decomposition based procedure, we first consider a series of random simulations in which the sample size $n$ is fixed while the data dimension $p$ is varying. We then show that Algorithm \ref{alg:4} can be applied for high-dimensional regression whose $p$ is often much larger than $n$. 

\begin{figure}[!th]
	\centering
	\includegraphics[width = 8 cm]{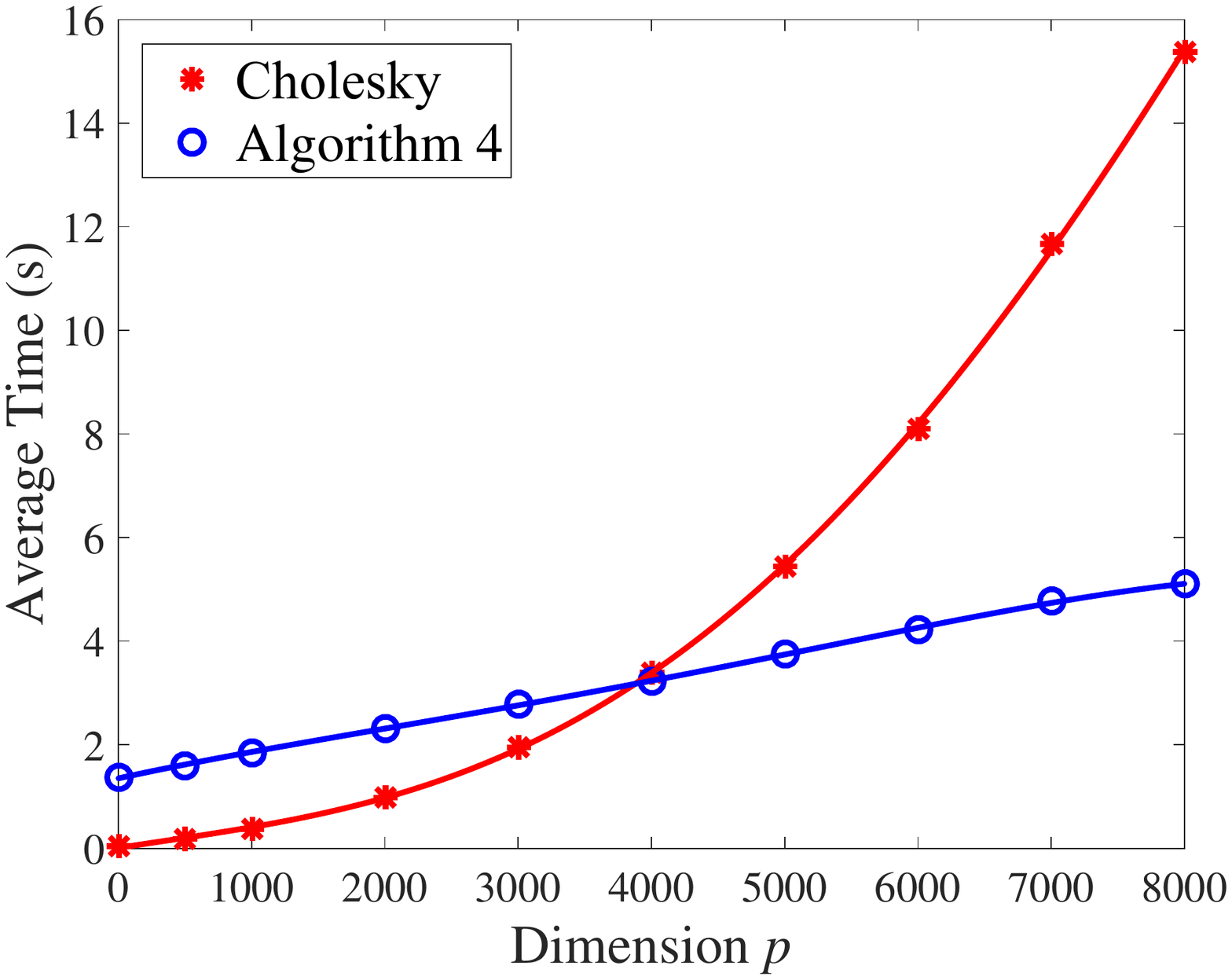}
	\caption{\small Comparison of the naive Cholesky decomposition based implementation and Algorithm \ref{alg:4} in terms of the average time of generating one $p$ dimensional sample from $\betav\sim\Nor\left[\muv_{\beta}, (\Amat + \Phimat^T \Omegamat \Phimat)^{-1}\right]$, with diagonal $\Amat$ and $\Omegamat$.
		The distribution parameters are randomly generated and computation time averaged over 50 random trials is displayed.	
	}
	\label{fig:A4}
\end{figure}

We fix $n = 4000$, vary $p$ from 1 to 8000, and mimic the settings in Section \ref{sec:A1vsA2} to randomly generate $\Phimat$, diagonal $\Amat$, and diagonal $\Omegamat$. 
As a function of dimensions $p$, the computation time for one sample averaged over 50 random trials is shown in Figure \ref{fig:A4}.
It is evident that, identical to the complexity analysis in Tables \ref{tab:CompNaive4} and \ref{tab:CompAlg4}, Algorithm \ref{alg:4} has a linear complexity with respect to $p$ under these settings, which will bring significant acceleration in a high-dimensional setting with $p \gg n$.
If the sample size $n$ is large enough that $n>p$, 
then one may directly apply the naive Cholesky decomposition based implementation.

Algorithm  \ref{alg:4} could be slightly modified to be applied to high-dimensional regression, 
where the main objective is to efficiently sample from the conditional posterior of $\betav \in \Rbb^{p \times 1}$ in the linear regression model as
\beq
\tv \sim\Nor(\Phimat \betav,\Omegamat^{-1}),~\betav\sim\Nor(\mathbf{0},\Amat^{-1}),
\eeq
where $\Phimat\in\mathbb{R}^{n\times p}$, $\Omegamat \in \Rbb^{n \times n}$, and different constructions on $\Amat\in\mathbb{R}^{p\times p}$ lead to a wide variety of regression models \citep{caron2008sparse,carvalho2010horseshoe,polson2014bayesian}.
The conditional posterior of $\betav$ is directly derived and shown in the following example, where its simulation algorithm is summarized by further generalizing Corollary~\ref{cor3}.


\noindent\textbf{Example 4:} \textit{
	Simulation of the MVN distribution 
	$$\betav\sim\Nor\left[(\Amat + \Phimat^T \Omegamat \Phimat)^{-1}\Phimat^T\Omegamat\tv, ~~(\Amat + \Phimat^T \Omegamat \Phimat)^{-1}\right]$$
	can be realized as follows.
	\begin{itemize}
		\item Sample $\yv_1\sim\Nor (\mathbf{0},\Amat^{-1})$ and $\yv_2\sim\Nor (\mathbf{0},\Omegamat^{-1})$ ;
		\item
		Return
		$
		\betav =\yv_1 + \Amat^{-1} \Phimat^T (\Omegamat^{-1}+\Phimat\Amat^{-1}\Phimat^T)^{-1} \left(\tv - \Phimat \yv_1-\yv_2\right)
		$, which can be realized using
		\begin{itemize}
			\item Solve $\alphav$ such that $ (\Omegamat^{-1}+\Phimat\Amat^{-1}\Phimat^T)\alphav = \tv - \Phimat \yv_1-\yv_2$;
			\item Return $\betav =\yv_1 + \Amat^{-1} \Phimat^T \alphav$. 
		\end{itemize}
	\end{itemize}
}

Note that if $\Omegamat = \Imat_n$, then the simulation algorithm in Example 4 reduces to the one in Proposition 2.1 of \citet{bhattacharya2015fast}, which is shown there to be significantly more efficient than that of \citet{rue2001fast} for high-dimensional regression if $p\gg n$.

\section{Conclusions}

A fast and exact simulation algorithm is developed for a multivariate normal (MVN) distribution whose sample space is constrained on the intersection of a set of hyperplanes, which is shown to be inherently related to the conditional distribution of a unconstrained MVN distribution. The proposed simulation algorithm is further generalized to efficiently simulate from a MVN distribution, whose covariance (precision) matrix can be decomposed as the sum (difference) of a positive-definite matrix and a low-rank symmetric matrix, using a higher dimensional hyperplane-truncated MVN distribution whose covariance matrix is block-diagonal. 

%




\bibliographystyle{ba}
\bibliography{References052016,References112016}

\begin{acknowledgement}
 The authors would like to thank the editor-in-chief, editor, associate editor, and two anonymous referees for their 
comments and suggestions, which have helped us improve the paper
substantially.  
M. Zhou thanks Yingbo Li and Xiaojing Wang for helpful discussions. 
B. Chen thanks the support of the Thousand Young Talent Program of China, NSFC
(61372132), NCET-13-0945, and NDPR-9140A07010115DZ01015.
\end{acknowledgement}

\newpage
\appendix 
\begin{center}
\Large{\bf{Appendix}}
\end{center}

\vspace{2mm}
\begin{algorithm}[H]
\begin{itemize}
\item Sample $\yv\sim\Nor (\muv,\Sigmamat)$ and denote $\yv_1 = (y_1,\ldots,y_{k_1})^T$ and $\yv_2 = (y_{k_1+1},\ldots,y_k)^T$;
\item Return $\xv_1 = \yv_1 + \Sigmamat_{12}\Sigmamat_{22}^{-1}( \rv- \yv_2)$.
\end{itemize}
 \caption{\label{thm:1} \citep{hoffman1991constrained,doucet2010note} Simulation of the conditional distribution of $\xv_1$ given $\xv_2=\rv$ as
$
\xv_1 \given \xv_2 = \rv \sim\Nor \left[\muv_1 + \Sigmamat_{12}\Sigmamat_{22}^{-1}(\rv-\muv_2), ~~\Sigmamat_{11} - \Sigmamat_{12}\Sigmamat_{22}^{-1}\Sigmamat_{21}\right]$, where the joint distribution of $\xv=(\xv_1^T,\xv_2^T)$ follows $\xv\sim(\muv,\Sigmamat)$.} 
\end{algorithm}\vspace{2mm}


\subsection*{Proofs}
\begin{proof}[Proof of Theorem \ref{maintheorem1}]
Let us denote $\Lambdamat = \Hmat^T\Sigmamat^{-1}\Hmat$
as a precision matrix that can be partitioned as in \eqref{eq:precision}. 
For Algorithm \ref{alg:1}, instead of directly simulating $\zv_1$ given $\zv_2$ using the conditional distribution of the MVN, we apply Algorithm \ref{thm:1} \citep{hoffman1991constrained,doucet2010note} to modify its sampling steps as follows. 
\begin{itemize}
\item Sample $\tilde \zv\sim\Nor [\Hmat^{-1}\muv,\Hmat^{-1}\Sigmamat(\Hmat^{-1})^T]$, and denote $\tilde \zv_1 = (z_1,\ldots,z_{k_1})^T$ and $\tilde \zv_2 = (z_{k_1+1},\ldots,z_k)$;
\item Let $\zv=(\zv_1^T,\zv_2^T)^T$, where $\zv_2=(\Gmat\Hmat_2)^{-1}\rv$ and $\zv_1 = \tilde \zv_1 - \Lambdamat_{11}^{-1} \Lambdamat_{12}( \zv_2 - \tilde \zv_2)$, and return 
\begin{align}
\xv &= \Hmat\zv=\Hmat_1 \zv_1 + \Hmat_2 (\Gmat\Hmat_2)^{-1}\rv\notag\\
 &= \Hmat_1\tilde \zv_1 + \Hmat_1\Lambdamat_{11}^{-1} \Lambdamat_{12} \tilde \zv_2 + (\Hmat_2 - \Hmat_1\Lambdamat_{11}^{-1} \Lambdamat_{12})(\Gmat\Hmat_2)^{-1}\rv\notag\\
& = (\Hmat_1, \Hmat_1\Lambdamat_{11}^{-1} \Lambdamat_{12})\tilde \zv + (\Hmat_2 - \Hmat_1\Lambdamat_{11}^{-1} \Lambdamat_{12})(\Gmat\Hmat_2)^{-1}\rv.
\end{align}

\end{itemize}
Therefore, we can equivalently generate $\xv$ as follows. 
\begin{itemize}
\item Sample $\yv\sim\Nor (\muv,\Sigmamat)$.
\item Return $\xv = (\Hmat_1, \Hmat_1\Lambdamat_{11}^{-1} \Lambdamat_{12}) \Hmat^{-1}\yv+ (\Hmat_2 - \Hmat_1\Lambdamat_{11}^{-1} \Lambdamat_{12})(\Gmat\Hmat_2)^{-1}\rv $.
\end{itemize}

The computation can be significantly simplified if $\Lambdamat_{12} = \mathbf{0}$, which means
$$
\Lambdamat_{12}=\Hmat^T_1\Sigmamat^{-1} \Hmat_2 = \mathbf{0}.
$$
Since $\Hmat_1^T\Gmat^T=0$ by definition, to make $\Lambdamat_{12} = \mathbf{0}$, if and only if we have $\Hmat_2$ as
$$
\Hmat_2 = \Sigmamat \Gmat^T \Mmat,
$$
where $\Mmat\in\mathbb{R}^{k_2\times k_2}$ is an arbitrary full rank matrix, under which we have
\begin{itemize}
\item Sample $\yv\sim\Nor (\muv,\Sigmamat)$.
\item Return $\xv = (\Hmat_1, \mathbf{0}_{k\times k_2} ) \Hmat^{-1}\yv+ \Sigmamat \Gmat^T(\Gmat \Sigmamat \Gmat^T)^{-1}\rv $, or return
\beq\label{eq:xv}
\xv = \yv - (\mathbf{0}_{k\times k_1},\Sigmamat \Gmat^T \Mmat ) \Hmat^{-1}\yv+ \Sigmamat \Gmat^T(\Gmat \Sigmamat \Gmat^T)^{-1}\rv .
\eeq
\end{itemize}
Let us denote $ (\mathbf{0}_{k\times k_1},\Sigmamat \Gmat^T \Mmat )\Hmat^{-1} = \Cmat $. We have $$ (\mathbf{0}_{k\times k_1},\Sigmamat \Gmat^T\Mmat ) = \Cmat \Hmat = (\Cmat \Hmat_1,\Cmat \Sigmamat \Gmat^T\Mmat)$$ and hence $\Cmat \Hmat_1 = \mathbf{0}$ and $\Cmat \Sigmamat \Gmat^T \Mmat= \Sigmamat \Gmat^T \Mmat$. Since $\Gmat \Hmat_1=0$, we have $\Cmat = \Sigmamat \Gmat^T(\Gmat \Sigmamat \Gmat^T)^{-1}\Gmat$. The proof is completed by substituting $ (\mathbf{0}_{k\times k_1},\Sigmamat \Gmat^T \Mmat )\Hmat^{-1}$ in \eqref{eq:xv} with $ \Sigmamat \Gmat^T(\Gmat \Sigmamat \Gmat^T)^{-1}\Gmat$.
%
\end{proof}



\begin{proof}[Alternative Proof of Theorem \ref{maintheorem1}]
To solve the problem in \eqref{eq:px}, one may solve an equivalent problem in \eqref{eq:pztruccond} by defining an invertible transformation matrix $\Hmat$ that satisfies
$\Gmat\Hmat_1 = \mathbf{0}_{k_2\times k_1}$. Let us denote $\Lambdamat = \Hmat^T\Sigmamat^{-1}\Hmat$
as a precision matrix that can be partitioned as in \eqref{eq:precision}. To simply the problem in \eqref{eq:pztruccond}, we choose the transformation matrix $\Hmat$ to make $\zv_1$ and $\zv_2$ be independent to each other. Since $\zv$ follows a MVN distribution, $\zv_1$ and $\zv_2$ are independent to each other if and only if 
%
%
%
%
$$
\Lambdamat_{12}=\Hmat^T_1\Sigmamat^{-1} \Hmat_2 = \mathbf{0}.
$$
Since $\Hmat_1^T\Gmat^T=\mathbf{0}$ by definition, to make $\Lambdamat_{12} = \mathbf{0}$, if and only if we have $\Hmat_2$ as
$$
\Hmat_2 = \Sigmamat \Gmat^T \Mmat,
$$
where $\Mmat\in\mathbb{R}^{k_2\times k_2}$ is an arbitrary full rank matrix.
Accordingly, we have
$$
\Hmat^{-1}\Sigmamat(\Hmat^{-1})^T = \left[
 \barr{cc}
 (\Hmat^T_1\Sigmamat^{-1} \Hmat_1)^{-1} & \mathbf{0} \\
 \mathbf{0} & (\Hmat^T_2\Sigmamat^{-1} \Hmat_2)^{-1}
 \earr \right] .
$$
Thus with $\Hmat$ satisfying $\Gmat\Hmat_1 = \mathbf{0}_{k_2\times k_1}$ and $\Hmat_2 = \Sigmamat \Gmat^T \Mmat$, one can transform the original problem in \eqref{eq:px} to that in \eqref{eq:pztruccond}, where 
$\zv_1$ and $\zv_2$ are independent and the restrictions $\Gmat\xv=\rv$ and $\zv_2 = (\Gmat\Hmat_2)^{-1}\rv$ imply each other.
Following the naive approach shown in Algorithm \ref{alg:1}, one can generate $\xv$ from \eqref{eq:px} as follows
\begin{itemize}
\item Find $\Hmat = (\Hmat_1,\Hmat_2)$ with $\Hmat_2 = \Sigmamat \Gmat^T \Mmat$ and with $\Hmat_1$ satisfying $\Gmat\Hmat_1 = \mathbf{0}_{k_2\times k_1}$;
\item Sample $\zv_1 \sim \Nor [(\Imat_{k_1},\mathbf{0}_{k_1\times k_2})\Hmat^{-1}\muv,
 (\Hmat^T_1\Sigmamat^{-1} \Hmat_1)^{-1}]$;
\item Return $\xv = \Hmat \left[
 \barr{c}
 \zv_1 \\
 \Mmat^{-1} (\Gmat \Sigmamat \Gmat^T )^{-1} \rv
 \earr \right]$.
\end{itemize}
However, this naive approach contains intermediate variables that could be computationally expensive to compute.
Below we present a method to bypass these intermediate variables.
Since the last step could be reexpressed as 
$$ \bali
 \xv &= \Hmat_1 \zv_1 + \Hmat_2
 \Mmat^{-1} (\Gmat \Sigmamat \Gmat^T )^{-1}\rv \\
 	 &= \Hmat_1 \zv_1 + \Hmat_2 \zv_2 + \Hmat_2
 [ \Mmat^{-1} (\Gmat \Sigmamat \Gmat^T )^{-1}\rv - \zv_2 ] \\
 &= \Hmat \zv + \Sigmamat \Gmat^T (\Gmat \Sigmamat \Gmat^T )^{-1}\rv - \Sigmamat \Gmat^T (\Mmat \zv_2),
\eali $$
where $\zv=(\zv_1^T,\zv_2^T)^T$ and $\zv_2\in\mathbb{R}^{k_2}$ is a vector whose values can be chosen arbitrarily. In addition, since $\Mmat$ is an arbitrary full-rank matrix, we can let
\beq
\zv_2 \sim \Nor [(\mathbf{0}_{k_2\times k_1},\Imat_{k_2})\Hmat^{-1}\muv,
 (\Hmat^T_2\Sigmamat^{-1} \Hmat_2)^{-1}], \notag
\eeq
which means $\zv\sim \Nor [\Hmat^{-1}\muv,
 \Hmat^{-1}\Sigmamat (\Hmat^{-1})^T]$,
and choose $\Mmat$ to make 
$$
\Gmat\xv = \Gmat\Hmat \zv  + \Gmat \Sigmamat \Gmat^T (\Gmat \Sigmamat \Gmat^T )^{-1}\rv - \Gmat \Sigmamat \Gmat^T (\Mmat \zv_2) = \rv,
$$
which means $ \Gmat \Sigmamat \Gmat^T (\Mmat \zv_2) =\Gmat\Hmat \zv  $.
Thus we have
$$
 \xv = \Hmat\zv + \Sigmamat \Gmat^T (\Gmat \Sigmamat \Gmat^T )^{-1} (\rv - \Gmat \Hmat\zv).
$$
In addition, since if $\zv\sim \Nor [\Hmat^{-1}\muv,
 \Hmat^{-1}\Sigmamat (\Hmat^{-1})^T]$, then $\yv = \Hmat \zv \sim \Nor(\muv,\Sigmamat)$. 
 Therefore, without the need to compute any intermediate variables, one may use Algorithm \ref{alg:2} to generate $\xv$ from the hyperplane truncated MVN distribution.
%

\end{proof}

\begin{proof}[Proof of Theorem \ref{thm_condition}]

Using the matrix inversion lemma on \eqref{eq:px1}, we have
\begin{align}\label{eq:px1_pdf}
&p(\xv_1) 
\propto \exp\left[ - \frac{1}{2}( \xv_1- \muv_1)^T \left(\Sigmamat_{11}- \Sigmamat_{12} \Sigmamat_{22}^{-1} \Sigmamat_{21}\right)^{-1} ( \xv_1- \muv_1) \right],\notag\\
 \propto& \exp\left\{ - \frac{1}{2}( \xv_1- \muv_1) ^T \left[\Sigmamat_{11}^{-1} +\Sigmamat_{11}^{-1} \Sigmamat_{12} \left(\Sigmamat_{22} - \Sigmamat_{21} \Sigmamat_{11}^{-1} \Sigmamat_{12}\right)^{-1} \Sigmamat_{21} \Sigmamat_{11}^{-1} \right] ( \xv_1- \muv_1) \right\}.
\end{align}
Using \eqref{eq:tilde_mu2}, we have $\Gmat\muv =\Gmat_1 \muv_1+ \Gmat_2 \muv_2= \rv$. Since $\Gmat \xv =\rv$, we further have $\Gmat (\xv-\muv)=0$ and hence $\Gmat_{1}(\xv_1- \muv_1) = - \Gmat_{2}(\xv_2- \muv_2)$. Therefore, given the construction of $ \muv_2$ as in \eqref{eq:tilde_mu2}, we can replace the equality constraint $\Gmat \xv =\rv$ on $\xv$ by requiring $ (\xv_2- \muv_2) = - \Gmat_{2}^{-1}\Gmat_{1}(\xv_1- \muv_1)$. Using this equivelent constraint together with \eqref{eq:px}, we have
\begin{align}\label{eq:px_pdf}
&p(\xv\given \muv,\tilde \Sigmamat, \Gmat,\rv )\notag\\
 \propto &\exp\left[ - \frac{1}{2}( \xv_1- \muv_1)^T \Sigmamat_{11}^{-1} ( \xv_1- \muv_1) - \frac{1}{2}( \xv_2- \muv_2)^T \tilde\Sigmamat_{22}^{-1} ( \xv_2- \tilde \muv_2)\right]\delta( \Gmat \xv = \rv ) \notag\\
 \propto& \exp\left\{ - \frac{1}{2}( \xv_1- \muv_1)^T \left[ \Sigmamat_{11}^{-1} + \Gmat_{1}^T(\Gmat_{2}^{-1})^T\tilde\Sigmamat_{22}^{-1} \Gmat_{2}^{-1}\Gmat_{1} \right]( \xv_1- \muv_1) \right\}\notag\\
 &\times\delta\left[\xv_2 = \muv_2- \Gmat_{2}^{-1}\Gmat_{1}(\xv_1- \muv_1)\right]\notag\\
 =&\mathcal{N}\left\{\xv_1; \muv_1, \left[\Sigmamat_{11}^{-1} + \Gmat_{1}^T(\Gmat_{2}^{-1})^T\tilde\Sigmamat_{22}^{-1} \Gmat_{2}^{-1}\Gmat_{1}\right]^{-1}\right\}\notag\\
 &\times \delta\left[\xv_2 = \muv_2- \Gmat_{2}^{-1}\Gmat_{1}(\xv_1- \muv_1)\right]
 \end{align}
 It is clear that the marginal distribution of $\xv_1$ in \eqref{eq:px_pdf} matches the conditional distribution of $\xv_1$ in \eqref{eq:px1_pdf} if we further construct 
 $\tilde \Sigmamat_{22}$ using \eqref{eq:tilde_sig22}.
\end{proof}

%

\begin{proof}[Proof of Corollary \ref{cor2}] Applying Theorem \ref{maintheorem1}
to Corollary \ref{cor1}, we can generate $\xv$ with
\begin{itemize}
\item Sample $\yv\sim\Nor \left( \muv,\tilde \Sigmamat \right)$; 
\item Return $\xv = \yv + \tilde\Sigmamat \Gmat^T [ \Gmat \tilde \Sigmamat \Gmat^T ]^{-1} (\rv - \Gmat \yv)$.
\end{itemize}
Since
$ \muv = \begin{bmatrix}
 \muv_1\\
\mathbf{0}
 \end{bmatrix}$, $
\tilde\Sigmamat = \begin{bmatrix}
 \Sigmamat_{11} & \mathbf{0} \\
\mathbf{0} & \Sigmamat_{22} - \Sigmamat_{21} \Sigmamat_{11}^{-1} \Sigmamat_{12}
 \end{bmatrix}
$, $\Gmat=(\Gmat_1,\Gmat_2)=(\Sigmamat_{21}\Sigmamat_{11}^{-1} ,\Imat_{k_2})$, and $\rv = \Sigmamat_{21}\Sigmamat_{11}^{-1} \muv_1$, we have
$\tilde \Sigmamat \Gmat^T =
 \begin{bmatrix}
 \Sigmamat_{12} \\
\Sigmamat_{22} - \Sigmamat_{21} \Sigmamat_{11}^{-1} \Sigmamat_{12}
 \end{bmatrix}
$ and $[ \Gmat \tilde \Sigmamat \Gmat^T ]^{-1} = \Sigmamat_{22}^{-1}$. Since $\tilde \Sigmamat$ is block diagonal, we can independently sample $\yv_1$ and $\yv_2$ as $\yv_1\sim\Nor ( \muv_1,\Sigmamat_{11})$ and $\yv_2\sim\Nor (\mathbf{0},\Sigmamat_{22}- \Sigmamat_{21} \Sigmamat_{11}^{-1} \Sigmamat_{12})$, respectively, with which we can further sample $\xv$ as
$$ \begin{bmatrix}
 \xv_1 \\
\xv_2
 \end{bmatrix} = \begin{bmatrix}
 \yv_1 \\
\yv_2
 \end{bmatrix} + \begin{bmatrix}
 \Sigmamat_{12} \\
\Sigmamat_{22} - \Sigmamat_{21} \Sigmamat_{11}^{-1} \Sigmamat_{12}
 \end{bmatrix} \Sigmamat_{22}^{-1} ( \Sigmamat_{21} \Sigmamat_{11}^{-1} \muv_1-\Sigmamat_{21} \Sigmamat_{11}^{-1}\yv_1 -\yv_2). $$
Thus we can let $\xv_1 =\muv_1+\yv'_1 - \Sigmamat_{12} \Sigmamat_{22}^{-1} \left( \Sigmamat_{21} \Sigmamat_{11}^{-1} \yv'_1 +\yv_2\right) $, where $\yv'_1 = \yv_1-\muv_1 \sim\Nor(\mathbf{0},\Sigmamat_{11})$.
\end{proof}

 \begin{proof}[Proof of Corollary \ref{cor3}]
 Using the matrix inversion lemma, we have
 \beq
 \Sigmamat_{\beta}= \Amat^{-1}-\Amat^{-1} \Phimat^T (\Omegamat^{-1}+\Phimat\Amat^{-1}\Phimat^T)^{-1}\Phimat\Amat^{-1}.
\eeq
The proof is completed by
using Corollary \ref{cor2} with $\muv_1 =\muv_{\beta}$, $\Sigmamat_{11}=\Amat^{-1}$, $\Sigmamat_{12} =\Amat^{-1}\Phimat^T $, and $\Sigmamat_{22} = \Omegamat^{-1}+\Phimat\Amat^{-1}\Phimat^T$. 
\end{proof}

\subsection*{Computational Complexity}

We present the computational complexities of all proposed algorithms in the following tables, where we highlight with bold the lowest complexity in each row. 


\begin{table}[H]
	\centering
	\caption{Computational complexity of Algorithm \ref{alg:1}.}
	\label{tab:CompAlg1} 
	\begin{tabular}{ccc}
		\hline \hline
		\multirow{2}{*}{Calculation}  & \multicolumn{2}{c}{Computational complexity}\\
		\cline{2-3}
		& Non-diagonal $\Sigmamat$ & Diagonal $\Sigmamat$\\
		\hline
		$\Hmat$ 	& 	$\Oc(k_2 k^2 )$ 	& 	$\Oc(k_2 k^2 )$	\\
		$\zv_2$ 	& 	$\Oc(k_2^2 k )$ 	& 	$\Oc(k_2^2 k)$	\\
		$\Sigmamat^{-1}$ 	& 	$\Oc(k^3)$ 		& 	\boldmath $\Oc(k)$	\\
		$\Lambdamat_{11}$ 	& 	$\Oc(k_1 k^2)$ 	& 	\boldmath $\Oc(k_1^2 k)$	\\
		$\Lambdamat_{12}$ 	& 	$\Oc(k_1 k_2 k)$ 	& 	$\Oc(k_1 k_2 k)$	\\
		$\muv_{\zv_1}$ 	& 	$\Oc(\max(k^2,k_1^3,k_1^2 k_2))$ 	& 	$\Oc(\max(k^2,k_1^3,k_1^2 k_2))$	\\
		$\zv_1$ 	& 	$\Oc( k_1^3 )$	 & 	$\Oc( k_1^3 )$	\\
		$\xv$ 	& 	$\Oc( \max(k_1 k , k_2 k) )$	 & 	$\Oc( \max(k_1 k , k_2 k) )$	\\
		\hline
		Summary 	& 	$\Oc( k^3 )$	&	\boldmath $\Oc( \max(k_2 k^2, k_1^2 k) )$	\\
		\hline 	\hline
	\end{tabular}
\end{table}

\begin{table}[H]
	\centering
	\caption{Computational complexity of Algorithm \ref{alg:2}.}
	\label{tab:CompAlg2} 
	\begin{tabular}{ccc}
		\hline \hline
		\multirow{2}{*}{Calculation}  & \multicolumn{2}{c}{Computational complexity}\\
		\cline{2-3}
		& Non-diagonal $\Sigmamat$ & Diagonal $\Sigmamat$\\
		\hline
		$\yv$ 		& 	$\Oc(k^3 )$ 		& 	\boldmath $\Oc(k)$	\\
		$\Gmat \Sigmamat \Gmat^T$ 		& 	$\Oc(k_2 k^2 )$ 	& 	\boldmath $\Oc(k_2^2 k)$	\\
		$\alphav$ 		& 	$\Oc( \max(k_2 k,k_2^3) )$ 	& 	$\Oc(\max(k_2 k,k_2^3))$	\\
		$\xv$ 	& 	$\Oc( k_2 k )$	 & 	$\Oc(k_2 k)$	\\
		\hline
		Summary 	& 	$\Oc( k^3 )$	&	\boldmath $\Oc( k_2^2 k )$	\\
		\hline 	\hline
	\end{tabular}
\end{table}

\begin{table}[H]
	\centering
	\caption{Computational complexity of naive simulation in Algorithm \ref{alg:3}.}
	\label{tab:CompNaive3} \makebox[\linewidth]{
\resizebox{\linewidth}{!}{%
	\begin{tabular}{ccccc}
		\hline \hline
		\multirow{3}{*}{Calculation}  & \multicolumn{4}{c}{Computational complexity}\\
		\cline{2-5}
		& Non-diagonal $\Sigmamat_{11}$ & Diagonal $\Sigmamat_{11}$ & Non-diagonal $\Sigmamat_{11}$ & Diagonal $\Sigmamat_{11}$\\
		& Non-diagonal $\Sigmamat_{22}$ & Non-diagonal $\Sigmamat_{22}$ & Diagonal $\Sigmamat_{22}$ & Diagonal $\Sigmamat_{22}$\\
		\hline
		$\Sigmamat_{22}^{-1}$	& 	$\Oc(k_2^3)$ 	& 	$\Oc(k_2^3)$	& 	\boldmath $\Oc(k_2)$ 	& 	\boldmath $\Oc(k_2)$	\\
		$\Sigmamat_{12} \Sigmamat_{22}^{-1} \Sigmamat_{21}$		& 	$\Oc(\max(k_1^2 k_2 , k_1 k_2^2) )$ 	& 	$\Oc(\max(k_1^2 k_2 , k_1 k_2^2) )$	& 	\boldmath $\Oc(k_1^2 k_2)$ 	& 	\boldmath $\Oc(k_1^2 k_2)$	\\
		$\xv_1$ 	& 	$\Oc( k_1^3 )$	& 	$\Oc( k_1^3 )$	& 	$\Oc( k_1^3 )$	 & 	$\Oc( k_1^3 )$	\\
		\hline
		Summary 	& 	$\Oc( \max(k_1^3 , k_2^3) )$	&	$\Oc( \max(k_1^3 , k_2^3) )$	& 	\boldmath $\Oc( \max(k_1^3 , k_1^2 k_2) )$	&	\boldmath $\Oc( \max(k_1^3 , k_1^2 k_2) )$	\\
		\hline 	\hline
	\end{tabular}}}
\end{table}

\begin{table}[H]
	\centering
	\caption{Computational complexity of Algorithm \ref{alg:3}.}
	\label{tab:CompAlg3} 
	\makebox[\linewidth]{
\resizebox{\linewidth}{!}{%
	\begin{tabular}{ccccc}
		\hline \hline
		\multirow{3}{*}{Calculation}  & \multicolumn{4}{c}{Computational complexity}\\
		\cline{2-5}
		& Non-diagonal $\Sigmamat_{11}$ & Diagonal $\Sigmamat_{11}$ & Non-diagonal $\Sigmamat_{11}$ & Diagonal $\Sigmamat_{11}$\\
		& Non-diagonal $\Sigmamat_{22}$ & Non-diagonal $\Sigmamat_{22}$ & Diagonal $\Sigmamat_{22}$ & Diagonal $\Sigmamat_{22}$\\
		\hline
		$\yv_1$	& 	$\Oc(k_1^3)$ 	& 	\boldmath $\Oc(k_1)$	& 	$\Oc(k_1^3)$ 	& 	\boldmath $\Oc(k_1)$	\\
		$\Sigmamat_{11}^{-1}$	&	$\Oc(k_1^3)$ 	& 	\boldmath $\Oc(k_1)$	& 	$\Oc(k_1^3)$ 	& 	\boldmath $\Oc(k_1)$	\\
		$\Sigmamat_{21} \Sigmamat_{11}^{-1} \Sigmamat_{12}$		& 	$\Oc( \max(k_1 k_2^2, k_1^2 k_2))$ 	& 	\boldmath $\Oc(k_1 k_2^2)$	& 	$\Oc( \max(k_1 k_2^2, k_1^2 k_2))$ 	& 	\boldmath $\Oc(k_1 k_2^2)$	\\
		$\yv_{2}$	& 	$\Oc(k_2^3)$ 	& 	$\Oc(k_2^3)$	& 	$\Oc(k_2^3)$ 	& 	$\Oc(k_2^3)$	\\
		$\alphav$	& 	$\Oc( \max(k_1 k_2, k_2^3) )$ 	& 	$\Oc( \max(k_1 k_2, k_2^3) )$	& 	\boldmath $\Oc(k_1 k_2)$ 	& 	 \boldmath $\Oc(k_1 k_2)$	\\
		$\xv_1$ 	& 	$\Oc( k_1 k_2 )$	& 	$\Oc( k_1 k_2 )$	& 	$\Oc( k_1 k_2 )$	 & 	$\Oc( k_1 k_2 )$	\\
		\hline
		Summary 	& 	$\Oc( \max( k_1^3, k_2^3) )$	&	\boldmath $\Oc( \max(k_1 k_2^2 , k_2^3) )$	& 	$\Oc( \max( k_1^3, k_2^3) )$	&	\boldmath $\Oc( \max(k_1 k_2^2 , k_2^3) )$	\\
		\hline 	\hline
	\end{tabular}}}
\end{table}

\begin{table}[H]
	\centering
	\caption{Computational complexity of naive simulation in Algorithm \ref{alg:4}.}
	\label{tab:CompNaive4} 
	\makebox[\linewidth]{
\resizebox{\linewidth}{!}{%
	\begin{tabular}{ccccc}
		\hline \hline
		\multirow{3}{*}{Calculation}  & \multicolumn{4}{c}{Computational complexity}\\
		\cline{2-5}
		& Non-diagonal $\Amat$ & Diagonal $\Amat$ & Non-diagonal $\Amat$ & Diagonal $\Amat$\\
		& Non-diagonal $\Omegamat$ & Non-diagonal $\Omegamat$ & Diagonal $\Omegamat$ & Diagonal $\Omegamat$\\
		\hline
		$\Phimat^T \Omegamat \Phimat$		& 	$\Oc( \max(n^2 p, n p^2) )$ 	& $\Oc( \max(n^2 p, n p^2) )$	& 	\boldmath $\Oc(n p^2)$ 	& 	\boldmath $\Oc(n p^2)$	\\
		$(\Amat + \Phimat^T \Omegamat \Phimat)^{-1}$	&	$\Oc(p^3)$ 	& 	$\Oc(p^3)$	& 	$\Oc(p^3)$ 	& 	$\Oc(p^3)$	\\
		$\betav$ 	& 	$\Oc(p^3)$ 	& 	$\Oc(p^3)$	& 	$\Oc(p^3)$ 	& 	$\Oc(p^3)$	\\
		\hline
		Summary 	& 	$\Oc( \max( n^2 p, p^3 ) )$	&	$\Oc( \max( n^2 p, p^3 ) )$	& 	\boldmath $\Oc( \max(n p^2 , p^3)$	&	\boldmath $\Oc( \max(n p^2 , p^3)$	\\
		\hline 	\hline
	\end{tabular}}}
\end{table}

\begin{table}[H]
	\centering
	\caption{Computational complexity of Algorithm \ref{alg:4}.}
	\label{tab:CompAlg4} 
	\makebox[\linewidth]{
\resizebox{\linewidth}{!}{%
	\begin{tabular}{ccccc}
		\hline \hline
		\multirow{3}{*}{Calculation}  & \multicolumn{4}{c}{Computational complexity}\\
		\cline{2-5}
		& Non-diagonal $\Amat$ & Diagonal $\Amat$ & Non-diagonal $\Amat$ & Diagonal $\Amat$\\
		& Non-diagonal $\Omegamat$ & Non-diagonal $\Omegamat$ & Diagonal $\Omegamat$ & Diagonal $\Omegamat$\\
		\hline
		$\Amat^{-1}$	& 	$\Oc(p^3)$ 	& 	\boldmath $\Oc(p)$	& 	$\Oc(p^3)$ 	& 	\boldmath $\Oc(p)$	\\
		$\yv_1$			& 	$\Oc(p^3)$ 	& 	\boldmath $\Oc(p)$	& 	$\Oc(p^3)$ 	& 	\boldmath $\Oc(p)$	\\
		$\Omegamat^{-1}$	&	$\Oc(n^3)$ 	& 	$\Oc(n^3)$	& 	\boldmath $\Oc(n)$ 	& 	\boldmath $\Oc(n)$	\\
		$\yv_2$			& 	$\Oc(n^3)$ 	& 	$\Oc(n^3)$	& 	\boldmath $\Oc(n)$ 	& 	\boldmath $\Oc(n)$	\\
		$\Omegamat^{-1} + \Phimat \Amat^{-1} \Phimat^T$		& 	$\Oc( \max(n p^2, n^2 p) )$ 	& 	\boldmath $\Oc(n^2 p)$	& 	$\Oc( \max(n p^2, n^2 p) )$ 	& 	\boldmath $\Oc(n^2 p)$	\\
		$\alphav$	& 	$\Oc(\max(np,n^3))$ 	& 	$\Oc(\max(np,n^3))$	& 	$\Oc(\max(np,n^3))$ 	& 	$\Oc(\max(np,n^3))$	\\
		$\betav$	& 	$\Oc(np)$ 	& 	$\Oc(np)$	& 	$\Oc(np)$ 	& 	$\Oc(np)$	\\
		\hline
		Summary 	& 	$\Oc( \max( n^3, p^3) )$	&	\boldmath $\Oc( \max(n^2 p, n^3) )$	& 	$\Oc( \max( n^3, p^3) )$	&	\boldmath $\Oc( \max(n^2 p, n^3) )$	\\
		\hline 	\hline
	\end{tabular}}}
\end{table}


\subsection*{Brief derivation of SG-MCMC for a simplex-constrained 
 vector} 

 	
 	Based on a comprehensive framework for constructing SG-MCMC algorithms in \citet{ma2015complete}, we have the mini-batch update rule for a global variable $\zv$ as 
	\beqs\label{eq:FramZmb} 
 	&{\zv_{t + 1}} \! = \! {\zv_t} + {\varepsilon _t} \! \left\{ - \big[ \Dmat \! \left( \zv_t \right) \! + \! \Qmat \! \left( \zv_t \right) \big]
 	\nabla \tilde H \! \left( \zv_t \right) \! + \! \Gamma \! \left( \zv_t \right) \right\}\notag \\
 	& ~+ \Nc \left( \bds 0,{\varepsilon _t} \left[ 2 \Dmat \left( \zv_t \right) - {\varepsilon _t} \hat\Bmat_t \right] \right),
 	\eeqs
 	where ${\varepsilon _t}$ are annealed step sizes, $\Dmat\left( \zv \right)$ is a positive semidefinite diffusion matrix, $\Qmat\left( \zv \right)$ is a skew-symmetric curl matrix, $\hat \Bmat_t$ is an estimate of the stochastic gradient noise variance satisfying a positive definite constraint as $2\Dmat\left( \zv_t \right) - {\varepsilon _t}\hat \Bmat_t \succ \bds0$, and $\Gamma_i(\zv)$, the $i^{th}$ element of the compensation vector $\Gamma(\zv)$, is defined as 
 	${\Gamma _i}\left( \zv \right) = \sum\nolimits_j \frac{\partial}{\partial z_j} \left[ \Dmat_{ij} \left( \zv \right) + \Qmat_{ij}\left( \zv \right) \right]$.
 	The mini-batch estimation of energy function is defined as $\tilde H\left( \zv \right) = - \ln p\left( \zv \right) - \rho \sum_{\xv \in \tilde X} {\ln p\left( {\xv\left| \zv \right.} \right)} $, with $\tilde X$ the mini-batch and $\rho $ the ratio of the dataset size to the mini-batch size.
 	
 	For simplicity, we adopt the same specifications that lead to the stochastic gradient Riemannian Langevin dynamics (SGRLD) inference algorithm for simplex-constrained model parameters \citep{patterson2013stochastic,ma2015complete}, 
	 namely $\Dmat \left( \zv \right) = \Gmat{\left( \zv \right)^{ - 1}}$, $\Qmat\left( \zv \right) = \bds 0$, and ${\hat \Bmat_t} = \bds0$, where $\Gmat\left( \zv \right)$ denotes the Fisher information matrix (FIM)~\citep{girolami2011riemann}. 
 	
 	With the multinomial likelihood $\nv_j \sim \Mult \left( n_{\cdotv j}, \phiv \right)$, and the reduced-mean parameterization $\varphiv \in\mathbb{R}_+^{V-1}$, where $j \in \{ 1, \cdots, N\}$ with $N$ the dataset size, it is straight to derive the FIM as 
 	\begin{align}\label{eq:RiemanPhiGBN}
	 	\Gmat \left( \varphiv \right) &  =  - \Ebb \left[ \frac{\partial^2}{\partial \varphiv^2}
	 	\ln \left( \prod\nolimits_j  \Mult \left[ \nv_j ; n_{\cdotv j} , \phiv \right] \right) \right] \notag\\
	 	& = M \left[ \diag \left( {1}/{\varphiv} \right) + \bds 1 \bds 1 ^T / ( {1 - \varphi_{\cdotv} } ) \right],
 	\end{align}
 	where $M := \Ebb \left[ \sum_{j=1}^N n_{\cdotv j} \right]$ is approximated along the updating as 
 	$ M = \left( 1 - \varepsilon_t \right) M + {\varepsilon_t} \rho E \left[ n_{ \cdotv \cdotv } \right] $.
 	Further with the Dirichlet prior $\phiv \sim \Dir \left( \eta \mathbf{1}_V\right)$, we have the conditional posterior of $\phiv$ as $(\phiv |-) \sim \Dir (\sum_j n_{1j}+\eta,\ldots, \sum_j n_{Vj}+\eta)$. 
 	Taking the gradient with respect to the reduced-mean parameterization $\varphiv \in\mathbb{R}_+^{V-1}$ on the mini-batch estimation of the negative energy function, we have
 	\begin{align}\label{eq:redgrad}
	 	\nabla _{\varphiv} \! \left[- \tilde H \! \left( {\varphiv} \right) \right]
	 	& \! = \! \frac{ \rho \bar \nv_{:\cdotv} \! + \! \eta \! - \! 1}{\varphiv} \! - \! \frac{\rho n_{V\cdotv} \! + \! \eta \! - \! 1}{1 - \varphi _{\cdotv}}.
 	\end{align}
 	Substituting both \eqref{eq:RiemanPhiGBN} and \eqref{eq:redgrad} into \eqref{eq:FramZmb}, we have \eqref{eq:upvarphi} as
 	$$
 	\varphiv_{t + 1} \!=\! \left[ 
 	\varphiv_t \!+\! \frac{\varepsilon _t}{M}
 	\!\left[ \left( \rho \bar \nv_{: \cdotv} \!+\! \eta \right) \!-\! \left(\rho n_{\cdotv \cdotv} \!+\! \eta V\right) \varphiv_t \right]
 	\!+\! \Nc \left( \bds 0,\frac{2\varepsilon _t}{M} \!\left[ \diag \left( \varphiv_t \right) \!-\! \varphiv_t \varphiv_t ^T \right] \right)
 	\right]_{\triangle} \!,
 	$$
 	where $[\cdotv]_{\triangle}$, denoting the constraint that $\varphiv \in \mathbb{R}_{+}^{V-1}$ and $\varphi_{\cdotv} \le 1$, ensures $\varphiv$ to be valid.

 	Next we prove that equation \eqref{eq:upvarphi} can be equivalently represented as \eqref{eq:upphi}, namely 
 	$$ 
 	\phiv_{t + 1} \!=\! \left[ 
 	\phiv_t \!+\! \frac{\varepsilon _t}{M}
 	\!\left[ \left( \rho \nv_{: \cdotv} \!+\! \eta \right) \!-\! \left(\rho n_{\cdotv \cdotv} \!+\! \eta V\right) \phiv_t \right]
 	\!+\! \Nc \left( \bds 0,\frac{2\varepsilon _t}{M} \diag \left( \phiv_t \right)  \right)
 	\right]_{\angle} \!,
 	$$
 	where $[\cdotv]_{\angle}$ represents the constraint that $\phiv \in \Rbb_{+}^{V}$ and $\bds 1 ^T \phiv = 1$. By substituting $\phiv = (\varphiv^T , 1 - \bds 1^T \varphiv)^T$ into \eqref{eq:upphi}, one can easily verify that the MVN simulation in \eqref{eq:upphi} is identical to that in \eqref{eq:upvarphi}. 
 	By further pointing out the fact that $[\cdotv]_{\triangle}$ is the same as $[\cdotv]_{\angle}$ under the reduced-mean parameterization, we conclude the proof.
 	


\end{document}